\newcommand{\nat}{\mathbb{N}}
\newcommand{\N}{\mathcal N}
\newcommand{\EXPSPACE}{{\sf EXPSPACE}}
\newcommand{\NEXPSPACE}{{\sf NEXPSPACE}}
\newcommand{\EXPTIME}{{\sf EXPTIME}}
\newcommand{\TOWER}{{\sf TOWER}}
\newcommand{\at}{abstract}
\newcommand{\lseq}{deep}
\newcommand{\sseq}{shallow}
\newcommand{\bigslant}[2]{{\raisebox{0em}{$#1\!$}\left/\raisebox{-.1em}{$\!#2$}\right.}}
\newcommand{\rooted}[1]{\mathring{#1}}
\newcommand{\sr}{{\mathring{s_0}}}
\newcommand{\abst}[1]{{[{#1}]}}
\begin{document}

\setcounter{page}{33}
\publyear{2021}
\papernumber{2081}
\volume{183}
\issue{1-2}

  \finalVersionForARXIV

\title{Coverability, Termination, and Finiteness in Recursive Petri Nets}

\address{I. Khmelnitsky, LSV, ENS Paris-Saclay}

\author{Alain Finkel\thanks{The work of this author was carried out
		in the framework of ReLaX, UMI2000 and also supported by ANR-17-CE40-0028
		project  BRAVAS.}
 \\
LSV, ENS Paris-Saclay, CNRS, IUF, \href{https://orcid.org/0000-0003-2482-6141}{ORCID}\\
Universit\'e Paris-Saclay, Gif-sur-Yvette, France\\
alain.finkel@ens-paris-saclay.fr
\and
     Serge Haddad\thanks{The work of this author
	was partly supported by ERC project EQualIS (FP7-308087)}, $\,$Igor Khmelnitsky
\\
LSV, ENS Paris-Saclay, CNRS, INRIA\\
Universit\'e Paris-Saclay, Gif-sur-Yvette, France\\
\{serge.haddad,  igor.khmelnitsky\}@ens-paris-saclay.fr
 }

\maketitle

\runninghead{A. Finkel et al.}{Coverability, Termination,  and Finiteness in RPN}

\begin{abstract}
In the early two-thousands, Recursive Petri nets have been introduced
in order to model distributed planning of multi-agent systems for which counters
and recursivity were necessary.
Although Recursive Petri nets strictly extend Petri nets and context-free grammars,
most of the usual problems (reachability, coverability, finiteness, boundedness and termination) were known to be solvable
by using non-primitive recursive  algorithms.
For almost all other extended Petri nets models
containing a stack,
the complexity of coverability and termination are unknown
or strictly larger than \EXPSPACE. In contrast, we establish here that for Recursive Petri nets,
the coverability, termination, boundedness and finiteness problems are \EXPSPACE-complete as for Petri nets.
From an expressiveness point of view,
we show that coverability languages of Recursive Petri
nets strictly include the union of coverability languages
of Petri nets and context-free languages.
Thus we get a more powerful model than Petri net for free.
\end{abstract}
\begin{keywords}
Recursive Petri nets,
	Expressiveness,
	Complexity,
	Coverability,
	Termination,
	Finiteness.
\end{keywords}

\section{Introduction}
\label{sec:introduction}

{\bf Verification problems for Petri nets.}
Petri net is a useful formalism for the analysis of concurrent programs
for several reasons. From a modeling
point of view (1) due to the locality of the firing rule, one easily models
concurrent activities and (2) the (a priori) unbounded marking of places allows
to represent a dynamic number of activities. From a verification point of view,
most of the usual properties are decidable. However, Petri nets suffer
two main limitations: they cannot model recursive features and the computational
cost of verification may be very high. More precisely,
all the known algorithms solving reachability
are nonprimitive recursive (see for instance~\cite{Mayr84})
and it has been proved recently that the reachability
problem is non elementary~\cite{abs-1809-07115} but primitive recursive when
the dimension is fixed~\cite{DBLP:conf/lics/LerouxS19}.
Fortunately some interesting properties like coverability,
termination, finiteness, and boundedness are \EXPSPACE-complete~\cite{Rac78}
and thus still manageable by a tool. So an important research direction
consists of extending Petri nets to support new modeling features
while still preserving decidability of properties checking and if possible with a "reasonable" complexity.

\noindent
{\bf Extended Petri nets.}
Such extensions may be partitioned between those whose states are still markings
and the other ones.
The simplest extension consists of adding inhibitor arcs
which yields undecidability of most of the verification problems.
However adding a single inhibitor arc preserves the decidability of
the reachability, coverability, and boundedness problems~\cite{Reinhardt08,BFLZ-lmcs12,Bonnet11}.
When adding reset arcs, the coverability problem becomes
Ackermann-complete ~\cite{PhS-mfcs10} and boundedness undecidable~\cite{dufourd98}

In $\nu$-Petri nets, the tokens are colored where colors are picked in
an infinite domain: their coverability problem is double-Ackermann
time
complete~\cite{lazic:hal-01265302}.
In Petri nets with a stack, the reachability problem may be reduced
to the coverability problem and both are at least not elementary ~\cite{abs-1809-07115,Lazic13}
while their decidability status is still unknown~\cite{Lazic13}.
In branching vector addition systems with states (BVASS)
a state is a set of threads with associated markings.
A thread either fires a transition as in Petri nets or forks, transferring
a part of its marking to the new thread. For BVASS,
the reachability problem is also
\TOWER-hard~\cite{LazicS14}
and its decidability is still an open problem while the coverability
and the boundedness problems are  2-\EXPTIME-complete~\cite{jcss12-DJLL}.
The analysis of subclasses of Petri nets
with a stack is an active field of research~\cite{AtigG11,MavlankulovOTSZ18,DassowT09,Zetzsche15}.
However, for none of the above extensions, the coverability and termination problems
belong to \EXPSPACE.

\noindent
{\bf Recursive Petri nets (RPN).} This formalism has been introduced
to model distributed planning of multi-agent systems for which counters and recursivity were necessary for specifying resources
and delegation of subtasks~\cite{EFH-icmas96}.
Roughly speaking, a state of an RPN consists of a tree of \emph{threads}
where the local state of each thread is a marking.
Any thread fires an \emph{elementary}, \emph{abstract} or \emph{cut} transition. When the transition is elementary,
the firing updates its marking as in Petri nets; when it is abstract, this only consumes
the tokens specified by the input arcs of the transition
and creates a child thread initialized with the \emph{initial marking} of
the transition. When a cut transition is fired,
the thread and its subtree are pruned, producing in its parent
the tokens specified by the output arcs of the abstract transition that created it.
In RPN, reachability, boundedness and termination
are decidable~\cite{HP-icatpn99,haddad:hal-01573071}
by reducing these properties to reachability problems of Petri nets. So the corresponding algorithms are nonelementary.
LTL model checking is undecidable for RPN but becomes decidable for the subclass of sequential RPN~\cite{HaddadP01}.
In~\cite{HaddadP07}, several modeling features are proposed while preserving the decidability of
the verification problems.


\noindent
{\bf Our contribution.}
We first study the expressive power of RPN from the point of view of coverability
languages (reachability languages were studied in \cite{HP-icatpn99}).
We first introduce a quasi-order on states of RPN compatible with the firing rule and establish that it is not a well quasi-order. Moreover, we show that there cannot exist a transition-preserving compatible well quasi-order, preventing us to use the framework of Well Structured Transition Systems
to prove that coverability is decidable.
We show that the RPN languages
are \emph{quite close} to recursively enumerable languages since the closure under homomorphism and intersection with a regular language is the family of recursively enumerable languages.
More precisely, we show that  RPN coverability
(as reachability) languages strictly include the union of context-free languages
and Petri net coverability languages.
Moreover, we prove that  RPN coverability
languages and reachability languages of Petri nets
are incomparable. We prove that RPN coverability languages are a strict subclass of RPN reachability languages.
In addition, we establish that
the family of RPN languages is closed under union, homomorphism
but neither under intersection with a regular language nor under complementation.

From an algorithmic point of view, we show that, as for Petri nets, coverability, termination,
boundedness, and finiteness are
\EXPSPACE-complete. Thus the increase of expressive power does
not entail a corresponding increase in complexity.
In order to solve the coverability problem, we show that if there exists
a covering sequence there exists a `short' one
(i.e. with a length at most doubly exponential w.r.t. the size of the input).
In order to solve the termination problem, we
	consider two cases for an infinite sequence
	depending (informally speaking) whether the depth of the trees
	corresponding to states are bounded or not along the sequence. For the unbounded case, we introduce the abstract graph that expresses the ability to create threads from some initial state.
The decidability of the finiteness and boundedness problems
are also mainly based on this abstract graph.

Let us mention that this paper is an extended version of \cite{FHK-atpn19} that contains new results about expressiveness like the characterization of the RPN coverability languages, decidability and complexity of finiteness and boundedness and we greatly simplified the proofs of coverability, termination, and finiteness. We also provided a more elegant definition of the (now inductive) syntax and the semantics of RPN.

\noindent
{\bf Outline.} In section~\ref{sec:recursive}, we introduce RPNs and
state ordering and establish basic results related to these notions.
In section~\ref{sec:reductions}, we introduce decision problems and some reductions between them.
In section~\ref{sec:expressiveness}, we study the expressiveness
of coverability languages. Then in
sections~\ref{sec:coverability},~\ref{sec:termination}, and~\ref{sec:finiteness}
we show that the coverability, termination, boundedness, and finiteness problems are \EXPSPACE-complete.
In section~\ref{sec:conclusion}, we conclude and give some perspectives to this work.

\global\long\def\P{\mathbb{P}}
\global\long\def\R{\mathbb{R}}

\section{Recursive Petri nets}
\label{sec:recursive}

\subsection{Presentation}

The state of an RPN has a structure akin to a `directed rooted tree' of Petri nets. Each vertex of the tree,
hereafter \emph{thread}, is an instance of the RPN and possessing some marking on it.
Each of these threads can fire \emph{three} types of transitions.
An \emph{elementary} transition updates its own marking according
to the usual Petri net firing rule.
An \emph{abstract} transition consumes tokens from the thread firing it
and creates a new child (thread) for it. The marking of the new thread is determined
according to the fired abstract transition.
A \emph{cut} transition can be fired by a thread
if its marking is greater or equal
than some marking.
Firing a cut transition, the thread erases itself and all of its descendants.
Moreover, it creates tokens in its parent,
which are specified by the abstract transition that created it.

\begin{definition}[Recursive Petri Net]
	\noindent A \emph{Recursive Petri Net} is a 6-tuple
	$\N=\langle P,T,W^{+},W^{-},\Omega\rangle$ where:
	\begin{itemize}
		\item $P$ is a finite set of places;
		\item $T=T_{el}\uplus T_{ab} \uplus T_{\tau}$ is a finite set of transitions with $P\cap T=\emptyset$,
		and $T_{el}$ (respectively $T_{ab}, T_{\tau}$) is the subset of elementary (respectively abstract, cut) transitions;
		\item $W^{-}$ is the $\nat^{P\times T}$ backward incidence matrix;
		\item $W^{+}$ are the $\nat^{P\times (T_{el}\uplus T_{ab})}$ forward incidence matrix;
		\item $\Omega : T_{ab} \rightarrow \nat^P$ is a function that labels every abstract transition with a initial marking;
	\end{itemize}
\end{definition}

\begin{figure}[!b]
	\begin{center}
		\begin{tikzpicture}[]

\tikzstyle{place}=[circle,thick,draw=black!75,minimum size=5mm]
\tikzstyle{thread}=[circle,thick,draw=black!75,minimum size=3mm]
\tikzstyle{elementary_transition}=[rectangle,minimum width=15, minimum height=1,draw=black!75,]
\tikzstyle{abstract_transition}=[rectangle,minimum width=15, minimum height=0.5,double,draw=black!100,]
\tikzstyle{abstract_transition_description}=[rectangle,minimum width=5, minimum height=5,draw=black!75,]
\usetikzlibrary{arrows,shapes,snakes,automata,backgrounds,petri}
\tikzstyle{cut_transition}=[rectangle,minimum width=15, minimum height=0.5,fill=black,draw=black!100,]

\node [place, tokens=0,label=150:$p_{beg}$] (pbeg)  {};

\node [place, tokens=0,label=180:$p_{a_1}$] (pa1)  [below left = 11mm and 8mm of pbeg]  {};
\node [place,tokens=0,label=180:$p_{a_2}$] (pa2)  [below = 10mm  of pa1] {};

\node [place,tokens=1,label=200:$p_{end}$] (pe) [below  = 39mm of pbeg]  {};

\node [place, tokens=0,label=0:$p_{b_1}$] (pb1)  [below right = 11mm and 8mm of pbeg]  {};
\node [place,label=0:$p_{b_2}$] (pb2)  [below = 10mm  of pb1] {};

\node [place, tokens=0,label=$p_{ini}$] (pini) [right = 35mm of pbeg] {};
\node [place, tokens=0,label=270:$p_{fin}$] (pfin) [below = 15mm of pini] {};

%
\node  [cut_transition,label=0:$t_{\tau_1}$] (ta2) [above = 5mm of pbeg] {}
edge[pre] (pbeg);
\node  [cut_transition,label=0:$t_{\tau_2}$] (ta2) [below = 5mm of pe] {}
edge[pre] (pe);

\node [elementary_transition,label=180:$t_{a_1}$] (ta1) [below left = 3mm and 7.5mm  of pbeg] {}
	edge[pre,bend left] (pbeg)
	edge[post] (pa1);

\node  [abstract_transition,label=180:$t_{a_2}$] (ta2) [below = 3.5mm of pa1] {}
	edge[pre] (pa1)
	edge[post] (pa2);
\node [abstract_transition_description] [left = 6mm of ta2]  {$p_{beg}$};

\node [elementary_transition,label=180:$t_{a_3}$] (ta3) [below = 3.5mm of pa2] {}
	edge[pre] (pa2)
	edge[post,bend right] (pe);
\node [elementary_transition,label=0:$t_{b_1}$] (tb1) [below right = 3mm and 7.5mm  of pbeg] {}
edge[pre,bend right] (pbeg)
edge[post] (pb1);

\node  [abstract_transition,label=0:$t_{b_2}$] (tb2) [below = 3.5mm of pb1] {}
edge[pre] (pb1)
edge[post] (pb2);
\node [abstract_transition_description] [right = 6mm of tb2]  {$p_{beg}$};

\node [elementary_transition,label=0:$t_{b_3}$] (tb3) [below = 3.5mm of pb2] {}
edge[pre] (pb2)
edge[post,bend left] (pe);

\node [elementary_transition,label=180:$t_{sb}$] (tsb) [below right = 26mm and 1mm  of pbeg] {}
edge[pre] (pbeg)
edge[post] (pe);

\node [elementary_transition,label=0:$t_{sa}$] (tsb) [below left = 13mm and 1mm  of pbeg] {}
edge[pre] (pbeg)
edge[post] (pe);



\node  [abstract_transition,label=180:$t_{beg}$] (ta2) [below  = 6mm of pini] {}
edge[pre] (pini)
edge[post] (pfin);
\node [abstract_transition_description] [right = 2mm of ta2]  {$p_{beg}$};

{\small
\node[thread, label=0:$\bf0$ (the root $ r_s $) ] (v1) [below right   = 0mm and 30mm of pini]  {};
\node[thread, label=0:   $\bf0$] (v2) [below  = 8mm of v1] {};
\node[thread, label=0:   $p_{end}$] (v3) [below  = 8mm of v2] {};

\path[->] (v1) edge node[label=0:$p_{fin}$]  {} (v2);
\path[->] (v2) edge node[label=0:$p_{b_2}$]  {} (v3);
\path[dashed] (v3) edge node[label=0:$ $]  {} (56mm,-57mm);
\path[dashed] (v3) edge node[label=0:$ $]  {} (56mm,12mm);
\node [above right  = 2mm and -10mm of v1] {$ s $ - A state of $ \N $};
\node [right  = 5mm of v1] {};
}


\draw[color=black,dashed](55mm,-57mm) rectangle (-30mm,12mm);
\node at (-23mm,7mm) {\color{black}RPN $\N$};

\end{tikzpicture}
	\end{center}\vspace*{-3mm}
	\caption{An example of a marked RPN.}
	\label{fig:rpn_example}
\end{figure}
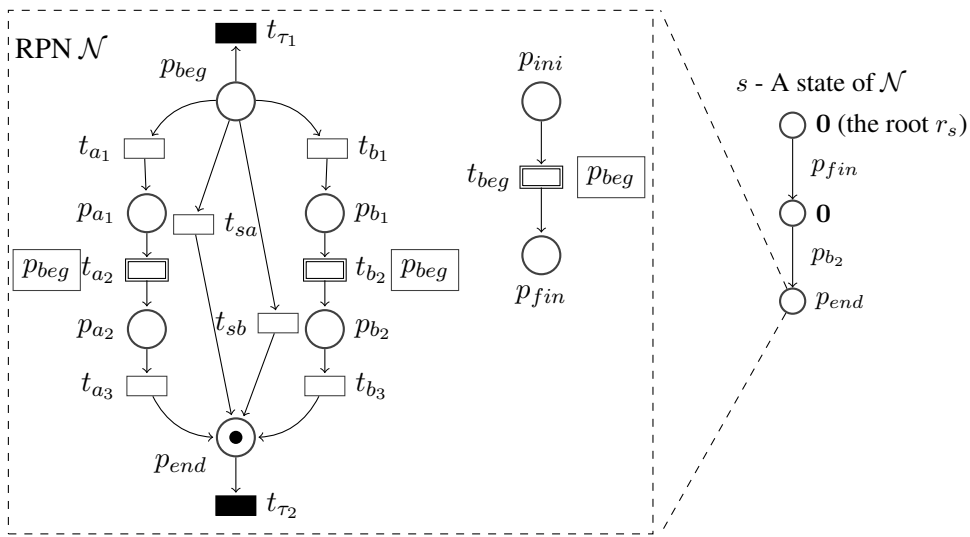

\noindent Figure~\ref{fig:rpn_example} graphically describes an example of an RPN  with:
\begin{align*}
	&P=\{p_{ini},p_{fin},p_{beg},p_{end}\}\cup\{p_{b_i},p_{a_i}:i\leq 2 \}; \\
	&T_{el}=\{t_{b_1},t_{b_3},t_{a_1},t_{a_3},t_{sa},t_{sb}\}\,;\,T_{ab}=\{t_{beg},t_{b_2},t_{a_2}\};\\
	&T_{\tau}=\{t_{\tau_1},t_{\tau_2}\}.
\end{align*}
and for instance $ W^-(p_{ini},t_{beg})=1 $ and $ \Omega(t_{b_2})= p_{beg} $ (where  $ p_{beg} $ denotes the marking with one token in place $ p_{beg} $ and zero elsewhere).

For brevity reasons, we denote by $W^+(t)$ a vector in $\nat^P$,
where for all $p\in P$, $W^+(t)(p)=W^+(p,t)$, and we do the same for $W^-(t)$.

A \emph{concrete state} $s$ of an RPN is a labeled
tree representing relations between threads and their associated markings.
Every vertex of $s$ is a thread and edges are labeled by abstract transitions.
We introduce a countable set $\mathcal V$ of vertices in order to pick new vertices when necessary.

\begin{definition}[State of an RPN]
	A \emph{concrete state (in short, a state)} $s$ of an RPN is a tree over the finite set of vertices $V_s\subseteq \mathcal V$, inductively defined as follows:
	\begin{itemize}
		\item either $V_s=\emptyset$ and thus $s=\emptyset$ is the empty tree;
		\item or $V_s=\{r_s\}\uplus V_1\uplus \ldots \uplus V_k$ with $0\leq k$ and
		$s=(r_s,m_0,\{(m_i,s_i)\}_{1\leq i\leq k})$ is defined as follows:
		\begin{itemize}
			\item $r_s$ is the root of $s$ labelled by a marking $m_{0} \in \nat^P$;
			\item For all $i\leq k$, $s_i$ is a state over $V_i\neq \emptyset$
			
			and there is an edge $r_s\xrightarrow{m_i}_s r_{s_i}$ with $m_i\in \{W^+(t)\}_{t\in T_{ab}}$.
		\end{itemize}
	\end{itemize}
\end{definition}
For all $u,v\in V_s$, one denotes $M_s(u)$ the marking labelling $u$
and when $u\xrightarrow{m}_s v$, one writes $\Lambda(u,v):=m$.
State $s_v$ is the (maximal) subtree of $s$ rooted in $v$.

While the set of vertices $V_s$ will be important for analyzing the behavior of a firing sequence in an RPN, one can omit it and get a more abstract representation of the state. Note that contrary to the previous definition where $ \{(m_i,s_i)\}_{1\leq i\leq k} $ was a set, in the following definition we need a multiset $Child_s$.

\begin{definition}[Abstract state of an RPN]
	An \emph{abstract state} $s$ of an RPN is inductively defined as follows:
	\begin{itemize}
		\item either $s=\emptyset$ is the empty set ;
		\item or $s=(m_s,Child_s)$ where $m_s \in \nat^P$
		and $Child_s$ is a finite multiset of pairs $(m',s')$
		
		where $m'\in \{W^+(t)\}_{t\in T_{ab}}$ and $s'$ is an abstract state different from $\emptyset$.
	\end{itemize}	
\end{definition}
Given a concrete state $s$, we denote by $\abst{s}$ its abstract state.
Except if explicitly stated, a state is a concrete state.

In the other direction, given an abstract state $s$, one recovers its set of concrete states by
picking an arbitrary set of vertices $V_s\subseteq \mathcal V$ of appropriate cardinality and, inductively, arbitrarily splitting $V_s$ between the root and the pairs $(m,s')$.

For example, on the right side of Figure~\ref{fig:rpn_example},
there is a  (concrete) state of the RPN $ \N$. This state consists of three threads with markings
$ \bf 0,0,$ and $p_{end}$ (where $\bf 0 $ is the \emph{null marking})
and two edges with the labels $ W^+(t_{beg})$ and $W^+(t_{b_2}) $.

Let $ s $ be a state of some RPN.
Every thread $u$ different from the root has an unique \emph{parent}, denoted by $prd(u)$.
The \emph{descendants} of a thread $u$ consists of threads
in the subtree rooted in $u$ including $u$ itself.
We denote this set by $Des_{s}(u)$.
For $m\in \nat^P$,
denote by $s[r,m]:=(r,m,\emptyset)$, the state consisting of a single vertex $r$ whose
marking is $m$.
As usual, two markings $m,m'\in \nat^P$, over a set of places $P$,
are partially ordered as follows:
$m\leq m'$ if for all places $p\in P$, $m(p) \leq m'(p)$.

\begin{definition}[Operational semantics]
	Let $s=(r,m_0,\{(m_i,s_i)\}_{1\leq i\leq k})$ be a state.
	Then the firing rule $s\xrightarrow{(v,t)}s'$ where $v\in V_s $ and $t\in T$ is inductively defined as follows:
	\begin{itemize}
		\itemsep=0.95pt
		\item Let $t\in T_{el}$ such that $W^-(t)\leq m_0$, then one has $s\xrightarrow{r,t} (r,m_0-W^-(t)+W^+(t), \{(m_i,s_i)\}_{i\leq k})$
		\item Let $t\in T_{ab}$ such that $W^-(t)\leq m_0$, then one has $s\xrightarrow{r,t} (r,m_0-W^-(t), \{(m_i,s_i)\}_{i\leq k+1}))$
		
		where $m_{k+1}=W^+(t)$, $s_{k+1}=s[v,\Omega(t)]$ with $v\in \mathcal V \setminus V_s$
		\item Let $t\in T_{\tau}$ such that $W^-(t)\leq m_0$, then one has $s\xrightarrow{r,t} \emptyset$
		\item Let $i\leq k$ such that $s_i \xrightarrow{v,t} s'_i$  \\
		if $s'_i=\emptyset$ then $s \xrightarrow{v,t} (m_0+m_i,\{(m_j,s_j)\}_{1\leq j\neq i\leq k})$
		
		else     $s \xrightarrow{v,t} (m_0,\{m_j,s_j\}_{1\leq j\neq i\leq k}\cup \{m_i,s'_i\})$
	\end{itemize}
	\end{definition}

Figure \ref{fig:firing_example} illustrates a sequence of  transition
firings in the RPN described by Figure~\ref{fig:rpn_example}. The first transition $t_{beg}\in T_{ab}$ is fired  by the root.
Its firing results in a state for which the root has a new child (denoted by $ v $)
and a new outgoing edge with  label $ p_{fin} $.
The marking of the root is decreased to ${\bf 0}$
and $v$ is initially marked by $ \Omega(t_{beg})=p_{beg} $.
The second firing is due to an elementary transition
$t_{b_1}\in T_{el}$ which is fired by $v$.
Its firing results in a state for which the marking of $v$ is changed to
$M_s'(v)=M_s(v)+W^+(t_{b_1})-W^-(t_{b_1})=p_{b_1}$.
The fifth transition
to be fired  is the cut transition $t_{\tau_2}$, fired by the thread
with the marking $p_{end}$ (denoted by $w$).
Its firing results in a state where the thread $ w $ is erased,
and the marking of its parent is increased by $ W^+(t_{b_2})=p_{b_2}$.

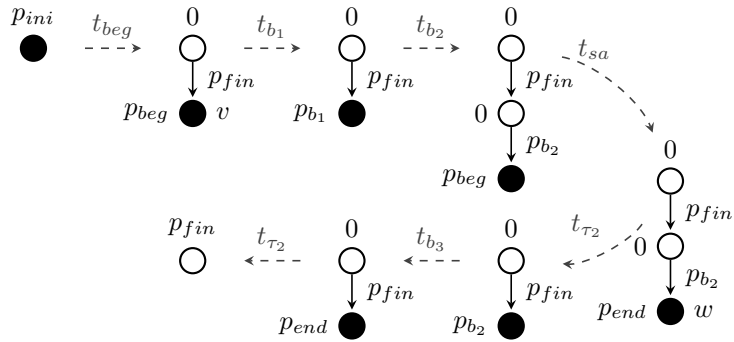
\begin{figure}[h]
\vspace{1mm}
		\begin{center}
		\begin{tikzpicture}[
			scale=0.7,
            > = stealth, 
            shorten > = 1pt, 
            auto,
            semithick 
        ]
\usetikzlibrary{arrows,shapes,snakes,automata,backgrounds,petri,positioning}
        \tikzstyle{state}=[
        	circle,
            draw = black,
            thick,
            fill = white,
            minimum size = 1mm
        ]
        \small
        \node[state, fill=black, label=   $p_{ini}$] (v11) at (0,0)  {};

        \node[state, label=   $0$] (v21) at (3,0)  {};
    	\node[state, fill=black, label=180:  $p_{beg}$,label=0: $v$] (v22) [below  = 5mm of v21] {};

        \path[->] (v21) edge node[xshift = -5,label=0:$  p_{fin}$]  {} (v22);

        \node[state, label=  $0 $] (v31) at (6,0)  {};
        \node[state,fill=black, label=180:   $p_{b_1}$] (v32) [below  = 5mm of v31] {};

        \path[->] (v31) edge node[xshift = -5,label=0:${  p_{fin}}$] {} (v32);

       \node[state, label=  $0 $] (v41) at (9,0)  {};
       \node[state, label=180:   $0$] (v42) [below  = 5mm of v41] {};
        \node[state,fill=black, label=180:   $p_{beg}$] (v43) [below  = 5mm of v42] {};

       \path[->] (v41) edge node[xshift = -5,label=0:${  p_{fin}}$] {} (v42);
       \path[->] (v42) edge node[xshift=-5,label=0:$p_{b_2}$] {} (v43);


        \node[state, label=  $0 $] (v51) at (12,-2.5)  {};
        \node[state, label=180:   $0$] (v52) [below  = 5mm of v51] {};
        \node[state,fill=black, label=180:   $p_{end}$,label=0:$w$] (v53) [below  = 5mm of v52] {};

        \path[->] (v51) edge node[xshift=-5,label=0:${p_{fin}}$] {} (v52);
        \path[->] (v52) edge node[xshift=-5,label=0:$p_{b_2}$] {} (v53);

        \node[state, label=  $0 $] (v61) at (9,-4)  {};
        \node[state,fill=black, label=180:   $p_{b_2}$] (v62) [below  = 5mm of v61] {};

        \path[->] (v61) edge node[xshift = -5,label=0: ${p_{fin}}$] {} (v62);

         \node[state, label=  $0 $] (v71) at (6,-4)  {};
         \node[state,fill=black, label=180:   $p_{end}$] (v72) [below  = 5mm of v71] {};

         \path[->] (v71) edge node[xshift=-5,label=0:${p_{fin}}$] {} (v72);

\node[state, label=  $p_{fin} $] (v81) at (3,-4)  {};


       \path[->,darkgray ,dashed,shorten <=0.5cm,shorten >=0.5cm] (v11) edge node {  $t_{beg}$} (v21);
       \path[->,darkgray,dashed,shorten <=0.5cm,shorten >=0.5cm] (v21) edge node {  $t_{b_1}$} (v31);
       \path[->,darkgray,dashed,shorten <=0.5cm,shorten >=0.5cm] (v31) edge  node {  $t_{b_2}$} (v41);
       \path[->,darkgray,dashed,shorten <=0.5cm,shorten >=0.5cm] (v41) edge [bend left] node[label=100:$t_{sa}$] {}(v51);
       \path[->,darkgray,dashed,shorten <=0.5cm,shorten >=0.5cm] (v51) edge [bend left] node[label=100:$t_{\tau_2}$] {}(v61);
       \path[->,darkgray,dashed,shorten <=0.5cm,shorten >=0.5cm] (v61) edge node[label=90:$t_{b_3}$] {}(v71);
       \path[->,darkgray,dashed,shorten <=0.5cm,shorten >=0.5cm] (v71) edge node[label=90:$t_{\tau_2}$] {}(v81);

\end{tikzpicture}
	\end{center}\vspace*{-4mm}
	\caption{Firing sequence for the RPN in Figure \ref{fig:rpn_example}}
	\label{fig:firing_example}\vspace*{-2mm}
\end{figure}

\smallskip
A \emph{firing sequence} is a sequence of transition firings, written
in a detailed way:
$s_0\xrightarrow{(v_1,t_1)}s_1\xrightarrow{(v_2,t_2)}\cdots\xrightarrow{(v_n,t_n)}s_n$,
or when the context allows it, in a more concise way like $s_0\xrightarrow{\sigma} s_n$ for
$\sigma=(v_1,t_1)(v_2,t_2)\dots(v_n,t_n).\,$Let $\sigma \in T^*$ with $\sigma\!=\!t_1\ldots t_n$ and $v$ be a vertex, $(v,\sigma)$
is an \mbox{abbreviation}\linebreak  for $(v,t_1)\ldots(v,t_n)$.
When we deal with several nets, we indicate by a subscript in which net, say $\N$,
the firing sequence takes place: $s_0\xrightarrow{\sigma}_{\N} s_n$.
Infinite firing sequences are similarly defined.
In a firing sequence, a thread $v$ that has been deleted is \emph{never reused} (which is possible
since $\mathcal V$ is countable).
A thread is \emph{final} (respectively \emph{initial}) w.r.t.
$\sigma$ if it occurs in the final (respectively initial) state of $\sigma$.
We say that $v\in Des_{\sigma}(u) $ if there exists $ i\leq n $ such that $ v\in Des_{s_i}(u) $.
We call $\sigma'$ a \emph{subsequence} of $\sigma$,
denoted by $\sigma'\sqsubseteq \sigma$, if there exists $k$ indexes $ i_1, i_2 \dots i_k$ such that $ 1 \leq i_1<i_2<\dots i_k\leq n $ and
$\sigma'=(v_{i_1},t_{i_1})(v_{i_2},t_{i_2})\dots (v_{i_k},t_{i_k})$.

\begin{remark}
	In the sequel, when we write ``RPN $\N$'', we mean $\N=\left<P,T,W^+,W^-,\Omega\right>$,
	unless we explicitly write differently. An RPN $\N$ equipped
	with an initial state $s$ is a \emph{marked~RPN} and denoted $(\N,s)$.
	Similarly a \emph{marked Petri net} $(\N,m)$ is a Petri net $\N$
	equipped with an initial marking $m$.
\end{remark}
For a marked RPN $(\N,s_0)$, let $Reach(\N,s_0)=\{\abst{s}\mid \exists\sigma\in T^* \text{ s.t. } s_0\xrightarrow{\sigma} s\}$ be its \emph{reachability set}, i.e. the set of all the reachable \emph{abstract} states.

\subsection{An order for Recursive Petri Nets}

We now define a quasi-order $\preceq $ on the states of an RPN.
Given two states $ s,s' $ of an RPN $\N$,
we  say that $ s $ is \emph{smaller or equal} than $ s' $, denoted by $s\preceq  s'$,
if there exists a subtree in $ s' $, which is isomorphic to $ s $,
where markings are greater or equal
on all vertices and edges.

\begin{definition}
	Let $s \neq \emptyset$ and $s'$ be states of an RPN $\N$.
	Then $ s\preceq  s'$ if there exists an injective mapping $f$ from $V_s$
	to $V_{s'}$ such that for all $v\in V_s$:\medskip

	\begin{enumerate}[nosep]
	\item $M_{s}(v) \leq M_{s'}(f(v))$, and,
		\item for all $v\xrightarrow{m}_sw$, there exists an edge $f(v) \xrightarrow{m'}_{s'}f(w)$ with $m\leq m'$.		
	\end{enumerate}\medskip\noindent
	In addition, $\emptyset \preceq s$ for all states $s$.
	
	\noindent When $f(r_s)$ is required to be $r_{s'}$, one denotes this relation $s\preceq_r  s'$
	with $\emptyset \preceq_r s$ if and only if $s=\emptyset$.	
\end{definition}

Figure~\ref{fig:order example} illustrates these quasi-orders.
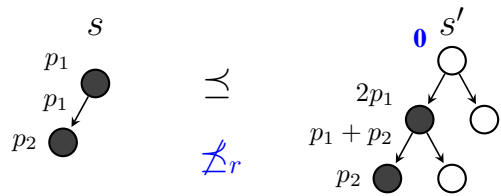
\begin{figure}[h]
	
	\begin{center}

\begin{tikzpicture}[
> = stealth, 
shorten > = 1pt, 
auto,
semithick 
]
\usetikzlibrary{arrows,shapes,snakes,automata,backgrounds,petri,positioning}
\tikzstyle{state}=[
circle,
draw = black,
thick,
fill = white,
minimum size = 1mm
]



\node[state, label=170: \small  {\bf\color{blue} 0}] (v51) at (0,0)  {};
\node[state, label=180: \small \color{blue}] (v52) [below right =5mm and 1.5mm of v51] {};
\node[state,fill= darkgray , label=170: \small  $2 p_1$] (v53) [below left = 5mm and 1.5mm of v51] {};
\node[state,fill= darkgray , label=180: \small $ p_2 $] (v54) [below left = 5mm and 1.5mm of v53] {};
\node[state, label=0: \small\color{blue} ] (v55) [below right = 5mm and 1.5mm of v53] {};

\path[->] (v51) edge node[] {} (v52);
\path[->] (v51) edge node {} (v53);
\path[->] (v53) edge node [label=160:\small$ p_1+p_2 $]  {}  (v54);
\path[->] (v53) edge node []  {}  (v55);


\node[state, fill= darkgray , label=170: \small $ p_1 $] (v63)  [above left = 0.2 and  4 of v53] {};
\node[state,fill= darkgray  ,label=180: \small  $ p_2 $ ] (v64) [below left = 5mm and 1.5mm of v63] {};

\path[->] (v63) edge node [label=170:\small$ p_1 $]  {}  (v64);


\node  [above left = 0 and 2.2 of v53]  {\Large$\preceq$};
\node  [below left = 0 and 2.05 of v53]  {\color{blue}\Large$\not\preceq_r$};


\node  [above = 0 of v51]  {\Large$s'$};
\node  [above = 0.3 of v63]  {\Large$s$};

\end{tikzpicture}
	\end{center}\vspace*{-4mm}
	\caption{We have that $s\preceq  s'$, but $s\not\preceq_r  s'$ because the marking of the root of $s'$ is too small.}
	\label{fig:order example}
\end{figure}

While this is irrelevant for the results presented here, let us mention
that checking whether $s\preceq s'$ can be done in polynomial time
by adapting a standard algorithm for the subtree problem
(see for instance~\cite{Stadel78}).

\begin{restatable}{lemma}{partialorder}
	The relations $\preceq $ and $\preceq_r$ are quasi-orders.
\end{restatable}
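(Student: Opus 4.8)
The plan is to verify the three defining properties of a quasi-order — reflexivity, transitivity, and (implicitly) that $\preceq$ is a binary relation that is well-defined — for both $\preceq$ and $\preceq_r$. Since a quasi-order (preorder) requires only reflexivity and transitivity, antisymmetry is not needed, which simplifies matters considerably. I would handle $\preceq$ first and then note that the argument for $\preceq_r$ is identical with the extra bookkeeping that the witnessing injection sends roots to roots.

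For reflexivity, given a nonempty state $s$, I would take $f = \mathrm{id}_{V_s}$. Then $M_s(v) \le M_s(v)$ holds trivially for every $v \in V_s$, and for every edge $v \xrightarrow{m}_s w$ the same edge $f(v) \xrightarrow{m}_s f(w)$ witnesses condition~2 with $m \le m$. The empty state satisfies $\emptyset \preceq \emptyset$ by the clause $\emptyset \preceq s$ for all $s$. For $\preceq_r$ the identity also fixes the root, and $\emptyset \preceq_r \emptyset$ holds by definition; so reflexivity is immediate in both cases.

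For transitivity, suppose $s \preceq s'$ via an injection $f : V_s \to V_{s'}$ and $s' \preceq s''$ via an injection $g : V_{s'} \to V_{s''}$, where I may assume $s \neq \emptyset$ and $s' \neq \emptyset$ (if $s = \emptyset$ then $s \preceq s''$ holds by the $\emptyset$-clause; and if $s \neq \emptyset$ but $s' = \emptyset$ then no such $f$ exists, so this case is vacuous). I would then take $g \circ f : V_s \to V_{s''}$, which is injective as a composition of injections. For condition~1, $M_s(v) \le M_{s'}(f(v)) \le M_{s''}(g(f(v)))$ by transitivity of $\le$ on $\nat^P$ (which is itself just the componentwise order on a product of copies of $(\nat, \le)$). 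For condition~2, given an edge $v \xrightarrow{m}_s w$, there is an edge $f(v) \xrightarrow{m'}_{s'} f(w)$ with $m \le m'$, and then an edge $g(f(v)) \xrightarrow{m''}_{s''} g(f(w))$ with $m' \le m''$, hence $m \le m''$. For $\preceq_r$, we additionally have $f(r_s) = r_{s'}$ and $g(r_{s'}) = r_{s''}$, so $g(f(r_s)) = r_{s''}$, giving $s \preceq_r s''$.

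The argument is entirely routine; there is no genuine obstacle. The only point requiring a moment's care is the correct handling of the empty state in the transitivity argument — ensuring that the degenerate cases ($s = \emptyset$, or $s \neq \emptyset$ with $s' = \emptyset$) are dispatched by the special clauses of the definition rather than by the injection machinery — and, for $\preceq_r$, tracking the root condition through the composition. Everything else follows from reflexivity and transitivity of the componentwise order $\le$ on $\nat^P$ together with the fact that the composition of injective maps is injective.
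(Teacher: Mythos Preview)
Your proof is correct and follows essentially the same approach as the paper: the identity map witnesses reflexivity, and the composition of the two witnessing injections witnesses transitivity, with the marking and edge inequalities chained via transitivity of $\le$ on $\nat^P$. You are in fact slightly more careful than the paper, which does not explicitly dispatch the degenerate cases involving $\emptyset$ or track the root condition for $\preceq_r$ through the composition; the paper simply notes that the proof for $\preceq_r$ is similar.
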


\begin{proof}
	Let, $ s,s',s''$ be states of an RPN $ \N $ with $s=(r,m_0,\{(m_i,s_i)\}_{1\leq i\leq k})$,
	$s'=(r',m'_0,\{(m'_i,s'_i)\}_{1\leq i\leq k'})$ and  $s''=(r'',m''_0,\{(m''_i,s''_i)\}_{1\leq i\leq k''})$.
	Let us show that the relation $\preceq $ is a quasi-order.
	\begin{enumerate}
		\itemsep=0.9pt
		\item Reflexivity: the identity function $ Id$ on  $V_s$ insures that $s\preceq s$.
		\item Transitivity: Given $ s\preceq s'\preceq s'' $,
		there exist two injective functions  $f:V_s \rightarrow V_{s'}$ and  $f':V_{s'} \rightarrow V_{s''}$.
		Let $ g: V_{s} \rightarrow V_{s''}$ be defined by $ g=f'\circ f $. Then $g$ is injective.
		For any edge $v\xrightarrow{m}_sw$, there exists an edge $f(v) \xrightarrow{m'}_{s'}f(w)$ with $m\leq m'$
		and there exists an edge $f'(f(v)) \xrightarrow{m''}_{s''}f'(f(w))$ with $m\leq m'\leq m''$.
		For all $v\in V_s$, one has
		$M_s(v)  \leq M_{s'}(f(v)) \leq M_{s''}(f'(f(v)))=M_{s''}(g(v)).$
		Therefore  $s \preceq s''$.	
	\end{enumerate}
	The proof for the relation $\preceq_r$ is similar.
\end{proof}
Consider the equivalence relation $\simeq:=\preceq \cap \preceq^{-1}$.
Given a set of states $A$, one denotes by $\bigslant{A}{\simeq}$ the quotient set
by the equivalence relation $\simeq$. Observe that $s \simeq s'$ if and only if their
abstract representations are equal and that $\simeq=\preceq_r \cap \preceq^{-1}_r$.


A quasi-order $\leq$ on the states of an RPN
is \emph{strongly compatible} (as in \cite{Alain01}) if for all states $s,s'$ such that $s\leq s'$
and for all transition firings  $s\xrightarrow{(v,t)}s_1$, there exist a state  $s'_1$ and a transition firing
$s'\xrightarrow{(v',t')}s'_1$ with $s_1 \leq s'_1$.
\begin{restatable}{lemma}{strongcompatibility}
	\label{lem:strongcompatibility}
	The quasi-orders $\preceq $ and  $\preceq_r$ are strongly compatible.
\end{restatable}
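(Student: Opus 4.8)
The plan is to prove strong compatibility by induction on the structure of the state $s$ (equivalently, on the depth of the tree), mirroring the inductive definition of the firing rule. Suppose $s \preceq s'$ via an injective mapping $f$, and suppose $s \xrightarrow{(v,t)} s_1$. I will case-split on which clause of the operational semantics produces this firing, and in each case exhibit a matching firing $s' \xrightarrow{(f(v), t)} s'_1$ together with an injection witnessing $s_1 \preceq s'_1$. The key uniform observation is that $f$ restricted to any subtree $s_v$ of $s$ is a witness for $s_v \preceq s'_{f(v)}$, which is exactly what lets the induction hypothesis apply to the recursive clause.

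First I would handle the base cases where $v = r_s$ is the root of $s$. If $t \in T_{el}$ with $W^-(t) \le m_0 = M_s(r_s)$, then since $m_0 \le M_{s'}(r_{s'})$ we also have $W^-(t) \le M_{s'}(r_{s'})$, so $t$ is firable at $r_{s'}$; the resulting root marking changes by $-W^-(t) + W^+(t)$ on both sides, so the same $f$ still works because $m_0 - W^-(t) + W^+(t) \le M_{s'}(r_{s'}) - W^-(t) + W^+(t)$, and all subtrees and edges are untouched. The case $t \in T_\tau$ is immediate: $s_1 = \emptyset \preceq s'_1$ for whatever $s'_1$ results (and $\emptyset \preceq$ everything by definition). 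For $t \in T_{ab}$, firing at $r_s$ decrements the root marking by $W^-(t)$ and appends a new child subtree $s[v,\Omega(t)]$ reached by an edge labelled $W^+(t)$; doing the same at $r_{s'}$ appends $s[v',\Omega(t)]$ with an edge labelled $W^+(t)$. I extend $f$ by mapping the new vertex $v$ to the new vertex $v'$: the marking condition holds with equality ($\Omega(t) \le \Omega(t)$), the new edge labels are equal, and injectivity is preserved since $v'$ is fresh. The recursive clause is the inductive step: if $s_i \xrightarrow{(v,t)} s'_i$ for some child $i$, then letting $w := r_{s_i}$ and using that $f$ witnesses $s_i \preceq s'_{f(w)}$, the induction hypothesis gives a firing $s'_{f(w)} \xrightarrow{(f(v), t')} \widehat{s}$ with $s'_i \preceq \widehat{s}$ (when $s'_i \neq \emptyset$) or the subtree collapses (when $s'_i = \emptyset$); I then reassemble $s'$ with this modified subtree, noting that in the collapse case the parent marking of $s'$ only increases (by $m'_i \ge m_i$, matching the increase $m_i$ on the $s$ side), so the root-level marking inequality is maintained, and all other subtrees/edges are unchanged.

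The same argument works verbatim for $\preceq_r$, the only difference being that in every case the witnessing injection already sends $r_s$ to $r_{s'}$ and the constructions above never alter that, so I would just remark that the proof is identical. The main obstacle — really the only delicate point — is bookkeeping in the recursive/cut case: one must check that when a child subtree $s'_i$ of $s'$ collapses to $\emptyset$, the corresponding child $s_i$ of $s$ must also have collapsed (this is forced because $\emptyset$ is the only state $\preceq$-below $\emptyset$ among nonempty-rooted states — more precisely, if $s_i \xrightarrow{(v,t)} \emptyset$ via a cut at $r_{s_i}$, the induction hypothesis only guarantees $s'_i \xrightarrow{} \widehat s$ with $\emptyset \preceq \widehat s$, which is vacuous, so we may still have $\widehat s \neq \emptyset$), and conversely that the marking credited to the parent on the $s'$ side dominates that on the $s$ side. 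I would be careful to state the induction hypothesis as: for all states $s \preceq s'$ and all firings $s \xrightarrow{(v,t)} s_1$, there is a firing $s' \xrightarrow{(v',t')} s'_1$ with $s_1 \preceq s'_1$ — and verify that the reassembly step genuinely produces a legal state and a legal firing under the operational semantics, which is routine once the subtree-level claim is in hand.
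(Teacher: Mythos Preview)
Your plan has two genuine gaps. First, in your root case for $\preceq$ you write ``since $m_0 \le M_{s'}(r_{s'})$'' and fire at $r_{s'}$; this silently assumes $f(r_s)=r_{s'}$, which holds only for $\preceq_r$. For plain $\preceq$ you only know $m_0\le M_{s'}(f(r_s))$, so you must fire at $f(r_s)$ (and then, when $t\in T_\tau$, the $s'$-side need not collapse to $\emptyset$---it only yields some $s'_1\succeq\emptyset$, which is still fine). Second, you correctly diagnose that in the recursive clause your induction hypothesis is too weak for cuts---knowing only $\emptyset\preceq\widehat s$ is vacuous---but the hypothesis you then restate is the very one you just showed insufficient, so the obstacle is identified but not removed. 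The fix is to strengthen the hypothesis so the matching firing is specifically $(f(v),t)$ with the witness $f'$ agreeing with $f$ on surviving vertices; then a cut at $r_{s_i}$ forces a cut at $f(r_{s_i})$, that subtree of $s'$ also collapses, and the edge-label inequality $m_i\le m'_i$ handles the parent-marking credit. Equivalently, prove the statement for $\preceq_r$ first (where $\emptyset\preceq_r\widehat s$ \emph{does} force $\widehat s=\emptyset$) and deduce $\preceq$ by restricting to $s'_{f(r_s)}$ and lifting.

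The paper avoids the structural induction altogether: it simply fires $(f(v),t)$ in $s'$ (enabled because $M_s(v)\le M_{s'}(f(v))$; the inductive clause of the operational semantics then propagates this step from $s'_{f(v)}$ up to $s'$) and defines the new witness $f'$ globally in one shot---$f'=f$ for elementary $t$, $f$ extended by the fresh vertex for abstract $t$, and $f$ restricted to surviving vertices for cuts. No induction on $s$, no reassembly, and the cut case is immediate because $f$ carries $Des_s(v)$ into $Des_{s'}(f(v))$.
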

\begin{proof}
	Let $s\preceq s'$ and let $f$ be the mapping associated with the relation $\preceq$ and $s\xrightarrow{v,t}s_1$.\\
	Thus $s_v\xrightarrow{v,t}s_2$ for some $s_2$.\\
	We will exhibit some $s'_1$ such that $s_1\preceq s'_1$ with some $f'$ as associated mapping.\\
	Since $M_s(v) \leq M_{s'}(f(v))$, one has $s'_{f(v)}\xrightarrow{f(v),t}s'_2$ for some $s'_2$ and by induction $s'\xrightarrow{f(v),t}s'_1$
	for some $s'_1$. \medskip\\
	It remains to define $f'$.
	\begin{itemize}[nosep]
		\item If $t \in T_{el}$ then $f'=f$;
		\item If $t \in T_{ab}$ then for all threads $u$ of $s$, $f'(u)=f(u)$  and if $v^*$ (resp. $w^*$)
		is the thread created by the firing $(v,t)$ (resp. $(f(v),t)$) then $f(v^*)=w^*$;
		\item If $t \in T_{\tau}$ then $f'$ is equal to $f$ restricted to the remaining vertices.
	\end{itemize}
	It is routine to check that the inequalities between corresponding markings of $s$ and $s'$ are fulfilled.
	The proof for $\preceq_r$ is similar.	
\end{proof}

These quasi-orders may contain an infinite set of
incomparable states (i.e. an infinite \emph{antichain}).
For example, see Figure~\ref{fig:antichain} where any two states $s_i$ and $s_j$ are incomparable.

\begin{figure}[h]
	\begin{center}

\begin{tikzpicture}[
			scale=0.7,
            > = stealth, 
            shorten > = 1pt, 
            auto,
            semithick 
        ]
\usetikzlibrary{arrows,shapes,snakes,automata,backgrounds,petri,positioning,decorations.pathreplacing}
        \tikzstyle{state}=[
        	circle,
            draw = black,
            thick,
            fill = white,
            minimum size = 1mm
        ]
        \tikzstyle{thread}=[circle,thick,draw=black!75,minimum size=3mm]
        
        \small

        \path (-5,-1.5) node[label={[xshift=0.0cm, yshift=-0.9cm]\scriptsize{$p_r$}},draw,circle,inner sep=2pt, minimum width= 0.5cm] (proot) {};
        \path (-5,-3.5) node[label={[xshift=0.0cm, yshift=-0.1cm]\scriptsize{$p_{\ell}$}},draw,circle,inner sep=2pt, minimum width= 0.5cm] (pleaf) {};

        \path (-6,0) node[label={[xshift=0.0cm, yshift=0cm]\scriptsize{$\tau_{r}$}},
        fill=black, draw,rectangle,inner sep=2pt, minimum width= 0.5cm] (tauroot) {};
        \path (-4,0) node[label={[xshift=0.0cm, yshift=0cm]\scriptsize{$t_{r}$}},
        double,draw,rectangle,inner sep=2pt, minimum width= 0.5cm] (troot) {};

        \path (-6,-5) node[label={[xshift=0.0cm, yshift=-0.6cm]\scriptsize{$\tau_{\ell}$}},
        fill=black, draw,rectangle,inner sep=2pt, minimum width= 0.5cm] (tauleaf) {};
        \path (-4,-5) node[label={[xshift=0.0cm, yshift=-0.6cm]\scriptsize{$t_{\ell}$}},
        double,draw,rectangle,inner sep=2pt, minimum width= 0.5cm] (tleaf) {};

       \draw[arrows=-latex'] (proot)-- (tauroot);
       \draw[arrows=-latex'] (proot)-- (-4,-0.75) -- (troot);
       \draw[arrows=-latex'] (troot)--  (-5,-0.75)--(proot);

       \draw[arrows=-latex'] (pleaf)-- (tauleaf);
       \draw[arrows=-latex'] (pleaf)-- (-4,-4.25) -- (tleaf);
       \draw[arrows=-latex'] (tleaf)--  (-5,-4.25)--(pleaf);

       \node (s1) at (-1,0) { \large{$s_1$}};
       \node (s2) at (-1,-2.5) { \large{$s_2$}};
	\node (sn) at (-1,-5) { \large{$s_n$}};
	\node at (-1,-3.75) {\Large{$\vdots$}};
        
        \node[state,  label=270: \small \color{darkgray}$\bf 0$, minimum width= 0.9cm] (v11)  [right = of s1]  {\tiny{$v_0$}};
        \node[state, label=270: \small \color{darkgray}$\bf 0$,  minimum width= 0.9cm] (v12) [right = of v11] {\tiny{$v_1$}};
        \node[state, label=270: \small \color{darkgray}$p_{\ell}$,  minimum width= 0.9cm] (v13) [right =  of v12] {\tiny{$v_2$}};
        
        \path[->] (v11) edge node {$p_r$} (v12);
        \path[->] (v12) edge node {$p_{\ell}$} (v13);

        \node[state,  label=270: \small \color{darkgray}$\bf 0$,  minimum width= 0.9cm] (v21) [right = of s2]   {\tiny{$v_0$}};
        \node[state, label=270: \small \color{darkgray}$\bf 0$,  minimum width= 0.9cm] (v22) [right = of v21]  {\tiny{$v_1$}};
        \node[state, label=270: \small \color{darkgray}$\bf 0$,  minimum width= 0.9cm] (v23) [right = of v22] {\tiny{$v_2$}};-5,2.5
        \node[state,  label=270: \small \color{darkgray}$p_{\ell}$,  minimum width= 0.9cm] (v24) [right = of v23]{\tiny{$v_3$}};

        \path[->] (v21) edge node {$p_r$} (v22);
        \path[->] (v22) edge node {$p_{\ell}$} (v23);
        \path[->] (v23) edge node {$p_{\ell}$} (v24);

        \node[thread,  label=270: \small \color{darkgray}$\bf 0$] (v31) [right = of sn] {\small{$v_0$}};
        \node[state, label=270: \small \color{darkgray}$\bf 0 $,  minimum width= 0.9cm] (v33) [right = of v31] {\tiny{$v_1$}};
        \node[state,  label=270: \small \color{darkgray}$\bf 0 $,  minimum width= 0.9cm] (v35) [right = of v33] {\tiny{$v_2$}};
        \node[state,  label=270: \small \color{darkgray}$\bf 0 $,  minimum width= 0.9cm] (v36) [right = of v35] {\tiny{$v_3$}};
        \node(v39) [right = of v36] {};
        \node[state,  label=270: \small \color{darkgray}$p_{\ell}$,  minimum width= 0.9cm](v41) [right = of v39] {\tiny{$v_{n+1}$}};
        
        \path[->] (v31) edge node {$p_{r}$} (v33);
        \path[->] (v33) edge node {$p_{\ell}$}  (v35);
        \path[->] (v35) edge node {$p_{\ell}$}  (v36);
	 	\draw[](v36) edge  ($(v36)!8mm!(v39)$) edge [dotted] ($(v36)!20mm!(v39)$) ;
        \path[->] (v39) edge node{$p_{\ell}$}  (v41);

\end{tikzpicture}
	\end{center}\vspace*{-5mm}
	\caption{An RPN with an antichain of states}
	\label{fig:antichain}\vspace*{-2mm}
\end{figure}
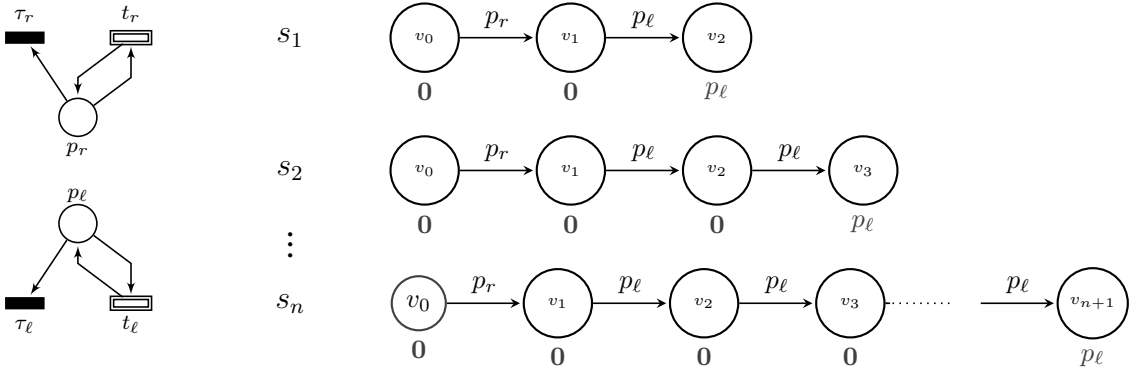

\medskip
Indeed, for any $ i<j $: (1)  $  s_j\not\preceq s_i$
since $ |V_{s_j}|>|V_{s_{i}}| $ there cannot be any injective function from $ V_{s_j}$ to $ V_{s_{i}} $, and (2) $ s_i\not\preceq s_j$
since for any injective function from $ V_{s_i}$ to $ V_{s_{j}} $,
at least one of the edges with the marking $ p_r $ would be mapped to an edge with a marking $p_\ell$.
Since $s\preceq_r s'$ implies $s\preceq s'$, this is also an antichain for $\preceq_r$.

\eject  
Observe also that these quasi-orders are not only strongly compatible. They are \emph{transition-pre\-serving compatible}
meaning that  for all states $s,s'$ such that $s\leq s'$
and for all transition firings  $s\xrightarrow{(v,t)}s_1$, there exist $s'_1$ and a transition firing
$s'\xrightarrow{(v',t)}s'_1$ with $s_1 \leq s'_1$. In Petri net, the standard order on $\nat^P$ is
a well quasi-order which is transition-preserving compatible. The next proposition establishes
that such a quasi-order does not exist in RPN.

\begin{proposition} There does not exist a well quasi-order on states of RPN
	which is transition-pre\-serving compatible.
\end{proposition}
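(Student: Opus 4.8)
The plan is to reason by contradiction. Suppose $\leq$ is a transition-preserving compatible well quasi-order defined on the states of every RPN. The first, routine, observation is that transition-preserving compatibility lifts to firing sequences: a straightforward induction on the length shows that if $s\leq s'$ and $s\xrightarrow{\sigma}s_1$ where the sequence of transitions of $\sigma$ is $w=t_1\cdots t_n\in T^*$, then there is a firing $s'\xrightarrow{\sigma'}s'_1$ whose sequence of transitions is also $w$ and with $s_1\leq s'_1$. Consequently, whenever $s\leq s'$, every transition word firable from $s$ is firable from $s'$. I would then fix the RPN $\N$ of Figure~\ref{fig:antichain}: two places $p_r,p_\ell$; abstract transitions $t_r,t_\ell$ with $W^-(t_r)=W^+(t_r)=p_r$, $W^-(t_\ell)=W^+(t_\ell)=p_\ell$ and $\Omega(t_r)=\Omega(t_\ell)=p_\ell$; and cut transitions $\tau_r,\tau_\ell$ with $W^-(\tau_r)=p_r$, $W^-(\tau_\ell)=p_\ell$. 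For $n\geq 1$ let $s_n$ be the state consisting of the single chain $v_0\xrightarrow{p_r}v_1\xrightarrow{p_\ell}v_2\xrightarrow{p_\ell}\cdots\xrightarrow{p_\ell}v_{n+1}$ with $M_{s_n}(v_{n+1})=p_\ell$ and all other markings null; these are exactly the states of the antichain depicted in Figure~\ref{fig:antichain}.

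The crux is to exhibit, for every pair $i<j$, a transition word firable from $s_i$ but not from $s_j$; by the lifting observation this forces $s_i\not\leq s_j$, so the infinite sequence $(s_n)_{n\geq 1}$ contains no pair $i<j$ with $s_i\leq s_j$, contradicting that $\leq$ is a well quasi-order. The witness is $w_i=\tau_\ell^{\,i+1}\tau_r$. From $s_i$, firing $\tau_\ell$ at the current leaf $i+1$ times peels the chain leaf by leaf: each cut of a $p_\ell$-labelled edge deposits a $p_\ell$-token in the new leaf, and the final cut, of the $p_r$-labelled edge $v_0\xrightarrow{p_r}v_1$, deposits a $p_r$-token in $v_0$, leaving the single-vertex state $(v_0,p_r,\emptyset)$, from which $\tau_r$ is firable; hence $w_i$ is firable from $s_i$. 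From $s_j$ with $j>i$, along any firing sequence whose transition word lies in $\tau_\ell^{*}\tau_r$ the $\tau_\ell$'s are forced to peel leaves one at a time, and $\tau_r$ becomes enabled only once the chain has collapsed to its root, which takes exactly $j+1$ cuts; since $j+1>i+1$, the word $w_i$ is not firable from $s_j$.

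The only step requiring genuine care is this last claim, namely that no interleaving lets $s_j$ fire $w_i$. The argument is a short induction showing that every state reachable from $s_j$ by a word in $\tau_\ell^{*}$ is again a chain $v_0\to\cdots\to v_m$ whose leaf is the unique thread carrying a $p_\ell$-token and whose only nonzero marking is that leaf's — so the order of the cuts is forced — and that no thread carries a $p_r$-token until $m=0$, i.e. until $j+1$ cuts have been performed; this pins down the unique $\tau_\ell^{*}\tau_r$-sequence from $s_j$ to be $\tau_\ell^{\,j+1}\tau_r$. Everything else is bookkeeping. I would also note that the argument never invokes reachability of the $s_n$ from a common state: it uses only that they are legitimate states of $\N$ and that $\leq$ is, by hypothesis, a transition-preserving compatible well quasi-order on all states — which is why the mere existence of a $\preceq$-antichain (Figure~\ref{fig:antichain}) does not by itself yield the proposition, whereas this finer behavioural argument does.
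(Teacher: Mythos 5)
Your proposal is correct and follows essentially the same route as the paper: it uses the net and the family $\{s_n\}$ of Figure~\ref{fig:antichain}, the witness sequences $\tau_\ell^{\,n+1}\tau_r$ firable from $s_n$ but from no other $s_{n'}$, and transition-preserving compatibility (lifted to sequences) to conclude that the $s_n$ are pairwise unrelated under any such quasi-order, contradicting well quasi-orderedness. The only difference is that you spell out the forced order of the cuts and the lifting induction, which the paper leaves as a ``simple examination.''
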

\begin{proof}
	Consider the net of Figure~\ref{fig:antichain} and the family of states $\{s_n\}_{n\geq 1}$.
	By a simple examination one gets that for all $n\geq 1$,
	$s_n\xrightarrow{(v_{n+1},\tau_{\ell})\ldots (v_{1},\tau_{\ell})(v_{0},\tau_{r})} \emptyset$.
	Moreover for all $n'\neq n$, there does not exist a firing sequence from $s_{n'}$ labelled
	by $\tau_{\ell}^{n+1}\tau_{r}$. Thus for any transition-preserving compatible quasi-order $\leq$,
	these states are incomparable establishing that $\leq$ is not a well quasi-order.	
\end{proof}

Since $\preceq$ is not a well quasi-order, RPNs with the relation $\preceq$  are not well structured transition
systems (WSTS)~\cite{Alain01} for which coverability is decidable. 
Therefore to solve coverability, one needs to find another way.

\section{Decision problems and reductions}
\label{sec:reductions}
In this section, we introduce the decision problems that we are going to solve and establish reductions to simpler problems in order to shorten the proofs of subsequent sections.

\medskip
Let $(\N,s_0)$ be a marked RPN and $s_f$ be a state of $\N$.
\begin{itemize}
	\itemsep=0.95pt
	\item The \emph{cut problem} asks whether there exists a firing sequence $\sigma$ such that $s_0\xrightarrow{\sigma}\emptyset$?
	\item The \emph{coverability problem} asks whether there exists a firing sequence $\sigma$ such that $s_0\xrightarrow{\sigma}s\succeq s_f$?
	\item The \emph{termination problem} asks whether there exists an infinite firing sequence?
	\item The \emph{finiteness problem} asks whether $Reach(\N,s_0)$ is finite?
	\item The \emph{boundedness problem} asks whether there exists $B\in \nat$ such that for all $s\in Reach(\N,s_0)$
	and for all $v\in V_s$, one has $\max(M_s(v)(p))_{p\in P} \leq B$?
\end{itemize}
Observe that contrary to Petri nets, the finiteness and boundedness problems are different and not equivalent.
Indeed, an RPN can be bounded while due to the unbounded number of vertices, its reachability set can be infinite.

We introduce the ``rooted'' version of the above problems: for these versions, $s_0$ is required to be some $s[r,m_0]$.
In order to establish a reduction from the general problems to their rooted versions,
given a marked RPN $(\N,s_0)$, we build a marked RPN $(\rooted{\N},s[r,\rooted{m_0}])$ that in a way simulates
the former marked RPN.
We do this by adding a place $p_v$ for every vertex $v\neq r$ of $s_0$
and we add an abstract transition $t_v$ that consumes a token from this place and creates a new vertex with initial marking in $M_{s_0}(v)+\sum_{v\xrightarrow{m_{v'}}_{s_0} v'}$. This will allow to create the children of $v$ in $s_0$ (see Figure \ref{fig:RPN_to_rooted_rpn}).
In order to similarly proceed in the root, $\rooted{m_0}=M_{s_0}(r)+\sum_{r\xrightarrow{m_{v'}}_{s_0} v'}$.

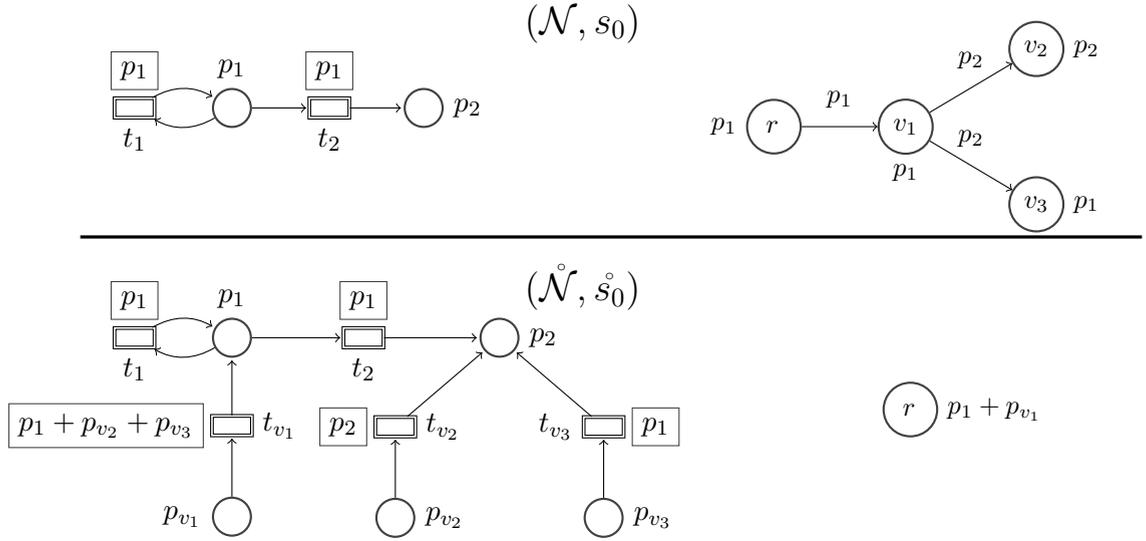
\begin{figure}[h]
\begin{center}
	\begin{tikzpicture}[]
		
		\tikzstyle{place}=[circle,thick,draw=black!75,minimum size=5mm]
		\tikzstyle{thread}=[circle,thick,draw=black!75,minimum size=3mm]
		\tikzstyle{elementary_transition}=[rectangle,minimum width=15, minimum height=1,draw=black!75,]
		\tikzstyle{abstract_transition}=[rectangle,minimum width=15, minimum height=0.5,double,draw=black!100,]
		\tikzstyle{abstract_transition_description}=[rectangle,minimum width=5, minimum height=5,draw=black!75,]
		\usetikzlibrary{arrows,shapes,snakes,automata,backgrounds,petri}
		\tikzstyle{cut_transition}=[rectangle,minimum width=15, minimum height=0.5,fill=black,draw=black!100,]
		
		\node [place, tokens=0,label=90:$p_{1}$] (p1)  {};
		\node [place, tokens=0,label=0:$p_{2}$] (p2)  [right = 2 of p1]  {};
		
		\node [abstract_transition,label=270:$t_{2}$] (t1) [right = 0.75  of p1] {}
		edge[pre] (p1)
		edge[post] (p2);
		\node [abstract_transition_description] [above = 1mm of t1]  {$p_1$};
		
		\node [abstract_transition,label=270:$t_{1}$] (t1) [left = 0.75  of p1] {}
		edge[pre,bend right] (p1)
		edge[post,bend left] (p1);
		\node [abstract_transition_description] [above = 1mm of t1]  {$p_1$};
		
		\node [place, tokens=0,label=90:$p_{1}$] (p21) [below = 2.5 of p1]   {};
		\node [place, tokens=0,label=0:$p_{2}$] (p22)  [right = 3 of p21]  {};
		\node [place, tokens=0,label=180:$p_{v_1}$] (pv1)  [below = 1.83 of p21]  {};
		\node [place, tokens=0,label=1800:$p_{v_2}$] (pv2)  [below left = 2 and 1 of p22]  {};
		\node [place, tokens=0,label=0:$p_{v_3}$] (pv3)  [below right = 2 and 1 of p22]  {};
		
		\node [abstract_transition,label=270:$t_{1}$] (t1) [left = 0.75  of p21] {}
		edge[pre,bend right] (p21)
		edge[post,bend left] (p21);
		\node [abstract_transition_description] [above = 1mm of t1]  {$p_1$};
		\node [abstract_transition,label=270:$t_{2}$] (t1) [right = 1.2  of p21] {}
		edge[pre] (p21)		edge[post] (p22);
		\node [abstract_transition_description] [above = 1mm of t1]  {$p_1$};

		\node  [abstract_transition,label=0:$t_{v_1}$] (ta2) [above = 0.8 of pv1] {}
				edge[pre] (pv1)
				edge[post] (p21);
		\node [abstract_transition_description] [left = 1mm of ta2]  {$p_1 + p_{v_2}+p_{v_3}$};
			
		\node  [abstract_transition,label=0:$t_{v_2}$] (ta2) [above = 0.8 of pv2] {}
		edge[pre] (pv2)
		edge[post] (p22);
		\node [abstract_transition_description] [left = 1mm of ta2]  {$p_2$};

		\node  [abstract_transition,label=180:$t_{v_3}$] (ta2) [above = 0.8 of pv3] {}
		edge[pre] (pv3)
		edge[post] (p22);
		\node [abstract_transition_description] [right = 1mm of ta2]  {$p_1 $};

		%
		%
				{\small
					\node[thread, label=180: $ p_1 $ ] (vr) [ below right   = -0.2 and 6.7 of p1]  {$r^{{}^{}}_{{}}$};
					\node[thread, label=270:   $p_1$] (v1) [ right  = 1 of vr] {$ v_1 $};
					\node[thread, label=0:   $p_2$] (v2) [above  right  = 0.5 and 1.2  of v1] {$ v_2 $};
					\node[thread, label=0:   $p_1$] (v3) [below  right  = 0.5 and 1.2  of v1] {$ v_3 $};
					
					\path[->] (vr) edge node[label=90:$p_1$]  {} (v1);
					\path[->] (v1) edge node[label=$p_{2}$]  {} (v2);
					\path[->] (v1) edge node[label=$p_{2}$]  {} (v3);
				}
				{\small
				\node[thread, label=0: $ p_1 + p_{v_1} $ ] (vr) [ below right   = 0.5 and 8.5 of p21]  {$r^{{}^{}}_{{}}$};
			}
		\Large
		\node [above right  = 0.5 and 3.5 of p1] (N) {{$(\N,s_0)$}};
		\node [above right  = -0 and 3.5 of p21] (rN) {{$(\rooted{\N},\sr)$}};
		
		%
		%
				\draw[color=black,very thick](-2,-1.7) -- (12,-1.7);
		%
		%

	\end{tikzpicture}
	
\end{center}\vspace*{-4mm}
	\caption{From a marked RPN to a rooted one}\label{fig:RPN_to_rooted_rpn}\vspace*{-3mm}
\end{figure}

\begin{definition}
	Let $(\N,s_0)$ be a marked RPN. Then $(\rooted{\N},\sr)$ is defined by:
	\begin{itemize}[nosep]
		\item $\rooted{P} = P \cup \{p_v \mid v\in V_{s_0}\setminus\{r_{s_0}\}\}$ ;
		\item  $\rooted{T}_{ab}=T_{ab}\cup T_V$, $\rooted{T}_{\tau} = T_{\tau}$,
		$\rooted{T}_{el}=T_{el}$ with $ T_V=\{t_v \mid v\in  V_s\setminus\{r_s\}\}$ ;
		\item for all $t\in T$,	one has $\rooted{W}^-(t)=W^-(t)$
		and  all $t\in T_{ab}\cup T_{el}$, $\rooted{W}^+(t)=W^+(t)$ ;
		\item for all $t_v\in T_V$ and $u\xrightarrow{m_v}_{s_0} v$, $\rooted{W}^-(t_v)=p_v$ and $\rooted{W}^+(t_v)=m_v$ ;
		\item for all $t\in T_{ab}$, $\rooted{\Omega}(t) = \Omega(t)$ ;
		\item for all $t_v\in T_V$,  $\rooted{\Omega}(t_v)=M_{s_0}(v)+\sum_{v\xrightarrow{m_{v'}}_{s_0} v'} p_{v'}$ ;
		\item $\sr = s[r,M_{s_0}(r_s)+\sum_{r_{s_0}\xrightarrow{m_v}_{s_0} v} p_{v}]$.
	\end{itemize}
	\end{definition}

Let $m\in \nat^{\rooted{P}}$,  we denote by $m_{|_{P}}\in\nat^{P}$ the projection of $m$ on  $P$.
Let $s$ be a state of  $\rooted{N}$,  we denote $s_{|_{P}}$ a state of $\N$ obtained by projecting  every marking of $s$ on $P$.

\medskip
\noindent{\bf Observations.}
\begin{enumerate}[nosep]
	\item The encoding size of $(\N,s_0)$ is linear w.r.t. the encoding size of $(\rooted{\N},\rooted{s_0})$.
	\item Let $e:=(v_i)_{0\leq i\leq k}$ be an enumeration of $V_{s_0}$ such that $v_0=r_{s_0}$
	and for all $0< i\leq k$, $prd(v_i)\in \{v_j\}_{j<i}$. Consider $\sigma^e_{s_0} = (prd(v_i),t_{v_i})_{i=1}^{k}$.
	Such an enumeration is called \emph{consistent}.
	By construction of $\rooted{\N}$, $\sr \xrightarrow{\sigma^e_{s_0}}_{\rooted{\N}} s'_0$ with $s'_{0|P}=s_0$
	and all places of $P_V$ unmarked in $s'_0$.
	\item Let $\sr\xrightarrow{\sigma}_{\rooted{\N}}s$. Then by construction, for all $v\in V_{s_0} \setminus \{r_{s_0}\}$,
	there is at most one occurrence of $t_v$ which furthermore is fired in $prd(v)$. Moreover since these firings consume
	tokens in $P_V$ that were not used for firings of $T$, they can be  pushed at the beginning of $\sigma$ (denoted by $\sigma_1$)
	and completed by the missing firings of $T_V$ in $\sigma$ (denoted by $\sigma_2$) getting a consistent enumeration $e$.
	Summarizing, denoting $\sigma_{| \N}$, $\sigma$ without the firings of $T_V$,
	one gets that: \\
	(1)  $\sr\xrightarrow{\sigma_1\sigma_{| \N}}_{\rooted{\N}}s$,\\
	(2) $\sr\xrightarrow{\sigma\sigma_2}_{\rooted{\N}}s'$ and \\
	(3) $s_0 \xrightarrow{\sigma_{| \N}}_{\N}s''$
	with $s'_{|P}=s''$ and all places of $P_V$ are unmarked in $s'$.
	\end{enumerate}

\smallskip\noindent
Due to observation~2, we immediately get that:
\begin{lemma}
	\label{lemma:fromNtoroot}
	Let $(\N,s_0)$ be a marked RPN and $s_0\xrightarrow{\sigma}_{\N}s$.
	Then  for every consistent enumeration $e$, there exists a firing sequence $\sr\xrightarrow{\sigma^e_{s_0}{\sigma}}_{\rooted{\N}}s'$ with ${s'}_{|_P} = s$
	and all places of $P_V$ are unmarked in $s'$.
\end{lemma}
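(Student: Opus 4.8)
The plan is to prove Lemma~\ref{lemma:fromNtoroot} as a direct corollary of Observation~2, which already does all the structural work. First I would fix a consistent enumeration $e = (v_i)_{0\le i\le k}$ of $V_{s_0}$ with $v_0 = r_{s_0}$ and $prd(v_i) \in \{v_j\}_{j<i}$ for $0 < i \le k$; such an enumeration exists because $s_0$ is a finite tree (take any topological order from the root). Then I would invoke the fact established in Observation~2 that $\sr \xrightarrow{\sigma^e_{s_0}}_{\rooted{\N}} s'_0$ where $s'_{0|P} = s_0$ and every place of $P_V$ is unmarked in $s'_0$. The intuition is that the sequence $\sigma^e_{s_0} = (prd(v_i),t_{v_i})_{i=1}^k$ simply fires, in topological order, the new abstract transitions $t_{v_i}$ that reconstruct each non-root vertex of $s_0$ as a child of its parent, consuming the auxiliary token in $p_{v_i}$ and producing the correct initial marking $M_{s_0}(v_i)$ together with the tokens $p_{v'}$ needed to later create $v_i$'s own children.

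The second step is to continue the simulation. Since $s'_{0|P} = s_0$ and the transitions of $T$ (hence of $\sigma$) act only on places of $P$ and are insensitive to the extra places $P_V$, the firing sequence $s_0 \xrightarrow{\sigma}_{\N} s$ can be replayed verbatim from $s'_0$ in $\rooted{\N}$: one gets $s'_0 \xrightarrow{\sigma}_{\rooted{\N}} s'$ with $s'_{|P} = s$, and the $P_V$-component is untouched, so all places of $P_V$ remain unmarked in $s'$. Concatenating, $\sr \xrightarrow{\sigma^e_{s_0}\sigma}_{\rooted{\N}} s'$ with $s'_{|P} = s$, which is exactly the claim. The only mild care needed here is to note that "replaying $\sigma$ verbatim" is legitimate: the backward incidence $\rooted{W}^-$ restricted to $T$ equals $W^-$ and is zero on $P_V$, so the enabling condition $W^-(t) \le m_0$ in $\N$ transfers to $\rooted{W}^-(t) \le m'_0$ in $\rooted{\N}$, and similarly for the forward effect; this is a routine induction on the length of $\sigma$ following the inductive structure of the operational semantics.

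I do not anticipate a genuine obstacle, since the heavy lifting — both the definition of $\rooted{\N}$ and the correctness of the reconstruction phase $\sigma^e_{s_0}$ — is packaged in Observation~2. If anything, the point requiring the most attention is making precise that the extra places $P_V$ act as a passive "context" that is neither consumed nor produced by transitions of $T$, so that the $\N$-firing sequence lifts unchanged; but this is immediate from the bullet "for all $t \in T$, $\rooted{W}^-(t) = W^-(t)$" and "for all $t \in T_{ab} \cup T_{el}$, $\rooted{W}^+(t) = W^+(t)$" in the definition of $\rooted{\N}$, together with the fact that $P_V$ and $P$ are disjoint. Hence the proof is essentially a one-line appeal to Observation~2 followed by this projection remark.
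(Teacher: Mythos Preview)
Your proposal is correct and matches the paper's approach exactly: the paper simply states that the lemma follows immediately from Observation~2, and your argument spells out precisely this, namely that $\sigma^e_{s_0}$ reconstructs $s_0$ with the $P_V$ places emptied, after which $\sigma$ can be replayed verbatim because transitions of $T$ in $\rooted{\N}$ coincide with those of $\N$ on $P$ and ignore $P_V$.
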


Due to observation~3, we immediately get that:
\begin{lemma}
	\label{lemma:fromroottoN}
	Let $(\N,s_0)$ be a marked RPN and $\sr\xrightarrow{\sigma}_{\rooted{\N}}s$.
	Then there exist a consistent enumeration $e$ and a decomposition $\sigma^e_{s_0}=\sigma_1\sigma_2$
	such that $\rooted{s_0}\xrightarrow{\sigma_1\sigma_{| \N}}_{\rooted{\N}}s$,
	$\sr\xrightarrow{\sigma\sigma_2}_{\rooted{\N}}s'$ and $s_0 \xrightarrow{\sigma_{| \N}}_{\N}s''$
	with $s'_{|P}=s''$ and all places of $P_V$ are unmarked in $s'$.
\end{lemma}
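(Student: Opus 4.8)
The statement is essentially a reformulation of Observation~3 above, so the plan is to establish the three bullets listed there; the lemma then follows at once, dually to how Lemma~\ref{lemma:fromNtoroot} follows from Observation~2. Write the given firing sequence as $\sigma=(v_1,t_1)\cdots(v_n,t_n)$, let $\sigma_V$ be the subsequence of those $(v_i,t_i)$ with $t_i\in T_V$, and let $\sigma_{| \N}$ be the complementary subsequence (the firings using transitions of the original $T$). The first thing I would check, purely by inspecting the construction of $\rooted{\N}$, is that each $t_v$ (for $v\in V_{s_0}\setminus\{r_{s_0}\}$) occurs at most once in $\sigma$ and, when it does, it is fired by the copy of $prd(v)$: the place $p_v\in P_V$ is the unique precondition of $t_v$, a token is put into $p_v$ exactly once — either already in $\sr$ when $prd(v)=r_{s_0}$, or by the single firing of $t_{prd(v)}$ — and that token thereafter stays in one fixed thread, namely the copy of $prd(v)$, because no transition of $T$ reads or writes a $P_V$-place and no other transition of $T_V$ moves it. The same analysis shows that $t_v$ can occur in $\sigma$ only after $t_{prd(v)}$, so the relative order of the firings inside $\sigma_V$ already respects ``parent before child''.

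Next I would push $\sigma_V$ to the front of $\sigma$. The key independence fact is that a transition $t\in T$ and a transition $t_v\in T_V$ are independent: $t$ only touches $P$-places of the thread firing it (and, if $t\in T_{ab}$, creates a child with a $\nat^P$-marking; if $t\in T_\tau$, prunes a subtree and returns a $\nat^P$-marking to the parent), whereas $t_v$ reads the $P_V$-token $p_v$ in the copy of $prd(v)$ and writes tokens only into a freshly created child. Hence, in any firing sequence, an adjacent pair $(v',t)(prd(v),t_v)$ with $t\in T$ can be swapped into $(prd(v),t_v)(v',t)$ while reaching the same state: firing $t$ leaves the $p_v$-content of the copy of $prd(v)$ unchanged, and $t$ remains enabled afterwards — in particular, if $t\in T_\tau$ had pruned the copy of $prd(v)$ or one of its ancestors, then $t_v$ could not have been fired next, so that configuration does not occur. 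Iterating these swaps turns $\sigma$ into $\sigma_V\,\sigma_{| \N}$, giving $\sr\xrightarrow{\sigma_V\,\sigma_{| \N}}_{\rooted{\N}}s$, which is claim~(1) with $\sigma_1:=\sigma_V$.

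Let $\sigma_2$ list the remaining transitions of $T_V$ (the $t_v$ with $t_v\notin\sigma$), ordered so that $t_{prd(v)}$ precedes $t_v$. Since the markers consumed along $\sigma_2$ are never touched by $\sigma$, each such $t_v$ is enabled when its turn comes, so $\sr\xrightarrow{\sigma\,\sigma_2}_{\rooted{\N}}s'$ for a state $s'$ in which, by construction, all places of $P_V$ are empty; moreover $\sigma_1\sigma_2$ is exactly $\sigma^e_{s_0}$ for a consistent enumeration $e$ of $V_{s_0}$, which is claim~(2). For claim~(3), observe that $\sigma_{| \N}$ involves only copies of $s_0$-vertices and threads created along $\sigma_{| \N}$ itself, never touches $P_V$, and that the subtrees recreated by $\sigma_2$ are precisely the ones left dormant by $\sigma_{| \N}$; projecting every marking on $P$ then yields $s_0\xrightarrow{\sigma_{| \N}}_{\N}s''$ with $s'_{|P}=s''$. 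Assembling (1)--(3) gives the lemma.

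The step I expect to cost the most effort is the commutation argument of the second paragraph: one must track carefully how the abstract and cut transitions of $T$ reshape the thread tree — adding whole subtrees, deleting whole subtrees, and moving $\nat^P$-markings between a child and its parent — and verify that none of this disturbs the $P_V$-markers or the identity of the thread meant to fire a given $t_v$. The bookkeeping around cut transitions (ensuring that, in the reorderings we use, a copy of $prd(v)$ is never pruned before $t_v$ has been fired, and that cutting a copy of an $s_0$-vertex removes only subtrees that are dormant on the $\N$-side as well) is where the otherwise routine induction needs genuine attention.
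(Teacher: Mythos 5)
Your proposal follows essentially the same route as the paper, which obtains the lemma directly from its Observation~3: each $t_v$ fires at most once and only in the copy of $prd(v)$, the $T_V$-firings commute to the front of $\sigma$ because transitions of $T$ never touch the places of $P_V$, and the missing $T_V$-firings are appended as $\sigma_2$ to complete a consistent enumeration; your write-up merely spells out the commutation (including the interaction with cut transitions) in more detail than the paper does. One small shared gloss: like the paper, you assert that every missing $t_v$ is enabled after $\sigma$, which fails if the thread holding the token $p_v$ was pruned by a cut --- in that case such $t_v$ should instead be placed in $\sigma_1$ (its created subtree being pruned along with its ancestor), and only the $t_v$ whose tokens survive in $s$ go into $\sigma_2$.
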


Due to the previous lemmas, we get that:
\begin{proposition}\label{col:rooted}
	The cut (resp. coverability, termination, finiteness, boundedness) problem
	is polynomially reducible to the rooted cut (resp. coverability, termination, finiteness, boundedness)  problem.
\end{proposition}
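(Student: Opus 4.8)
The plan is to reuse the marked RPN $(\rooted{\N},\sr)$ constructed just before the statement. As already observed, its encoding size is linear in that of $(\N,s_0)$ and its initial state $\sr=s[r,\rooted{m_0}]$ has the required rooted form, so the proposition reduces to checking, for each of the five problems $X$, that $(\N,s_0)$ is a positive instance of $X$ if and only if $(\rooted{\N},\sr)$ is a positive instance of the rooted version of $X$; for coverability, the target state $s_f$ of $\N$ is read as a state of $\rooted{\N}$ with all $P_V$-places empty. Lemmas~\ref{lemma:fromNtoroot} and~\ref{lemma:fromroottoN} supply the two transfer directions: the first lifts a run $s_0\xrightarrow{\sigma}_{\N}s$ to a run $\sr\xrightarrow{\sigma^e_{s_0}\sigma}_{\rooted{\N}}s'$ with $s'_{|P}=s$ and $P_V$ empty in $s'$; the second projects a run $\sr\xrightarrow{\sigma}_{\rooted{\N}}s$ to a run $s_0\xrightarrow{\sigma_{|\N}}_{\N}s''$ together with an extension $\sr\xrightarrow{\sigma\sigma_2}_{\rooted{\N}}\hat s$ with $\hat s_{|P}=s''$ and $P_V$ empty in $\hat s$. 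Two structural facts are used throughout: (i) on $P$ and $T$ the net $\rooted{\N}$ coincides with $\N$, and $T$-firings never touch $P_V$; (ii) each $t_v\in T_V$ can be fired at most once (in the thread $prd(v)$), so any run of $\rooted{\N}$ contains at most $|V_{s_0}|-1$ firings of $T_V$-transitions, and every $P_V$-place carries at most one token.

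With these tools the cut, coverability and boundedness cases are short. For the cut problem, lifting $s_0\xrightarrow{\sigma}_{\N}\emptyset$ gives $s'$ with $s'_{|P}=\emptyset$, hence $s'=\emptyset$; conversely a run $\sr\xrightarrow{\sigma}_{\rooted{\N}}\emptyset$ forces the tail $\sigma_2$ of Lemma~\ref{lemma:fromroottoN} to be empty (nothing is firable from $\emptyset$), so $s_0\xrightarrow{\sigma_{|\N}}_{\N}\emptyset$. For coverability, the forward direction is immediate since the lifted $s'$ satisfies $s'_{|P}=s\succeq s_f$ and has $P_V$ empty, hence $s'\succeq s_f$ in $\rooted{\N}$; for the backward direction, firing the residual $\sigma_2\in T_V^{*}$ only deletes $P_V$-tokens and grafts fresh subtrees, so $s''=\hat s_{|P}\succeq s_{|P}\succeq s_f$, whence $s''\succeq s_f$. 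For boundedness, fact~(ii) bounds $P_V$-markings by $1$ uniformly, while the two lemmas identify the $P$-markings reachable in $\N$ with those reachable in $\rooted{\N}$ up to the harmless grafting of $\sigma_2$, so a bound on one side yields a bound on the other.

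The delicate cases are termination and finiteness. For termination, the forward direction uses fact~(i): prepend $\sigma^e_{s_0}$ to reach $s'_0$ with $s'_{0|P}=s_0$ and $P_V$ empty, and replay the infinite $\N$-run step by step in $\rooted{\N}$. For the backward direction, fact~(ii) gives a suffix of the infinite $\rooted{\N}$-run that uses no $T_V$-transition, so the run is $\sr\xrightarrow{\rho_0}_{\rooted{\N}}\bar s$ followed by an infinite $T$-only tail $\rho'$; applying Lemma~\ref{lemma:fromroottoN} to $\rho_0$ yields $s_0\xrightarrow{\rho_{0|\N}}_{\N}s''$ with $s''=\hat s_{|P}\succeq\bar s_{|P}$, and $\rho'$ is a valid infinite $\N$-run from $\bar s_{|P}$; the transition-preserving compatibility of $\preceq$ (noted right after Lemma~\ref{lem:strongcompatibility}) lets me replay $\rho'$ from the larger state $s''$, producing an infinite run of $(\N,s_0)$. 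For finiteness, the backward direction only needs that $Reach(\N,s_0)$ is the image of $Reach(\rooted{\N},\sr)$ under the $P$-projection (Lemma~\ref{lemma:fromNtoroot}), hence finite when the latter is. The forward direction is where I expect the real work: using Lemma~\ref{lemma:fromroottoN}, every abstract state $[s]$ reachable in $\rooted{\N}$ is obtained from some $[\hat s]$ by pruning at most $|V_{s_0}|-1$ fresh $T_V$-subtrees, and $\hat s$ is determined by the abstract state $s''=\hat s_{|P}\in Reach(\N,s_0)$ (its $P_V$-places being empty); since $Reach(\N,s_0)$ is finite its members have boundedly many vertices, so only finitely many $[s]$ can prune down to each $[\hat s]$, and $Reach(\rooted{\N},\sr)$ is finite. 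The main obstacle is exactly this finiteness bookkeeping — verifying that neither the $P_V$-tokens nor the residual subtrees can blow up the reachability set — and it is where facts~(i) and~(ii) are used in full.
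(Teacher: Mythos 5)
Your proof is correct and takes essentially the same route as the paper: the rooted construction $(\rooted{\N},\sr)$ together with Lemmas~\ref{lemma:fromNtoroot} and~\ref{lemma:fromroottoN}, applied problem by problem with the same transfer arguments. The only local differences are cosmetic: for the cut problem you force $\sigma_2=\varepsilon$ rather than inspecting the final root cut firing, for termination you split off the finite $T_V$-prefix and invoke compatibility of $\preceq$ where the paper simply observes that $\sigma_{|\N}$ remains infinite, and for finiteness you phrase the paper's finite-fiber counting in contrapositive form.
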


\begin{proof}
	\noindent
	Let $(\N,s_0)$ be a marked RPN and $s_f$ be a state of $\N$. Define $\rooted{s}_f$
	a state of $\rooted{\N}$ be as $s_f$ with in all markings of $s_f$, all places of $\rooted{P} \setminus P$ unmarked.
	
	\noindent
	$\bullet$ Assume that there exists  $s_0\xrightarrow{\sigma}_{\N}\emptyset$. Then by Lemma~\ref{lemma:fromNtoroot},
	$\sr\xrightarrow{\sigma^e_{s_0}{\sigma}}_{\rooted{\N}}\emptyset$. Assume that there exists
	$\sr\xrightarrow{\sigma}_{\rooted{\N}}\emptyset$ which means that the last transition is fired in the root
	and is a cut transition. Then by Lemma~\ref{lemma:fromroottoN}, $s_0 \xrightarrow{\sigma_{| \N}}_{\N}s''$ for some $s''$. Since
	the last firing of  of $\sigma_{| \N}$ is the cut transition fired in the root $s''=\emptyset$.
	
	\noindent
	$\bullet$ Assume that there exists  $s_0\xrightarrow{\sigma}_{\N}s\succeq s_f$. Then by Lemma~\ref{lemma:fromNtoroot},
	$\sr\xrightarrow{\sigma^e_{s_0}{\sigma}}_{\rooted{\N}}\rooted{s}$ with $\rooted{s}_{|P}=s$.
	Thus  $\rooted{s}\succeq \rooted{s}_f$. Assume that there exists
	$\sr\xrightarrow{\sigma}_{\rooted{\N}}s\succeq \rooted{s}_f$.
	Then by Lemma~\ref{lemma:fromroottoN}, there exists $\sigma_2$ a firing sequence of $T_V$
	with  $\sr\xrightarrow{\sigma\sigma_2}_{\rooted{\N}}s'$,  $s_0 \xrightarrow{\sigma_{| \N}}_{\N}s''$
	and $s'_{|P}=s''$. Since $\sigma_2$ only creates vertices and deletes tokens from $P_V$,
	$s'\succeq \rooted{s}_f$. Thus $s'' \succeq s_f$.
	
	\smallskip\noindent
	$\bullet$ Assume that there exists  $s_0\xrightarrow{\sigma}_{\N}$ with $\sigma$ infinite. Then by Lemma~\ref{lemma:fromNtoroot},
	$\sr\xrightarrow{\sigma^e_{s_0}{\sigma}}_{\rooted{\N}}$. Assume that there exists
	$\sr\xrightarrow{\sigma}_{\rooted{\N}}$ with $\sigma$ infinite.
	Then by Lemma~\ref{lemma:fromroottoN}, $s_0 \xrightarrow{\sigma_{| \N}}_{\N}$ with $\sigma_{| \N}$ infinite
	since there are only a finite number of firings of $T_V$.
	
	\smallskip\noindent
	$\bullet$ Assume that $Reach(\N,s_0)$ is infinite. For all $s \in Reach(\N,s_0)$, define $\rooted{s}$ a state
	of $\rooted{\N}$  as $s$  with all places of $P_V$ in  markings of $s$ unmarked. Due to
	Lemma~\ref{lemma:fromNtoroot}, $\rooted{s}\in Reach(\rooted{\N},\rooted{s_0})$. Since this mapping is injective,
	$Reach(\rooted{\N},\rooted{s_0})$ is infinite. Assume that $Reach(\rooted{\N},\rooted{s_0})$ is infinite.
	Let $s \in Reach(\rooted{\N},\rooted{s_0})$. Due to  Lemma~\ref{lemma:fromroottoN},
	consider $s \xrightarrow{\sigma_2}_{\rooted{\N}}s'$ and $s_0 \xrightarrow{\sigma_{| \N}}_{\N}s''$
	with $s'_{|P}=s''$ and all places of $P_V$ unmarked in $s'$. Thus $s''\in Reach(\N,s_0)$.
	The mapping from $s$ to $s''$ is not injective. However, the inverse image of $s''$ by this mapping is finite since there are a finite number of consistent enumerations and prefixes of such enumerations. Thus $Reach(\N,s_0)$ is infinite.
	
	\smallskip\noindent
	$\bullet$ Assume that $(\N,s_0)$ is unbounded. For all $s \in Reach(\N,s_0)$, define $\rooted{s}$ a state
	of $\rooted{\N}$  as $s$  with all places of $P_V$ in  markings of $s$ unmarked. Due to
	Lemma~\ref{lemma:fromNtoroot}, $\rooted{s}\in Reach(\rooted{\N},\rooted{s_0})$.
	Thus $(\rooted{\N},\rooted{s_0})$ is unbounded. Assume that $(\rooted{\N},\rooted{s_0})$ is unbounded.
	By construction, the marking of places in $P_V$ is bounded.
	Let $s \in Reach(\rooted{\N},\rooted{s_0})$. Due to  Lemma~\ref{lemma:fromroottoN},
	consider $s \xrightarrow{\sigma_2}_{\rooted{\N}}s'$ and $s_0 \xrightarrow{\sigma_{| \N}}_{\N}s''$
	with $s'_{|P}=s''$ and all places of  $P_V$ unmarked in $s'$. Thus $s''\in Reach(\N,s_0)$.
	Since for all vertex $v$ of $s$, $v$ is also present in $s''$ and for all $p\in P$,
	$M_s(v)(p)=M_{s''}(v)(p)$. Then $Reach(\N,s_0)$ is unbounded.
\end{proof}

Let $\sigma$ be a firing sequence.  A thread is \emph{extremal} w.r.t.  $\sigma$ if it is an initial or final thread.

\begin{definition}
	Let $\N$ be an RPN. Then $T_{ret}\subseteq T_{ab}$, the set of \emph{returning transitions} is defined by:
	$$\{t\in T_{ab}\mid \exists s[r,\Omega(t)] \xrightarrow{\sigma}\emptyset\}$$
\end{definition}
For all $t\in T_{ret}$, we define $\sigma_{t}$ to be some arbitrary
shortest \emph{returning sequence} (i.e. $s[r,\Omega(t)] \xrightarrow{\sigma_t}\emptyset$).
We now introduce $\widehat{\N}$ obtained from $\N$ by adding elementary transitions that mimic
the behaviour of a returning sequence. Observe that the size of $\widehat{\N}$
is linear w.r.t. the size of $\N$.

\begin{definition} Let $\N$ be an RPN. Then
	$\widehat{\N}=\left<P,\widehat{T},\widehat{W}^+,
	\widehat{W}^-,\Omega\right>$ is defined by:
	\begin{itemize}[nosep]\medskip
		\item $\widehat{T}_{ab}=T_{ab}$, $\widehat{T}_{\tau} = T_{\tau} $ ,
		$\widehat{T}_{el}=T_{el}\uplus \{t^r\mid t\in T_{ret} \}$;
		\item for all $t\in T$,	$\widehat{W}^-(t)=W^-(t)$
		and  all $t\in T_{ab}\cup T_{el}$, $\widehat{W}^+(t)=W^+(t)$;
		\item for all $t\in T_{ab}$,	$\widehat{\Omega}(t)=\Omega(t)$;
		\item for all $t\in T_{ret}$,	$\widehat{W}^-(t^r)=W^-(t)$
		and  $\widehat{W}^+(t^r)=W^+(t)$.
	\end{itemize}\medskip\noindent
Figure~\ref{fig:RPN_hat} has an example of an RPN $\N$ and its $\widehat{\N}$.
\end{definition}

\begin{figure}[h]
\begin{center}
	\begin{tikzpicture}[]
		
		\tikzstyle{place}=[circle,thick,draw=black!75,minimum size=5mm]
		\tikzstyle{thread}=[circle,thick,draw=black!75,minimum size=3mm]
		\tikzstyle{elementary_transition}=[rectangle,minimum width=15, minimum height=1,draw=black!75,]
		\tikzstyle{abstract_transition}=[rectangle,minimum width=15, minimum height=0.5,double,draw=black!100,]
		\tikzstyle{abstract_transition_description}=[rectangle,minimum width=5, minimum height=5,draw=black!75,]
		\usetikzlibrary{arrows,shapes,snakes,automata,backgrounds,petri}
		\tikzstyle{cut_transition}=[rectangle,minimum width=15, minimum height=0.5,fill=black,draw=black!100,]
		
		\node [place, tokens=0,label=170:$p_{1}$] (p1)  {};
		\node [place, tokens=0,label=0:$p_{2}$] (p2)  [below = 2 of p1]  {};
		
			\node  [cut_transition,label=0:$t_{\tau}$] (tau) [above = 0.5 of p1] {}
			edge[pre] (p1);
			
			\node  [abstract_transition,label=0:$t_{1}$] (ta2) [below  =   0.8 of p1] {}
					edge[pre] (p1)
					edge[post] (p2);
			\node [abstract_transition_description] [left = 1mm of ta2]  {$p_1$};
			
			\node  [abstract_transition,label=180:$t_{2}$] (ta2) [below right = 0.9 and 1.2 of p1] {}
			edge[pre] (p1)
			edge[post] (p2);
			\node [abstract_transition_description] [right = 1mm of ta2]  {$p_2$};
		
		\node [place, tokens=0,label=170:$p_{1}$] (p21) [right  = 5 of p1]  {};
		\node [place, tokens=0,label=0:$p_{2}$] (p22)  [below = 2 of p21]  {};
		
		\node  [cut_transition,label=0:$t_{\tau}$] (tau2) [above = 0.5 of p21] {}
		edge[pre] (p21);
		
		\node  [abstract_transition,label=0:$t_{1}$] (ta22) [below  =   0.8 of p21] {}
		edge[pre] (p21)
		edge[post] (p22);
		\node [abstract_transition_description] [left = 1mm of ta22]  {$p_1$};
		
		\node  [abstract_transition,label=180:$t_{2}$] (ta21) [below right = 0.9 and 1.2 of p21] {}
		edge[pre] (p21)
		edge[post] (p22);
		\node [abstract_transition_description] [right = 1mm of ta21]  {$p_2$};
		
		\node [elementary_transition,label=100:$t_{1}^r$] (t21) [left = 1  of ta22] {}
				edge[pre,bend left] (p21)		
				edge[post,bend right] (p22);

		\node [place, tokens=0,label=170:$p_{1}$] (p21) [right  = 5 of p21]  {};
		\node [place, tokens=0,label=0:$p_{2}$] (p22)  [below = 2 of p21]  {};
		
		
		\node  [elementary_transition,label=0:$t_{1}$] (ta22) [below  =   0.8 of p21] {}
		edge[pre] (p21);
		
		\node  [elementary_transition,label=180:$t_{2}$] (ta21) [below right = 0.9 and 1.2 of p21] {}
		edge[pre] (p21);
		
		\node [elementary_transition,label=100:$t_{1}^r$] (t21) [left = 1  of ta22] {}
		edge[pre,bend left] (p21)		
		edge[post,bend right] (p22);

%
%
%
%
%

		%
		%
		\node[above = 1  of p1 ] (N) {{$\N$}};
		\node[right = 4.9 of N ] (Nh) {{$\widehat{\N}$}};
		\node[right = 10.5 of N ] (Nhel) {{$\widehat{\N}_{el}$}};
		
	\end{tikzpicture}
	
\end{center}\vspace*{-3mm}
	\caption{From $\N$ to $\widehat{\N}$ and $\widehat{\N}_{el}$}\label{fig:RPN_hat}\vspace*{-2mm}
\end{figure}

Note that since $\widehat{\N}$ enlarges $\N$ by adding transitions and that any firing of $t^r$ in $\widehat{\N}$
can be replaced by the firing of $t\sigma_t$ in $\N$ we get:
\begin{proposition}
	\label{prop:equivreach}
	Let $(\N,s_0)$ be a marked RPN. Then $Reach(\N,s_0)=Reach(\widehat{\N},s_0)$.
\end{proposition}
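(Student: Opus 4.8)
The plan is to prove the two inclusions separately; the first is immediate from the construction and the second is the content of the remark preceding the statement.

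For $Reach(\N,s_0)\subseteq Reach(\widehat{\N},s_0)$: by definition $\widehat{\N}$ has the same set of places, retains every transition of $T$ with identical backward and forward incidences and identical $\Omega$, and merely adds the fresh elementary transitions $t^r$. Hence every firing $s\xrightarrow{(v,t)}s'$ in $\N$ is verbatim a firing in $\widehat{\N}$, so any firing sequence of $\N$ from $s_0$ is one of $\widehat{\N}$ from $s_0$, and in particular every reachable abstract state of $\N$ is reachable in $\widehat{\N}$.

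For the reverse inclusion I would argue by induction on the length of a firing sequence $s_0\xrightarrow{\sigma}_{\widehat{\N}}s$, maintaining the invariant that there is a firing sequence $s_0\xrightarrow{\sigma'}_{\N}s'$ with $s'\simeq s$ (equivalently $\abst{s'}=\abst{s}$). The base case is trivial. For the step, write $\sigma=\rho\,(v,u)$ and let $s_\rho$ be the state reached by $\rho$, so by hypothesis $s_0\xrightarrow{\rho'}_{\N}\bar s$ with $\bar s\simeq s_\rho$. Since the firing relation is invariant under renaming of vertices, we may transport the thread $v$ along the isomorphism of labelled trees witnessing $\bar s\simeq s_\rho$; call its image $\bar v$, and note $M_{\bar s}(\bar v)=M_{s_\rho}(v)$. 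If $u\in T$ it is a transition of $\N$ with the same incidences and $\Omega$, so firing $(\bar v,u)$ in $\bar s$ is possible and reaches a state $\simeq$-equivalent to the one reached by $(v,u)$ in $s_\rho$, which closes the step with $\sigma'=\rho'(\bar v,u)$. The only remaining case is $u=t^r$ for some $t\in T_{ret}$, in which case firing $(v,t^r)$ requires only $M_{s_\rho}(v)\ge \widehat{W}^-(t^r)=W^-(t)$ and merely replaces $M(v)$ by $M_{s_\rho}(v)-W^-(t)+W^+(t)$, leaving the rest of the state unchanged. To simulate this in $\N$: since $M_{\bar s}(\bar v)\ge W^-(t)$, the abstract transition $t$ is enabled at $\bar v$; firing $(\bar v,t)$ decreases $M(\bar v)$ by $W^-(t)$ and creates a fresh child thread $w$ with $M(w)=\Omega(t)$ and incoming edge labelled $W^+(t)$. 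By definition of $T_{ret}$ there is a returning sequence $s[r,\Omega(t)]\xrightarrow{\sigma_t}_{\N}\emptyset$; renaming its root to $w$ and all vertices it creates to fresh ones (possible as $\mathcal V$ is countable), the last clause of the operational semantics shows the renamed sequence $\sigma_t^w$ is fireable from the state just reached, acts only inside the subtree rooted at $w$, and ends by pruning it — at which point its incoming label $W^+(t)$ is added to $M(\bar v)$. Thus after $(\bar v,t)\,\sigma_t^w$ the state is $\bar s$ with $M(\bar v)$ replaced by $M_{\bar s}(\bar v)-W^-(t)+W^+(t)$ and no surviving trace of $w$ or of the threads created in $\sigma_t^w$; its abstract state equals $\abst{s}$. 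Taking $\sigma'=\rho'\,(\bar v,t)\,\sigma_t^w$ preserves the invariant, completing the induction and hence the inclusion $Reach(\widehat{\N},s_0)\subseteq Reach(\N,s_0)$.

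The one genuinely delicate point is the ``grafting'' step: one must check that a firing sequence valid from the single-vertex state $s[r,\Omega(t)]$ remains valid, with the same effect on the affected subtree, when $r$ is renamed to the freshly created thread $w$ sitting inside a larger host state, and that upon completion all auxiliary threads have been deleted so that the abstract states coincide. This is a routine induction on the structure of states using the inductive (``local'') clause of the definition of the firing rule; everything else is bookkeeping about the vertex renaming witnessing $\simeq$.
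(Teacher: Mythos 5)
Your proof is correct and takes essentially the same route as the paper, which justifies the proposition in a single sentence: one inclusion holds because $\widehat{\N}$ only enlarges $\N$ by adding transitions, and the other because any firing of $t^r$ can be replaced in $\N$ by the firing of $t$ followed by the returning sequence $\sigma_t$. Your detailed induction and the grafting/renaming bookkeeping are just a careful expansion of exactly that replacement argument.
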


We call a firing sequence $\sigma$ \emph{omniscient} if any thread created during its firing is a final thread.

\begin{proposition}\label{prop:omniciant}
	Let $(\N,s_0)$ be a marked RPN and $s_0\xrightarrow{\sigma}_{\N}s$.
	Then there exists  a firing sequence $s_0\xrightarrow{\widehat{\sigma}}_{\widehat{\N}} s$ such that $\widehat{\sigma}$ is omniscient.
\end{proposition}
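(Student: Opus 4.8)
The statement asserts that any firing sequence in $\N$ can be simulated by an \emph{omniscient} sequence in $\widehat{\N}$ — one in which every created thread survives to the end. The plan is to proceed by induction on the number of ``bad'' threads in $\sigma$, namely threads that are created during $\sigma$ but deleted (by a cut transition fired in them) before the end. If there are no such threads, $\sigma$ is already omniscient (viewed as a sequence of $\widehat{\N}$, which contains all transitions of $\N$), and we are done. Otherwise, pick a bad thread $v$, say created by an abstract transition $t$ and later removed; we want to replace the whole ``life'' of $v$ and its subtree by a single firing of a new elementary transition.

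First I would argue that such a $v$ can be chosen to be \emph{innermost}, i.e. so that no descendant of $v$ is itself a bad thread: among all bad threads take one of maximal depth. Then the entire subtree rooted at $v$ is created and destroyed within $\sigma$, and moreover every thread strictly below $v$ is eventually deleted too (otherwise it would be a final thread having $v$ as a non-final ancestor, contradicting that $v$ is deleted). Consider the projection of $\sigma$ to the actions performed inside $Des_\sigma(v)$: starting from $s[v,\Omega(t)]$ it reaches $\emptyset$. Hence $t\in T_{ret}$ by definition of returning transitions, so $\widehat{\N}$ contains the elementary transition $t^r$ with $\widehat{W}^-(t^r)=W^-(t)$ and $\widehat{W}^+(t^r)=W^+(t)$. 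The key observation is that from the point of view of $prd(v)$, the net effect of the block consisting of the firing $(prd(v),t)$, all firings inside $Des_\sigma(v)$, and the final cut that removes $v$ is exactly: consume $W^-(t)$ and produce $W^+(t)$ in $prd(v)$ — precisely the effect of firing $t^r$ in $prd(v)$.

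The second step is to perform this replacement without disturbing the rest of $\sigma$. Since firings inside $Des_\sigma(v)$ only touch threads in that subtree, and no thread outside the subtree reads or writes $v$'s markings, the firings belonging to $Des_\sigma(v)$ form a set of positions of $\sigma$ that can be excised; the remaining firings, in their original order, still constitute a valid firing sequence of $\widehat{\N}$ from $s_0$, once we replace the creating firing $(prd(v),t)$ by $(prd(v),t^r)$. (One must check, using the operational semantics, that removing these positions does not invalidate later firings: the only interaction between the subtree and the outside is via $W^-(t)$ consumed at creation and $W^+(t)$ produced at the cut, and both are accounted for by $t^r$; markings of all other threads, and the tree structure above $v$ and on sibling branches, are untouched.) The resulting sequence $\sigma'$ reaches the same final state $s$ (the subtree at $v$ was not part of $s$ anyway), uses one fewer bad thread, and Proposition~\ref{prop:equivreach}'s underlying correspondence guarantees it is a legitimate $\widehat{\N}$-sequence. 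By induction we obtain an omniscient $\widehat{\sigma}$ with $s_0\xrightarrow{\widehat{\sigma}}_{\widehat{\N}} s$.

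\textbf{Main obstacle.} The delicate point is the excision argument: one needs a clean lemma stating that the firings of $\sigma$ occurring strictly inside a subtree $s_v$ can be ``factored out'' — that their removal yields a valid sequence and that the only residual effect on the outside world is the $W^-(t)/W^+(t)$ bookkeeping at creation and deletion. This requires care with the inductive definition of the firing rule (the last clause, which threads a firing through a child), and with the fact that once $v$ is deleted its vertex name is never reused, so there is no name clash. Establishing this reordering/factorization lemma rigorously — ideally proven once by induction on the tree structure — is where the real work lies; the rest is routine.
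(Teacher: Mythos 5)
Your overall strategy is the paper's: replace the creation of a doomed thread together with the firings inside its subtree and the closing cut by a single firing of the elementary transition $t^r$, argue by monotonicity that the surrounding sequence stays enabled, and iterate. However, there is a concrete step that fails as stated. You claim that for an innermost bad thread $v$, created by $t\in T_{ab}$, the projection of $\sigma$ onto $Des_{\sigma}(v)$ reaches $\emptyset$ from $s[v,\Omega(t)]$, ``hence $t\in T_{ret}$.'' This is only true when $v$ is erased by a cut transition fired \emph{in $v$ itself}. A created thread can also disappear because a cut transition is fired in a strict ancestor (possibly an initial thread): for instance the root creates $v$, $v$ fires a few elementary transitions, and then the root fires a cut, erasing everything. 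In that scenario the projected run of the subtree ends in a nonempty state, $t$ need not belong to $T_{ret}$ at all (take $\Omega(t)$ from which no cut transition can ever be enabled), and the transition $t^r$ you want to substitute simply does not exist in $\widehat{\N}$. Consequently your induction, as set up, terminates with a sequence in which no created thread dies by its own cut, but created threads swallowed by ancestors' cuts may remain, so the resulting sequence need not be omniscient.

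The gap is easy to close, and doing so highlights why the paper phrases its choice the way it does. In the ancestor-cut case no substitution is needed: delete the creating firing $(prd(v),t)$ and all firings inside $Des_{\sigma}(v)$ and keep everything else; this is sound because the parent of $v$ is erased by the very same ancestor cut and therefore never receives $W^+(t)$ in the original run, while not consuming $W^-(t)$ only increases the intermediate markings of the parent, so by monotonicity the rest of $\sigma$ and the final state are unaffected. The paper's proof avoids your pitfall by selecting an extremal thread $u$ that creates a non-final $v$ which ``disappears by a matching cut transition $(v,t_\tau)\in\sigma$,'' i.e. it only ever performs the $t^r$-substitution when the matching cut is fired in $v$ itself (it places $t^r$ at the position of the cut, whereas you place it at the creation; both placements are fine by monotonicity). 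A further cosmetic point: with the reading ``bad $=$ created and non-final,'' an innermost bad thread can have no strict descendants at all, since every created strict descendant of a non-final thread is itself non-final; so your ``subtree'' excision in fact degenerates, and your factorization lemma is really needed for the block between creation and the matching cut, exactly as in the paper.
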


\begin{proof}
	Assume that we have an extremal thread $u$ which fires $ t\in T_{ab} $ creating a non final thread $ v $ that disappears by a matching cut transition $ (v,t_\tau)\in \sigma $ for $t_\tau\in T_\tau$.
	One builds $\sigma'$ by (1) deleting from $\sigma $ the transition $ (u,t) $, (2) deleting all the firings from $ Des_{\sigma}(v) $ in $\sigma$ and (3) replacing the transition $ (v,t_\tau) $ by $(u,t^r)$. We claim that $s\xrightarrow{\sigma}s' $. Indeed
	in $u$ the transition $(u,t^r)$ has the same incidence in $u$
	as the transition  $(u,t)$ followed by $(v,t_\tau)$ (`anticipating' $(v,t_\tau)$ only add tokens in intermediate states)
	and the other deleted firings are performed by threads in $Des_{\sigma}(v) $ which do not exist anymore.
	By taking $\widehat\sigma$ the sequence obtained by iterating the process, we get the omniscient sequence.
\end{proof}

\noindent
In order to recover from a sequence in $\widehat{\N}$ a sequence
in $\N$,  for every $ t\in T_{ret}$ one has to simulate the firings of a transition
$t^r$ by sequence $\sigma_t$. Therefore bounding the length of $\sigma_t$ is a critical issue.
Recall that in~\cite{Rac78}, Rackoff showed that the coverability  problem for Petri nets
belongs to $\EXPSPACE$.
More precisely, he proved that if there exists a covering sequence,
then there exists a `short' one:
\begin{theorem}[Rackoff \cite{Rac78}]
	\label{thm:Rackoff covering path}Let
	$\N$  be a Petri net,  $m_{ini}$,
	$m_{tar}$ be markings and $\sigma$ be a firing sequence such that
	$m_{ini}\xrightarrow{\sigma}m\geq m_{tar}$.
	Then there exists a sequence $\sigma'$
	such that $m_{ini}\xrightarrow{\sigma'}m'\geq m_{tar}$ with
	$\left|\sigma'\right|\leq 2^{2^{cn\log n}}$
	for some constant $c$ and $n$ being the size of $(\N,m_{tar})$.
	
\end{theorem}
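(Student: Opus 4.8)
The plan is to reproduce Rackoff's argument: an induction on the number of places, carried out in a relaxed semantics in which only a chosen subset of coordinates is required to stay non-negative. Write $P$ for the set of places of $\N$ and $k=|P|$; let $A$ be the largest entry of $W^-$ and $W^+$ and $M=\max_p m_{tar}(p)$, so that $A,M\le 2^n$. For a subset $S\subseteq P$, call a word of transitions an \emph{$S$-covering sequence from a marking $m$} if, when fired from $m$ with the Petri net rule \emph{but checking the firability condition $W^-(t)\le m$ and the non-negativity of intermediate markings only on the places of $S$} (places outside $S$ may become negative and never block a transition), it reaches a marking $m'$ with $m'(p)\ge m_{tar}(p)$ for all $p\in S$. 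Let $f(i)$ be the supremum, over all $S$ with $|S|=i$ and all markings $m$ admitting an $S$-covering sequence, of the length of a shortest one; finiteness of $f(i)$ will come out of the induction. A $P$-covering sequence is exactly an ordinary covering sequence, so since $\sigma$ witnesses that $m_{ini}$ admits one, it suffices to prove $f(k)\le 2^{2^{cn\log n}}$, and then take $\sigma'$ to be the extracted shortest $P$-covering sequence.

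The crucial observation is that in the $S$-semantics, both whether a transition is enabled and whether the target is met depend only on the $S$-coordinates of the current marking. Consequently, if two markings occurring along an $S$-covering sequence agree on $S$, the segment between them has null effect on the $S$-coordinates and can be deleted: the result is still an $S$-covering sequence from the same initial marking, since the $S$-projection of every surviving marking is unchanged, hence all guards and the final target stay satisfied. I would use this to establish, by induction on $i$, the recurrence $f(i)\le B(i)^i+f(i-1)$, where $B(i):=M+A\cdot f(i-1)+1$, with base case $f(0)=0$ (the empty sequence vacuously covers the empty place set).

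For the inductive step, let $\pi$ be a shortest $S$-covering sequence from $m$, with $|S|=i$. Follow $\pi$ up to the first marking at which some place $q\in S$ reaches value $B(i)$, and split $\pi=\pi_1\pi_2$ there (if this never happens, take $\pi_1=\pi$, $\pi_2$ empty, $q$ arbitrary). Every marking strictly before the end of $\pi_1$ has all $S$-coordinates in $[0,B(i))$, so by deleting duplicates in the $S$-projection, $\pi_1$ shortens to length at most $B(i)^i$ while still ending in a marking with $q$-value $\ge B(i)$. As for $\pi_2$: dropping $q$ from the monitored set, $\pi_2$ is an $(S\setminus\{q\})$-covering sequence from the end of $\pi_1$, so by the induction hypothesis (applied to this $(i-1)$-subset and this starting marking) it may be replaced by one $\pi_2'$ of length at most $f(i-1)$. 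Along $\pi_2'$ the $q$-coordinate decreases by less than $A\cdot f(i-1)<B(i)$, so it stays non-negative and ends above $m_{tar}(q)$; hence $\pi_1\pi_2'$ is an $S$-covering sequence from $m$ of length at most $B(i)^i+f(i-1)$, which proves the recurrence (and, inductively, that each $f(i)$ is finite).

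It remains to unwind the recurrence. Since $A,M\le 2^n$, one has $B(i)\le 2^{n+2}(f(i-1)+1)$, hence $\log_2 f(i)\le n\log_2 f(i-1)+O(n^2)$; with $f(0)=0$ this yields $\log_2 f(k)=O(k\,n^{k+1})$, and as $k\le n$ this is $2^{O(n\log n)}$, so $f(k)\le 2^{2^{cn\log n}}$ for a suitable constant $c$. I expect the delicate point to be the choice of $B(i)$ together with the split $\pi=\pi_1\pi_2$: one must check that shortening $\pi_1$ does not destroy the property ``$q$ is at least $B(i)$ at the end of $\pi_1$'' (it does not, since loop deletion preserves the $S$-projection of the surviving markings), and that $B(i)$ is large enough that over the at most $f(i-1)$ steps of the recursively shortened tail $\pi_2'$ the coordinate $q$ can neither become negative nor drop below $m_{tar}(q)$. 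Everything else — the symmetry over which $i$-subset $S$ is monitored, and the arithmetic of the recurrence — is routine.
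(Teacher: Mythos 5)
Your proposal is a correct reconstruction of Rackoff's original argument (induction on the number of monitored places in a relaxed semantics, with the split at the first marking where some monitored coordinate exceeds the threshold $B(i)=M+A\cdot f(i-1)+1$ and the recurrence $f(i)\le B(i)^i+f(i-1)$), and the arithmetic indeed yields the doubly exponential bound $2^{2^{cn\log n}}$ independent of the initial marking, exactly as the paper uses it. The paper itself gives no proof of this statement --- it is quoted verbatim from Rackoff~\cite{Rac78} --- so your write-up simply supplies the standard proof of the cited result and takes the same approach as that source.
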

A surprising consequence of Rackoff's proof is that the length of the minimal coverability sequence does not depend on the initial marking of the net.

\begin{proposition}
	\label{prop:Length of an abstract transition in RPN}
	Let $\N$ be an RPN and $t\in T_{ret}$.
	Then the returning sequence $\sigma_t$
	fulfills $|\sigma_t|\leq 2^{\cdot2^{dn\log n}}$ for some constant $d$ and $n=size(\N)$.
\end{proposition}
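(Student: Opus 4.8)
The plan is to bound, for every $t'\in T_{ret}$, the quantity $\ell(t')$ equal to the minimal length of a returning sequence $s[r,\Omega(t')]\xrightarrow{\sigma}\emptyset$ (so that $|\sigma_t|=\ell(t)$), arguing by induction along the ordering of $T_{ret}$ given by the values $\ell(\cdot)$. The load-bearing fact is a strict-decrease property. Fix a \emph{shortest} returning sequence $\sigma$ for $t'$; its final state is $\emptyset$, reached by a cut transition $t_\tau$ fired by the root. In a shortest sequence no thread ever annihilates a thread it created (otherwise that creation together with the whole subtree-computation below it could be deleted, contradicting minimality), so every abstract transition $t''$ fired along $\sigma$ creates a child that is later removed by its own cut, and the firings taking place strictly inside that child's subtree form, read in isolation, a returning sequence for $t''$; this sub-sequence omits at least the firing of $t''$ itself, hence $t''\in T_{ret}$ and $\ell(t'')<\ell(t')$.

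Next I would flatten the recursion into a Petri net. Let $\N_{el}$ (a fragment of $\widehat{\N}_{el}$) be the Petri net on places $P$ with elementary transitions $T_{el}\cup\{t^r\mid t\in T_{ret}\}$, where $t^r$ has incidences $W^-(t)$ and $W^+(t)$; its size is linear in $n=size(\N)$. From the shortest returning sequence $\sigma$ for $t'$, keep only the firings performed by the root and, for every abstract transition $t''$ it fires, collapse the pair ``creation of the child'' / ``later return of that child'' into the single transition $(t'')^r$; performing each return immediately after the matching creation only raises the intermediate markings of the root, so this yields a valid firing sequence of $\N_{el}$ from $\Omega(t')$ to some $m\ge W^-(t_\tau)$ which uses only avatars $(t'')^r$ with $\ell(t'')<\ell(t')$.

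I would then invoke Rackoff's theorem (Theorem~\ref{thm:Rackoff covering path}) on the subnet of $\N_{el}$ with transitions $T_{el}\cup\{(t'')^r\mid t''\in T_{ret},\ \ell(t'')<\ell(t')\}$, initial marking $\Omega(t')$ and target $W^-(t_\tau)$: since the bound there depends only on the size of the net and of the target, both $O(n)$, and not on the initial marking, there is a covering sequence $\tau$ with $|\tau|\le B:=2^{2^{cn\log n}}$. Lifting $\tau$ back to $\N$ (elementary transitions unchanged, each avatar $(t'')^r$ replaced by ``fire $t''$, then run $\sigma_{t''}$ inside the freshly created child'', and finally firing $t_\tau$ at the root) produces a genuine returning sequence for $t'$, whence $\ell(t')\le |\tau|+\sum_{(t'')^r\in\tau}\big(1+\ell(t'')\big)+1\le 2B+1+B\cdot\max\{\ell(t'')\mid t''\in T_{ret},\ \ell(t'')<\ell(t')\}$.

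Finally I close the recursion: writing $T_{ret}=\{t_1,\dots,t_m\}$ with $\ell(t_1)\le\dots\le\ell(t_m)$ and $m\le|T_{ab}|\le n$, the previous inequality gives $\ell(t_j)\le 2B+1+B\,\ell(t_{j-1})$ with $\ell(t_0):=0$, which unrolls to $\ell(t_j)\le(2B+1)\sum_{i<j}B^i\le 3nB^{n}$. Since $B^{n}=2^{n\cdot 2^{cn\log n}}$ and $n\cdot 2^{cn\log n}\le 2^{2cn\log n}$ for $n$ large, we get $|\sigma_t|=\ell(t)\le 3nB^{n}\le 2^{2^{dn\log n}}$ for a suitable constant $d$ (enlarged if needed to absorb small values of $n$). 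The main obstacle is precisely the strict-decrease property on which the induction rests: one must argue carefully that in a shortest returning sequence a child's subtree really is handled by a returning sequence and that this sub-sequence is strictly shorter, since it is exactly this acyclicity that keeps $B$ from being raised to an unbounded power; the reorganisation of nested calls into in-line elementary transitions and the arithmetic showing that $(2^{2^{cn\log n}})^{n}$ still fits under $2^{2^{dn\log n}}$ are then routine.
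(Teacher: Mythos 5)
Your proposal is correct and follows essentially the same route as the paper's proof: order $T_{ret}$ by the length of the shortest returning sequence, observe that a shortest returning sequence only contains abstract firings matched by the created child's own cut (so the nested returning sequences are strictly shorter), flatten the root's behaviour into elementary avatars $t^r$ (the paper's omniscient-sequence argument), apply Rackoff's bound, which is independent of the initial marking, and unroll the recursion to get a bound of the form $2^{O(n)\cdot 2^{cn\log n}}\leq 2^{2^{dn\log n}}$. Your restriction of Rackoff to the subnet containing only the already-bounded avatars, and the slightly different recursion bookkeeping, are only cosmetic deviations from the paper's argument.
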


\begin{proof}
	Let us enumerate
	$T_{ret}=\{t_1,\ldots,t_K\}$ in such a way that $i<j$ implies $|\sigma_{t_i}|\leq|\sigma_{t_j}|$.
	Observe first that the shortest returning sequences do not include
	firings of abstract transitions not followed by a matching cut transition
	since it could be omitted as it only deletes tokens in the thread.
	We argue by induction on $k\leq K$ that:
	\[
	|\sigma_{t_{k}}|<2^{k\cdot2^{cn\log n}} \qquad \mbox{where } c \mbox{ is the Rackoff constant}
	\]
	For $k=1$, we know that $\sigma_{t_{1}}$ has a minimal length over all
	returning sequences. Hence
	there are no cuts
	in $\sigma_{t_{1}}$ except the last one. Due to the above observation,
	$\sigma_{t_{1}}$ only includes firing of elementary transitions.
	Thus the Rackoff bound of Theorem~\ref{thm:Rackoff covering path} applies
	for a covering of some final marking.
	
	\noindent
	Assume that the result holds for all $i<k$.
	Due to the requirement on lengths,  $\sigma_{t_{k}}$
	only includes cuts from threads created by $t_{i}\in T_{ret}$ with $i<k$.
	Thus by Proposition~\ref{prop:omniciant}
	we get a sequence $\widehat{\sigma}_{t_{k}}\cdot(r,t_\tau)$
	in $\widehat{\N}$ (where $r$ is the root and $ t_\tau\in T_\tau $). The sequence
	$ \widehat{\sigma}_{t_{k}} $ consists of only elementary transitions
	and does not contain any transition  $t_i^r$ with $i\geq k$.
	The marking of $r$ reached by $ \widehat{\sigma}_{t_{k}} $ covers
	some final marking, hence	 by Theorem~\ref{thm:Rackoff covering path}
	there exists a covering sequence $\widehat{\sigma}_{t_{k}}'$ such that $|\widehat{\sigma}_{t_{k}}'|\leq2^{2^{cn\log n}}$.
	Since $\widehat{\sigma}_{t_{k}}$ does not contain firing of $t_i^r$ with $i\geq k$
	this also holds for $\widehat{\sigma}_{t_{k}}'$. Substituting any firing of $t_i^r$
	by 	$\sigma_{t_i}$, one gets a corresponding sequence $\sigma_{t_k}'$	in $\N$. Using the induction hypothesis, one gets that
	the length of $\sigma_{t_k}'$ fulfills:
	\[
	|\sigma_{t_k}'|\leq|\widehat{\sigma}_{t_{k}'}|2^{(k-1)\cdot2^{cn\log n}}\leq2^{2^{cn\log n}}\cdot 2^{(k-1)\cdot2^{cn\log n}}\leq2^{k\cdot2^{cn\log n}}
	\]
	From minimality of $\sigma_{t_{k}}$,
	one gets $|\sigma_{t_{k}}|\leq|\sigma_{t_k}'|\leq2^{k\cdot2^{cn\log n}}$
	which concludes the proof since
	\[
	\max_{t\in T_{ret}}\{ |\sigma_{t}|\}\leq2^{|T_{ret}|
		\cdot2^{cn\log n}}\leq2^{n2^{cn\log n}}\leq2^{2^{2cn\log n}}.
	\]

\vspace*{-7mm}
\end{proof}

Using the previous proposition, we can compute $T_{ret}$ in exponential space, by enumerating for all
abstract transitions, all firing sequences of sufficient length and checking whether they lead to the empty tree.

Below are immediate corollaries from the previous  propositions:
\begin{corollary}\label{col:bound_on_translation_of_N-hat_to_N}	
	Let $\N$ be a marked RPN.
	Then for all $s\xrightarrow{\widehat{\sigma}}_{\widehat{\N}}s'$, there exists $s\xrightarrow{\sigma}_{\N}s'$
	such that $|\sigma|\leq 2^{\cdot2^{dn\log n}} |\widehat{\sigma}|$ for some constant $d$ and $n=size(\N)$.
\end{corollary}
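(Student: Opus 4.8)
The statement is the quantitative counterpart of the remark preceding it: $\widehat{\N}$ differs from $\N$ only by the extra elementary transitions $t^r$ ($t\in T_{ret}$), and a firing of $t^r$ can be simulated in $\N$ by firing $t$ and then the returning sequence $\sigma_t$. So the plan is to process $\widehat{\sigma}$ from left to right and locally replace every firing of a transition $t^r$ by a short block of firings of $\N$ with the same net effect on the firing thread, keeping all other firings verbatim. Concretely, write $\widehat{\sigma}=\widehat{\sigma}_1\,(v,t^r)\,\widehat{\sigma}_2$ with $t\in T_{ret}$; in $\N$, instead of $(v,t^r)$ we fire $(v,t)$, which removes $W^-(t)=\widehat{W}^-(t^r)$ from $v$ and creates a fresh child $w\in\mathcal V\setminus V_s$ with marking $\Omega(t)$, followed by the returning sequence $(w,\sigma_t)$, which by definition of $T_{ret}$ drives the subtree rooted at $w$ to $\emptyset$ and thereby returns $W^+(t)=\widehat{W}^+(t^r)$ tokens to $v$. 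Since $\sigma_t$ acts only inside the subtree of $w$, and $w$ together with everything it spawns is deleted before $\widehat{\sigma}_2$ resumes, the state after this block coincides with the state after $(v,t^r)$; moreover all intermediate markings are well defined because from the point of view of $v$ the tokens $W^-(t)$ are consumed first and $W^+(t)$ added only at the end (the very argument behind Proposition~\ref{prop:omniciant}). Iterating this substitution over all occurrences of transitions $t^r$ in $\widehat{\sigma}$ (formally, by induction on their number) yields a valid firing sequence $s\xrightarrow{\sigma}_{\N}s'$.

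For the length estimate, each occurrence of some $(v,t^r)$ contributes length $1$ to $\widehat{\sigma}$ and is replaced by the block $(v,t)\,(w,\sigma_t)$ of length $1+|\sigma_t|$, while every other firing of $\widehat{\sigma}$ is kept as a single firing. By Proposition~\ref{prop:Length of an abstract transition in RPN} we have $|\sigma_t|\leq 2^{2^{2cn\log n}}$, so $1+|\sigma_t|\leq 2^{2^{dn\log n}}$ for a suitable constant $d$ with $n=size(\N)$. Summing over the at most $|\widehat{\sigma}|$ blocks gives $|\sigma|\leq 2^{2^{dn\log n}}\,|\widehat{\sigma}|$, as claimed.

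The only delicate point, and the part I would actually write out in full, is checking that the substitution produces a genuine run of $\N$ rather than a naive concatenation of incidences: one must verify that inserting the block $(v,t)\,(w,\sigma_t)$ into the context does not break any later firing. This holds because the temporary thread $w$ is chosen fresh (threads are never reused), is never referenced outside its own block, and is erased together with all its descendants at the end of the block, so the tree structure and all markings seen by $\widehat{\sigma}_2$ are exactly those produced by the original $(v,t^r)$. Everything else is bookkeeping.
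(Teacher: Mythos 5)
Your proposal is correct and follows exactly the paper's (implicit) argument: the corollary is stated as immediate from the remark that any firing of $t^r$ in $\widehat{\N}$ can be replaced by $t$ followed by the returning sequence $\sigma_t$ in $\N$, combined with the bound $|\sigma_t|\leq 2^{2^{dn\log n}}$ from Proposition~\ref{prop:Length of an abstract transition in RPN}. Your additional care in checking that each substituted block is a genuine run (fresh thread, erased before the suffix resumes) is exactly the bookkeeping the paper leaves to the reader.
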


\begin{corollary}\label{col:build_N_hat}
	Given an RPN $\N$ one can build $\widehat{\N}$ in exponential space.
\end{corollary}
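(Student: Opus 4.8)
The plan is to isolate the only non-routine ingredient of $\widehat{\N}$ — the set $T_{ret}$ of returning transitions — and to show that $T_{ret}$ can be produced by a bounded iteration whose each step amounts to a Petri net coverability query, and therefore lies in $\EXPSPACE$ by Rackoff's theorem (Theorem~\ref{thm:Rackoff covering path}, \cite{Rac78}). First I would note that every other component of $\widehat{\N}$ is obtained by copying: $\widehat{T}_{ab}=T_{ab}$, $\widehat{T}_{\tau}=T_\tau$, $\widehat{\Omega}=\Omega$, and each new transition $t^r$ simply duplicates the incidence of $t$. In particular $|\widehat{T}_{el}|=|T_{el}|+|T_{ret}|\le |T_{el}|+|T_{ab}|$, so $\widehat{\N}$ has size linear in $n=size(\N)$; hence once $T_{ret}$ is available it is written down in linear space, and the whole task reduces to computing $T_{ret}$ in exponential space.

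To compute $T_{ret}$ I would build the increasing chain $\emptyset=R_0\subseteq R_1\subseteq\cdots\subseteq T_{ab}$ defined by $R_{j+1}=R_j\cup\{t\in T_{ab}\mid \mathcal P_j,\text{ started from }\Omega(t),\text{ covers }W^-(t_\tau)\text{ for some }t_\tau\in T_\tau\}$, where $\mathcal P_j$ denotes the ordinary Petri net on places $P$ with transitions $T_{el}\cup\{t'^r\mid t'\in R_j\}$ and incidences inherited from $\widehat{\N}$. The chain stabilises after at most $|T_{ab}|$ steps and I claim its limit is $T_{ret}$. For $R_j\subseteq T_{ret}$ I argue by induction on $j$: any firing of $t'^r$ with $t'\in R_j\subseteq T_{ret}$ can be replaced by the firing of $t'$ followed by its returning sequence $\sigma_{t'}$, whose net effect on the root coincides with the incidence of $t'^r$ while all intermediate states merely gain tokens and a transient child (exactly the observation used before Proposition~\ref{prop:equivreach}); consequently a $\mathcal P_j$-run from $\Omega(t)$ reaching a marking $\ge W^-(t_\tau)$ lifts to a run of $\N$ from $s[r,\Omega(t)]$ to a state whose root is childless and marked $\ge W^-(t_\tau)$, after which firing $t_\tau$ at the root gives $\emptyset$, so $t\in T_{ret}$. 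For the converse I would reuse the analysis inside the proof of Proposition~\ref{prop:Length of an abstract transition in RPN}: enumerate $T_{ret}=\{t_1,\dots,t_K\}$ with $|\sigma_{t_i}|$ non-decreasing and prove $t_k\in R_k$ by induction on $k$; minimality of $\sigma_{t_k}$ forces it to dismantle only subtrees created by transitions $t_i$ with $i<k$, so Proposition~\ref{prop:omniciant} turns it into a sequence $\widehat{\sigma}_{t_k}\cdot(r,t_\tau)$ of $\widehat{\N}$ in which $\widehat{\sigma}_{t_k}$ uses only elementary transitions and transitions $t_i^r$ with $i<k$ — these belong to $\mathcal P_{k-1}$ by the induction hypothesis — and drives $\Omega(t_k)$ to a marking covering $W^-(t_\tau)$; hence $t_k\in R_k$.

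For the complexity, each iteration issues at most $|T_{ab}|\cdot|T_\tau|=O(n^2)$ coverability queries on Petri nets of size $O(n)$, each decidable in space $2^{O(n\log n)}$ by Theorem~\ref{thm:Rackoff covering path}; there are at most $|T_{ab}|\le n$ iterations, and an intermediate set $R_j$ occupies only $O(n)$ bits, so reusing the work tape between queries keeps the computation of $T_{ret}$ within $\EXPSPACE$, after which $\widehat{\N}$ is emitted in linear space. The part I expect to require the most care when writing it out is the correctness of this iterative characterisation of $T_{ret}$ — specifically the inclusion $T_{ret}\subseteq\bigcup_j R_j$, which relies on the already-established structural facts that a shortest returning sequence only removes subtrees spawned by strictly shorter returning transitions and on the omniscience transformation of Proposition~\ref{prop:omniciant}.
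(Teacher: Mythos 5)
Your proof is correct, but it takes a genuinely different route from the paper. The paper obtains this corollary directly from Proposition~\ref{prop:Length of an abstract transition in RPN}: since every transition of $T_{ret}$ admits a returning sequence of length at most $2^{2^{dn\log n}}$, it computes $T_{ret}$ by enumerating, for each abstract transition, all firing sequences up to that length and checking whether they reach the empty tree, after which writing down $\widehat{\N}$ is immediate. You instead bypass the doubly exponential length bound and compute $T_{ret}$ by a Kleene-style saturation $R_0\subseteq R_1\subseteq\cdots$, where each round asks polynomially many coverability questions on Petri nets of size $O(n)$ built from $T_{el}$ and the already-certified transitions $t'^r$; soundness follows from the substitution $t'^r\mapsto t'\sigma_{t'}$ (the same observation the paper uses before Proposition~\ref{prop:equivreach}), and completeness reuses exactly the structural analysis inside the proof of Proposition~\ref{prop:Length of an abstract transition in RPN} (shortest returning sequences only cut subtrees spawned by strictly shorter returning transitions, plus the omniscience transformation of Proposition~\ref{prop:omniciant}), so your induction $t_k\in R_k$ is sound. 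What each approach buys: the paper's argument is a one-sentence consequence of material it needs anyway (the length bound is also used for Corollary~\ref{col:bound_on_translation_of_N-hat_to_N} and Theorem~\ref{thm:cut_problem}), whereas your fixpoint makes the space analysis more transparent and self-contained --- you only ever store markings of ordinary Petri nets and invoke Rackoff (Theorem~\ref{thm:Rackoff covering path}) on nets of linear size, and you avoid having to argue that simulating a doubly-exponentially long RPN firing sequence (whose states may contain very many threads) fits in exponential space, a point the paper's enumeration sketch leaves implicit. Your version is thus a valid, and in this respect more careful, alternative proof of the corollary.
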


In order to mimic the behavior of a  specific thread in a firing sequence (which will be useful later on),
we introduce the Petri net $\widehat{\N}_{el}$. The size of $\widehat{\N}_{el}$ is also linear w.r.t. the size of $\N$.

\begin{definition}\label{def:N_el}
	Let $\N$ be an RPN. Then the Petri net $\widehat{\N}_{el}=\left<P,\widehat{T}_{el},\widehat{W}^+_{el},\widehat{W}^-_{el}\right>$  is defined by:
	\begin{itemize}
				\item $\widehat{T}_{el}=\widehat{T}\setminus T_\tau$;
		\eject
		\item For all $t\in \widehat{T}_{el}\setminus T_{ab}$,  $\widehat{W}^{-}_{el}(t)=\widehat{W}^{-}(t)$
		and  $\widehat{W}^{+}_{el}(t)=\widehat{W}^{+}(t)$;  \vspace*{-1mm}
		\item For all $t\in T_{ab}$,  $\widehat{W}^{-}_{el}(t)=\widehat{W}^{-}(t)$ and  $\widehat{W}^+_{el}(t)=0$.
			\end{itemize}
Figure~\ref{fig:RPN_hat} has an example of an RPN $\N$ and its $\widehat{\N}_{el}$.
\end{definition}
As for  $\widehat{\N}$, one can build $\widehat{\N}_{el}$ in exponential space.

\medskip \noindent{\bf Observation.} The main (straightforward) property of $\widehat{\N}_{el}$ is the following one.
Let $\sigma\in \widehat{T}_{el}^*$ with $n_t$ the number of occurrences of $t$ in $\sigma$.
Then $m_0\xrightarrow{\sigma}_{\widehat{\N}_{el}} m$
if and only if $s[r,m_0]\xrightarrow{(r,\sigma)}_{\widehat{\N}} s$ with $V_s=\{r\}\cup \bigcup_{t\in T_{ab}} \{v_{t,1},\ldots,v_{t,n_t}\}$, $M_s(r)=m$
and for all $v_{t_i}$,  $r \xrightarrow{W^+(t)}_s v_{t_i}$  and $M_s(v_{t_i})=\Omega(t)$.

\section{Expressiveness}
\label{sec:expressiveness}

The expressiveness of a formalism may be defined by the family
of languages that it can generate. In~\cite{HP-icatpn99}, the expressiveness
of RPNs was studied using reachability languages. However, using reachability languages as specification languages has an
inconvenient since the emptiness problem for these languages is not elementary \cite{DBLP:conf/stoc/CzerwinskiLLLM19} for Petri nets,
so it is also not elementary, at least, for RPN. We propose to characterize the expressive
power of RPN by studying the family of coverability languages
which is sufficient to express most of the usual reachability properties since many of them reduce to check that no reachable state
may cover a bad marking in a thread.

%

The characterization of the expressive power by means of covering languages has been done for Petri nets (studied in the book of Peterson  \cite{nla.cat-vn2956435}), and more recently, for Well Structured Transition Systems (WSTS) \cite{DBLP:journals/acta/GeeraertsRB07} and for monotonic extensions of Petri nets like reset-transfer Petri nets, $\nu$-Petri nets, unordered Petri nets \cite{BFHR-icomp13,DBLP:journals/tcs/DelzannoR13}.
More properties are decidable for VASS covering languages than for VASS reachability languages.
For instance, universality for reachability languages is undecidable for 1-VASS \cite{DBLP:journals/jcss/ValkV81} and then co-finiteness is also undecidable but these two properties are both decidable for VASS covering languages \cite{figueira:hal-02193089}; moreover, it is Ackermann-complete for 1-VASSs \cite{DBLP:conf/rp/HofmanT14}. Generally, the universality of both reachability and coverability of WSTS languages is undecidable \cite{DBLP:journals/acta/GeeraertsRB07}.

So we equip any transition $t$ with a \emph{label} $\lambda(t)\in \Sigma \cup \{\varepsilon\}$
where $\Sigma$ is a finite alphabet and $\varepsilon$ is the empty word.
The labelling is extended to transition sequences in the usual way.
Thus given a labelled marked RPN $(\N,s_0)$ and a finite subset of states $S_f$,
the (coverability) language $\mathcal L_C(\N,s_0,S_f)$ is defined by:
$$\mathcal L_C(\N,s_0,S_f)=\{\lambda(\sigma) \mid \exists\ s_0 \xrightarrow{\sigma} s\succeq s_f\wedge s_f \in S_f\}$$
i.e. the set of labellings for sequences covering some state of $S_f$ in $\N$.

\medskip
We now study the family of RPN coverability languages both from the point of view of expressiveness and closure under multiple operations.

\begin{restatable}{proposition}{closedbyunion}
	\label{prop:closedunion}
	The family of coverability languages of RPNs is closed under union.
\end{restatable}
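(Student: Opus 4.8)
The plan is to take two labelled marked RPNs $(\N_1, s_0^1, S_f^1)$ and $(\N_2, s_0^2, S_f^2)$ over a common alphabet $\Sigma$ and build a single RPN whose root nondeterministically commits to simulating one of the two. First I would assume without loss of generality that the place sets $P_1, P_2$ and transition sets are disjoint. The natural construction is to add a fresh "launcher" place $p_{\mathit{start}}$ carrying one token in the initial state, together with two fresh abstract transitions $t_1, t_2$ (labelled $\varepsilon$) that each consume that token and create a child thread whose initial marking is $M_{s_0^i}(r_{s_0^i})$ for the respective $i$; the root then stays inert forever with empty marking. Because the firing rule for cut transitions produces tokens in the parent as specified by the abstract transition that created the child, I would give $t_i$ forward incidence $\mathbf{0}$ so that a cut in the child returns nothing useful to the (dead) root. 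There is one wrinkle: the initial state $s_0^i$ may itself be a nontrivial tree, not a single vertex. I would handle this either by invoking Proposition~\ref{col:rooted} to reduce each $(\N_i, s_0^i)$ to its rooted version $(\rooted{\N_i}, \rooted{s_0^i})$ first, or by directly seeding $t_i$ with an $\Omega$ that encodes the whole subtree via the $p_v$-place trick of the rooted construction — the former is cleaner.

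Next I would define the target set of the combined net. A state of $\N_1 \uplus \N_2$ "from above" covers something iff the designated child subtree covers a state of $S_f^i$; so I would set the new target set to be $\{\, (\mathbf{0}, \{(\mathbf{0}, s_f)\}) \mid s_f \in S_f^1 \cup S_f^2 \,\}$, i.e. abstract states consisting of a root with null marking having a single child which is a copy of some $s_f \in S_f^i$ with an incoming edge labelled $\mathbf{0}$ (the forward incidence of $t_i$). By the definition of $\preceq$, an abstract state $s$ covers such a target iff $s$ contains, as a subtree rooted at some vertex, a copy of $s_f$ with markings dominating those of $s_f$ — which is exactly what happens in the reachable states of the combined net once the child has reached a covering configuration of $\N_i$.

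The verification then splits into the two inclusions. For $\mathcal L_C(\N_1,\cdot) \cup \mathcal L_C(\N_2,\cdot) \subseteq \mathcal L_C(\text{combined})$: given $s_0^i \xrightarrow{\sigma}_{\N_i} s \succeq s_f$, I prepend the firing of $t_i$ (which has label $\varepsilon$, so $\lambda(t_i\sigma) = \lambda(\sigma)$) and replay $\sigma$ inside the created child thread — the root is untouched and the child evolves exactly as in $\N_i$ by the operational semantics, reaching a state whose designated subtree covers $s_f$. Conversely, any firing sequence in the combined net must begin (after possibly some no-op, but there is nothing else fireable) with $t_1$ or $t_2$, say $t_i$; since $P_1 \cap P_2 = \emptyset$ and the root is permanently marked $\mathbf{0}$, no transition of $\N_{3-i}$ can ever fire, so the remainder of the sequence is (a projection of) a firing sequence of $\N_i$ starting from $M_{s_0^i}(r)$, and covering a target of the combined net forces the child subtree to cover some $s_f \in S_f^1 \cup S_f^2$; disjointness of alphabets of transitions (not of $\Sigma$!) is not needed here, only disjointness of places. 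Hence the label lies in $\mathcal L_C(\N_i,\cdot)$.

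**Main obstacle.** The only genuinely delicate point is making sure a "cross-contamination" between the two copies is impossible and that the root cannot do anything after choosing a branch — this is why $\rooted{W}^+(t_i) = \mathbf{0}$ and the root starts empty matter, and why I would phrase the target set in terms of the child subtree rather than the whole state. A secondary nuisance is the shape of the initial states $s_0^i$: if one does not first reduce to the rooted versions, one must replicate inside $t_i$'s $\Omega$ the auxiliary-place gadget of Definition after Figure~\ref{fig:RPN_to_rooted_rpn}, which is routine but clutters the argument; invoking Proposition~\ref{col:rooted} sidesteps it entirely. Everything else is a direct appeal to the operational semantics and the definition of $\preceq$.
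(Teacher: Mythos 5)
Your overall strategy (reduce to rooted initial states via Proposition~\ref{col:rooted}, then let a fresh root nondeterministically launch one of the two nets) differs from the paper's, which keeps the chosen net's simulation in the root and uses two extra ``lock'' places $p,p'$; your forward inclusion is indeed fine. The genuine gap is in the reverse inclusion, where you rely on disjointness of the place sets alone to keep the two copies separate, and this is not enough under the paper's definition of RPN. First, a transition $t$ of $\N_{3-i}$ with $W^-(t)=\mathbf{0}$ (null columns of $W^-$ are allowed) is fireable in \emph{every} thread, including those of the $\N_i$-simulation, so its label can pollute a covering sequence: if $\N_2$ has an elementary transition with empty pre- and postcondition labelled $c$ and an uncoverable $S^2_f$, while $\N_1$ has no transitions and $S^1_f=\{s_0^1\}$, then $c$ is in the language of your combined net but not in $\mathcal L_1\cup\mathcal L_2$. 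Second, since coverability uses $\preceq$ (not $\preceq_r$) and since a final state may have all vertex markings and edge labels equal to $\mathbf{0}$ (abstract transitions may have $W^+(t)=\mathbf{0}$), a target built from $S^1_f$ can be covered by a state produced by the $\N_2$-simulation. Concretely, let $\N_1$ have one place $q$ and one abstract transition $t$ labelled $a$ with $W^-(t)=q$, $W^+(t)=\mathbf{0}$, $\Omega(t)=\mathbf{0}$, initial state $s[r,q]$ and final state the two-vertex chain with null markings and edge label $\mathbf{0}$, so $\mathcal L_1=\{a\}$; let $\N_2$ have one place $q'$ and one abstract transition $t'$ labelled $b$ with $W^-(t')=q'$, $W^+(t')=\mathbf{0}$, $\Omega(t')=q'$, initial state $s[r',q']$ and final state $s[r',2q']$, so $\mathcal L_2=\emptyset$. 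In your combined net, firing $t_2$ and then $t'$ once yields a three-vertex chain with null edge labels that covers your target derived from $S^1_f$, so $b$ belongs to the combined language although $\mathcal L_1\cup\mathcal L_2=\{a\}$. Hence the step ``covering a target forces the child subtree to cover some $s_f\in S^i_f$ of the chosen branch, so the label lies in $\mathcal L_C(\N_i,\cdot)$'' is false as stated.

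This is exactly what the lock places in the paper's construction are for: $p$ (resp.\ $p'$) is added to the backward incidence of every transition of $\N$ (resp.\ $\N'$), so no transition of the unchosen copy can ever fire, even one with an empty precondition; it is added to $\Omega$ of every abstract transition and replenished by the duplicator transitions $t_c,t'_c$, so every thread of the chosen copy carries the tag; and it is added to every marking of the final states, so a final state coming from $S_f$ cannot be covered by any state of the $\N'$-simulation (and vice versa). Your construction can be repaired by grafting the same tagging mechanism onto it (tag the child threads and the targets of each branch with a branch-specific place fed through $\Omega$ and a duplicator), but without some such mechanism the reverse inclusion does not hold.
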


\begin{proof}
	We closely follow the classic proof that the family of Petri net languages is closed under union, i.e. adding a place and two extra transitions that have to be fired in the beginning of the firing sequence in

\eject

\noindent order to decide in which of the Petri net one fires. Due to the correspondence between firing sequences of $(\N,s_0)$ and those of $(\rooted{\N},\sr)$,
	established in the previous section,
	one can assume w.l.o.g. that the initial markings of the RPNs have a single vertex.
	Consider two labelled marked RPNs with final states
	$(\N,s[r,m_0],S_f)$ and $(\N',s[r',m'_0],S'_f)$.
	Let us define $\widetilde{\N}$ as follows.
	Its set of places is the disjoint union of $P$ and $P'$
	with three additional places $p_0$, $p$ and $p'$.
	Its set of transitions is the disjoint union of $T$ and $T'$
	with four additional elementary transitions $t_b$, $t_c$, $t'_b$ and $t'_c$.
	
	\medskip\noindent
	$\bullet$ For all $t\in T$, $\widetilde{W}^-(t)=W^-(t)+p$ and when $t\notin T_{\tau}$
	$\widetilde{W}^+(t)=W^+(t)$\vspace{0.8mm}\\
	$\bullet$ For all $t\in T'$, $\widetilde{W}^-(t)=W'^-(t)+
	p'$ and when $t\notin T'_{\tau}$ $\widetilde{W}^+(t)=W'^+(t)$\vspace{0.8mm}\\
	$\bullet$ For all $t\in T_{ab}$, $\widetilde{\Omega}(t)=\Omega(t)+ p$\vspace{0.8mm}\\
	$\bullet$ For all $t\in T'_{ab}$, $\widetilde{\Omega}(t)=\Omega'(t)+ p'$\vspace{0.8mm}\\
	$\bullet$ $\widetilde{W}^-(t_b)=\widetilde{W}^-(t'_b)= p_0$,
	$\widetilde{W}^+(t_b)=m_0+ p$, $\widetilde{W}^+(t'_b)=m_0'+ p'$\vspace{0.8mm}\\
	$\bullet$ $\widetilde{W}^-(t_c)=p$, $\widetilde{W}^+(t_c)=2 p$,
	$\widetilde{W}^-(t'_c)= p'$, $\widetilde{W}^+(t'_c)=2 p'$\vspace{0.8mm}\\
	$\bullet$ $\widetilde{S}_f$ is obtained from the union
	$S_f\cup S'_f$ by adding a token in place
	$p$ (resp. $p'$)\vspace{0.8mm}\\
	\hspace*{0.3cm}of all markings of states of $S_f$ (respectively $S'_f$).\vspace{0.8mm}\\
	$\bullet$ For all $t\in T$, $\widetilde{\lambda}(t)=\lambda(t)$ and
	for all $t\in T'$, $\widetilde{\lambda}(t)=\lambda'(t)$\vspace{0.8mm}\\
	$\bullet$ For all $t\in \{t_b,t_c,t'_b,t'_c\}$,
	$\widetilde{\lambda}(t)=\varepsilon$.\vspace{0.8mm}\\
	$\bullet$ The initial state of $\widetilde{\N}$ is $s[\tilde{r},p_0]$.
	
	\medskip\noindent
	Let us prove that $\mathcal L(\N,s[r,m_0],S_f) \cup \mathcal L(\N',s[r',m'_0],S'_f)
	\subseteq \mathcal L(\widetilde{\N},s[\tilde{r},p_0],\widetilde{S}_f)$.
	Let $\sigma$ be a coverability sequence of $(\N,s[r,m_0],S_f)$.
	The corresponding coverability sequence $\widetilde{\sigma}$
	of $L(\widetilde{\N},s[\tilde{r},p_0],\widetilde{S}_f)$ is built as follows.
	Initially,
	one fires $(\tilde{r},t_b)(\tilde{r},t_c)^{\ell_r}$ where $\ell_r$ is the number of abstract transition firings
	occurring in $\sigma$ triggered by $r$. Then after the creation
	of a thread $v$, one inserts $(v,t_c)^{\ell_v}$ firings where $\ell_v$
	is the number of abstract transition firings
	occurring in $\sigma$ triggered by $v$. It is routine to check that
	$\widetilde{\sigma}$ is coverability sequence. The proof for
	$\mathcal L(\N',s[r',m'_0],S'_f)$ is similar.
		
	\medskip\noindent
	Let us prove that $\mathcal L(\widetilde{\N},s[r,p_0],\widetilde{S}_f)
	\subseteq \mathcal L(\N,s[\tilde{r},m_0],S_f) \cup \mathcal L(\N',s[r',m'_0],S'_f)$.
	Observe that any firing sequence must start
	by a firing of $t_b$ or $t'_b$.
	Let $t_b\widetilde{\sigma}$ be a coverability sequence of
	$(\widetilde{\N},s[\tilde{r},p_0],\widetilde{S}_f)$. Consider the sequence $\sigma$
	obtained by deleting all the firings of $t_c$ in $\tilde{\sigma}$.
	It is routine to check that
	$\sigma$ is a coverability sequence for $(\N,s[r,m_0],S_f)$.
	The case of a coverability sequence starting by $t'_b$ is similar.	
\end{proof}
The next theorem has two interesting consequences: the family of RPN coverability languages
is not closed under intersection with the family of regular languages. But the family obtained by this intersection
is \emph{quite close} to the family of recursively enumerable languages.
The result was already stated in Proposition~9 of~\cite{HaddadP99}
for the family of RPN reachability languages but the proof was only sketched.
\begin{theorem}
	\label{theo:re}
	Let $\mathcal L$ be a recursively enumerable language. Then there
	exist an RPN language $\mathcal L'$, a regular language
	$\mathcal R$ and a homomorphism $h$ such that
	$\mathcal L=h(\mathcal L'\cap\mathcal R)$.
\end{theorem}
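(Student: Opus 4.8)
The idea is to simulate a Turing machine (or equivalently, a counter machine / two-stack machine) recognizing $\mathcal L$ using an RPN, so that the accepting computations of the machine correspond — modulo an erasing homomorphism and intersection with a regular language — to the words of $\mathcal L$. The key resource that RPNs have over plain Petri nets is the stack-like recursion structure: the tree of threads, read along a branch, behaves like a pushdown store. The plan is to use two separate ``stacks'' of threads to encode the tape (or the two counters of a two-counter machine, but a single counter value can blow up doubly-exponentially, so a cleaner route is to encode a tape via a recursion branch whose depth equals a tape position, storing one tape symbol per level as a local marking). Concretely: given a Turing machine $M$ with $\mathcal L = L(M)$, build an RPN $\N$ together with a final state set $S_f$ and a labelling $\lambda$ as follows.

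First I would fix the simulation layout. Keep the whole finite control of $M$ and the position of the head encoded in the marking of the currently ``active'' thread; represent the non-blank portion of the tape to the left of the head as a branch of abstract-transition calls going down from the root, with the $i$-th thread on that branch holding (in a one-hot encoding over places $q_a$, $a\in\Gamma$) the content of the $i$-th tape cell; represent the portion to the right of the head symmetrically on a second branch. Moving the head right: fire an abstract transition to push a new level recording the symbol written, and transfer control; when the head must move back left, the thread writes its cell symbol into the marking passed up through the matching cut transition, so that $W^+$ of the abstract transition reconstructs the symbol in the parent. Reading the input word $w = a_1\cdots a_k$ is done by a designated family of elementary transitions $t_{a}$ with $\lambda(t_a)=a$ that fire exactly when $M$ reads input symbol $a$; all tape-simulation transitions get label $\varepsilon$. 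Acceptance is detected by the active thread reaching the accept state, at which point — to make the covering well-defined with a \emph{finite} $S_f$ — I would add a cleanup phase of cut transitions that collapses the whole thread tree back to the empty tree or to a single fixed marking, so that $S_f$ can be taken as one small fixed state.

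Second I would extract $\mathcal R$ and $h$. The RPN coverability language $\mathcal L'$ will not be exactly $\mathcal L$: it will contain, interleaved with the genuine input letters, auxiliary letters that ``witness'' the faithful simulation (e.g. a letter emitted on each abstract/cut pair to certify the stack discipline, or letters used to resolve the inherent nondeterminism of when a branch is grown versus cut — the classic trick, as in the closure-under-union proof just given, of emitting extra observable symbols to pin down choices that the Petri-net/RPN semantics cannot otherwise control). The homomorphism $h$ erases all these auxiliary letters and is the identity on $\Sigma$. The regular language $\mathcal R$ imposes the shape constraint that the auxiliary letters occur in exactly the pattern a valid $M$-computation forces (this is a \emph{local}/regular condition on the sequence of transition labels, precisely because the only non-regular bookkeeping — matching pushes with pops — is already enforced for free by the RPN tree structure). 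Then a word $u \in \mathcal L' \cap \mathcal R$ is exactly an accepting run of $M$ with its auxiliary annotation, and $h(u)$ is the input word accepted, so $h(\mathcal L' \cap \mathcal R) = L(M) = \mathcal L$.

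The main obstacle, and where the construction needs the most care, is making the RPN \emph{faithfully} simulate the tape despite two features of the semantics working against us: (i) abstract transitions only \emph{consume} tokens when firing and the child starts from a fixed $\Omega(t)$, while cut transitions require the thread's marking to \emph{dominate} some threshold and then add $W^+$ to the parent — so information flows up only through the rigid $W^+$ of the creating transition, which forces the ``one cell per level, one-hot symbol'' encoding rather than anything more compact; and (ii) coverability, not reachability, is the acceptance condition, so I must ensure that ``covering'' $S_f$ cannot be achieved by a cheating run that leaves junk tokens lying around — this is handled by routing every accepting computation through the mandatory collapsing phase and by the usual single-token ``program counter'' place that serializes the simulation so spurious parallelism is impossible. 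Verifying that the extra letters plus $\mathcal R$ genuinely eliminate all ill-formed runs (soundness), while every real computation of $M$ survives (completeness), is the routine-but-lengthy part; I would state it as two directions of language inclusion and check each transition case of $M$ against the RPN firing rule.
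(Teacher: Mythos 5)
Your overall architecture does match the paper's: simulate a Turing machine, store the tape on two thread-branches used as stacks with a one-hot symbol encoding per thread, erase the bookkeeping letters with $h$, and let $\mathcal R$ discipline the run (the paper even makes the coverability condition vacuous by taking the state to be covered to be the empty tree, so no cleanup phase is needed). The gap is in the mechanism you propose for head moves. In an RPN the only inter-thread token flow is parent-to-child at creation (the fixed vector $\Omega(t)$) and child-to-parent at a cut (the fixed vector $W^+(t)$ of the abstract transition that created the child). Consequently (i) a thread cannot ``write its cell symbol into the marking passed up'': what comes up is frozen at creation time, so in particular the current control state of $M$, unknown when the thread was created, can never be handed upward when the active thread is cut on a head move; and (ii) even with one push transition per symbol (so that the popped symbol does reappear in the parent, as you note), that token reappears in the parent \emph{inside the branch being popped}, and there is no RPN mechanism to transport it to the top of the other branch, since the two branches meet only at the root. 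So keeping ``the whole finite control of $M$ and the position of the head in the marking of the currently active thread'' cannot be sustained across reversals of the head, and the popped symbol cannot be re-pushed onto the other stack by purely RPN means; this is precisely the step where your construction conflicts with the firing rule.

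The paper's solution is to move exactly this finite amount of information out of the RPN and into $\mathcal R$: the regular language is the language of a synchronizing automaton (over transition names) that carries the state of $M$ and the symbol in transit between the two branches, while the RPN stores only the unbounded stack contents; popped symbols are verified by the guards $W^-$ of the cut transitions (one cut transition per symbol), and pointer tokens $left$ and $right$ localize the two stack tops. Your proposal instead assigns $\mathcal R$ only a ``shape constraint on auxiliary letters'', which is not enough: $\mathcal R$ must genuinely hold the control state and the transferred symbol. The defect is repairable, since the missing information is finite-state and can be folded into $\mathcal R$, but as written the central simulation step fails, so the argument is incomplete.
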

\eject
\begin{proof}
	Let $\mathcal M=(\Sigma,L,\delta)$ be a Turing machine with it set of states $L$
	including $\ell_0$ (resp. $\ell_f$) the initial (resp. final) state and its transition function
	$\delta$ from $L\times \Sigma \cup \{\flat\}$ to $L\times \Sigma \times \{\leftarrow,\rightarrow\}$
	where $\flat$ is the blank character.
	
	Let us define a labeled marked RPN $\N$ and an automaton $\mathcal A$.
	Their common alphabet is the set of transitions of $\N$
	and the labeling of the transitions of the RPN is the identity mapping.
	The intersection of their languages is thus the language of the synchronized product of the two devices.
	The single final state of $\N$ (to be covered) is the empty tree.
	
	The automaton $\mathcal A$ is depicted below (with $\Sigma=\{a,b\}$). In $q_0$ it allows $\N$
	to generate the representation of any word $w\in \Sigma^*$, input of $\mathcal M$.
	However, this intermediate representation is not suitable for mimicking $\mathcal M$.
	Thus in $q_1$, the intermediate representation is translated into an appropriate one.
	Once this representation is obtained, it mimics any transition of $\mathcal M$
	by triggering the firing of several transitions of $\N$. We will detail
	this simulation after the specification of $\N$.
	
	\begin{center}
		\begin{tikzpicture}[xscale=0.65,yscale=0.65]
			
			\path (10,0) node[draw,rectangle,minimum height=1.5cm,minimum width=4cm] () {};
			
			\path (0,0) node[draw,circle,inner sep=2pt,minimum size=0.8cm] (q0) {\small{$q_0$}};
			\path (4,0) node[draw,circle,inner sep=2pt,minimum size=0.8cm] (q1) {\small{$q_1$}};
			\path (4,2) node[draw,circle,inner sep=2pt,minimum size=0.8cm] (q2) {};
			\path (4,-2) node[draw,circle,inner sep=2pt,minimum size=0.8cm] (q3) {};
			\path (8,0) node[draw,circle,inner sep=2pt,minimum size=0.8cm] (q4) {\small{$\ell_0$}};
			\path (10,0) node[draw,circle,inner sep=2pt,minimum size=0.8cm] (q5) {\small{$\ell$}};
			\path (12,0) node[draw,circle,accepting,inner sep=2pt,minimum size=0.8cm] (q6) {\small{$\ell_f$}};

			\path (10,2) node[]  {\small{The simulation part of $\mathcal A$}};
			
			\path (9,0) node[]  {\small{$\cdots$}};
			\path (11,0) node[]  {\small{$\cdots$}};
			
			\draw[arrows=-latex'] (-1,0) -- (q0)  ;
			
			\draw[arrows=-latex'] (q0) -- (q1) node[pos=0.5,above] {\tiny{$next$}} ;
			\draw[arrows=-latex'] (q1) -- (3.5,1)--(q2) node[pos=0,left] {\tiny{$from_a$}} ;
			\draw[arrows=-latex'] (q2) -- (4.5,1)--(q1) node[pos=0,right] {\tiny{$to_a$}} ;
			\draw[arrows=-latex'] (q1) -- (3.5,-1)--(q3) node[pos=0,left] {\tiny{$from_b$}} ;
			\draw[arrows=-latex'] (q3) -- (4.5,-1)--(q1) node[pos=0,right] {\tiny{$to_b$}} ;
			\draw[arrows=-latex'] (q1) -- (q4) node[pos=0.5,above] {\tiny{$run$}} ;

			\draw[-latex'] (q0) .. controls +(55:60pt) and +(125:60pt) .. (q0)
			node[pos=.5,above] {\tiny{$t_a,t_b$}};

		\end{tikzpicture}
	\end{center}
	
	\begin{center}
		\begin{tikzpicture}[xscale=0.65,yscale=0.65]
			
			\path (4,1.2) node[]  {\small{$\delta(\ell,a)=(\ell',b,\rightarrow)$}};

			\path (0,0) node[draw,circle,inner sep=2pt,minimum size=0.8cm] (q0) {\small{$\ell$}};
			\path (4,0) node[draw,circle,inner sep=2pt,minimum size=0.8cm] (q1) {};
			\path (8,0) node[draw,circle,inner sep=2pt,minimum size=0.8cm] (q2) {\small{$\ell'$}};
			
			\draw[arrows=-latex'] (q0) -- (q1) node[pos=0.5,above] {\tiny{$right_{a}^{\rightarrow}$}} ;
			\draw[arrows=-latex'] (q1) -- (q2) node[pos=0.5,above] {\tiny{$left_{b}^{\rightarrow}$}} ;

			\path (12,1.2) node[]  {\small{$\delta(\ell,a)=(\ell',b,\leftarrow)$}};

			\path (10,0) node[draw,circle,inner sep=2pt,minimum size=0.8cm] (q10) {\small{$\ell$}};
			\path (14,0) node[draw,circle,inner sep=2pt,minimum size=0.8cm] (q11) {};
			\path (18,1) node[draw,circle,inner sep=2pt,minimum size=0.8cm] (q12) {\small{$\ell_{a,a}$}};
			\path (18,-1) node[draw,circle,inner sep=2pt,minimum size=0.8cm] (r12) {\small{$\ell_{a,b}$}};
			\path (22,0) node[draw,circle,inner sep=2pt,minimum size=0.8cm] (q13) {\small{$\ell'$}};
			
			\draw[arrows=-latex'] (q10) -- (q11) node[pos=0.5,above] {\tiny{$upd_{a,b}^{\leftarrow}$}} ;
			\draw[arrows=-latex'] (q11) -- (q12) node[pos=0.5,above] {\tiny{$left_{a}^{\leftarrow}$}} ;
			\draw[arrows=-latex'] (q12) -- (q13) node[pos=0.5,above] {\tiny{$right_{a}^{\leftarrow}$}} ;
			\draw[arrows=-latex'] (q11) -- (r12) node[pos=0.5,above] {\tiny{$left_{b}^{\leftarrow}$}} ;
			\draw[arrows=-latex'] (r12) -- (q13) node[pos=0.5,above] {\tiny{$right_{b}^{\leftarrow}$}} ;
			
			\path (4,-1.8) node[]  {\small{$\delta(\ell,\flat)=(\ell',b,\rightarrow)$}};

			\path (0,-3) node[draw,circle,inner sep=2pt,minimum size=0.8cm] (q20) {\small{$\ell$}};
			\path (4,-3) node[draw,circle,inner sep=2pt,minimum size=0.8cm] (q21) {};
			\path (8,-3) node[draw,circle,inner sep=2pt,minimum size=0.8cm] (q22) {\small{$\ell'$}};
			
			\draw[arrows=-latex'] (q20) -- (q21) node[pos=0.5,above] {\tiny{$check_{\flat}$}} ;
			\draw[arrows=-latex'] (q21) -- (q22) node[pos=0.5,above] {\tiny{$left_{b}^{\rightarrow}$}} ;
			
			\path (12,1.2) node[]  {\small{$\delta(\ell,a)=(\ell',b,\leftarrow)$}};

			\path (10,-3) node[draw,circle,inner sep=2pt,minimum size=0.8cm] (q30) {\small{$\ell$}};
			\path (13,-3) node[draw,circle,inner sep=2pt,minimum size=0.8cm] (q30b) {};
			\path (16,-3) node[draw,circle,inner sep=2pt,minimum size=0.8cm] (q31) {};
			\path (19,-2) node[draw,circle,inner sep=2pt,minimum size=0.8cm] (q32) {\small{$\ell_{\flat,a}$}};
			\path (19,-4) node[draw,circle,inner sep=2pt,minimum size=0.8cm] (r32) {\small{$\ell_{\flat,b}$}};
			\path (22,-3) node[draw,circle,inner sep=2pt,minimum size=0.8cm] (q33) {\small{$\ell'$}};
			
			\draw[arrows=-latex'] (q30) -- (q30b) node[pos=0.5,above] {\tiny{$check_{\flat}$}} ;
			\draw[arrows=-latex'] (q30b) -- (q31) node[pos=0.5,above] {\tiny{$upd_{\flat,b}^{\leftarrow}$}} ;
			\draw[arrows=-latex'] (q31) -- (q32) node[pos=0.5,above] {\tiny{$left_{a}^{\leftarrow}$}} ;
			\draw[arrows=-latex'] (q32) -- (q33) node[pos=0.5,above] {\tiny{$right_{a}^{\leftarrow}$}} ;
			\draw[arrows=-latex'] (q31) -- (r32) node[pos=0.5,above] {\tiny{$left_{b}^{\leftarrow}$}} ;
			\draw[arrows=-latex'] (r32) -- (q33) node[pos=0.5,above] {\tiny{$right_{b}^{\leftarrow}$}} ;
			
			\path (12,-1.8) node[]  {\small{$\delta(\ell,\flat)=(\ell',b,\leftarrow)$}};
			
		\end{tikzpicture}
	\end{center}

	$\mathcal N$ is defined as follows.
	Its set of places is $P=\{p_a \mid a \in \Sigma\}\cup \{root,right,left,start,ret\}$.
	We now define the set of transitions $T$. The first subset
	corresponds to the generation of a representation of the input word of  $\mathcal M$.
	\begin{itemize}[nosep]\smallskip
		\item For all $a\in \Sigma$, $t_a\in T_{ab}$ with $W^-(t_a)=start$, $W^+(t_a)=ret$ and $\Omega(t_a)=start+p_a$;
		\item $next\in T_{el}$ with  $W^-(next)=start$ and $W^+(next)=ret$;
		\item For all  $a\in \Sigma$, $from_a\in T_{\tau}$  $W^-(from_a)=ret+p_a$;
		\item For all  $a\in \Sigma$, $to_a\in T_{ab}$ with  $W^-(to_a)=right$, $W^+(to_a)=right$ and $\Omega(to_a)=right+p_a$;
		\item $run\in T_{el}$ with $W^-(run)=root+ret$ and $W^+(run)=root$
	\end{itemize}\medskip
	
	The second subset corresponds to the simulation of $\mathcal M$.
	\begin{itemize}[nosep]\smallskip
		\item For all $a\in \Sigma$, $right_{a}^{\rightarrow}\in T_{\tau}$ with $W^-(right_{a}^{\rightarrow})=right+p_a$;
		\item For all $a\in \Sigma$,  $left_{a}^{\rightarrow}\in T_{ab}$ with $W^-(left_{a}^{\rightarrow})=W^+(left_{a}^{\rightarrow})=left$\\
		and $\Omega(left_{a}^{\rightarrow})=left+p_a$;
		\item For all $a,b \in \Sigma$,
		$upd_{a,b}^{\leftarrow}\in T_{el}$ with $W^-(upd_{a,b}^{\leftarrow})=right+p_a$ and $W^+(upd_{a,b}^{\leftarrow})=right+p_{b}$
		
		\item For all $a\in \Sigma$, $left_{a}^{\leftarrow}\in T_{\tau}$ with $W^-(left_{a}^{\leftarrow})=left+p_a$
		\item For all $a\in \Sigma$, $right_{a}^{\leftarrow}\in T_{ab}$ with $W^-(right_{a}^{\leftarrow})=W^+(right_{a}^{\leftarrow})=right$\\
		and $\Omega(right_{a}^{\leftarrow})=right+p_a$
		\item $check_{\flat}\in T_{el}$ with  $W^-(check_{\flat})=W^+(check_{\flat})=right+root$;
		\item For all $b \in \Sigma$,
		$upd_{\flat,b}^{\leftarrow}\in T_{ab}$ with $W^-(upd_{\flat,b}^{\leftarrow})=right$, $W^+(upd_{\flat,b}^{\leftarrow})=right$\\
		and $\Omega(upd_{\flat,b}^{\leftarrow})=right+p_b$.
	\end{itemize}
	The initial state is $s[r,root+start+left+right]$.
	
\medskip
	Let us explain how the simulation works.
	Let $abc$ be the word on the tape of $\mathcal M$. Then firing $(r,t_a)(v_1,t_b)(v_2,t_c)$ one gets:
	\begin{center}
		\begin{tikzpicture}[xscale=0.65,yscale=0.65]
			
			\path (0,0) node[draw,circle,inner sep=2pt,minimum size=0.8cm,
			label={[xshift=0cm, yshift=0cm]\tiny{$root+left+right$}}] (q0) {\small{$r$}};
			\path (4,0) node[draw,circle,inner sep=2pt,minimum size=0.8cm,
			label={[xshift=0cm, yshift=0cm]\tiny{$p_a$}}] (q1) {\small{$v_1$}};
			\path (8,0) node[draw,circle,inner sep=2pt,minimum size=0.8cm,
			label={[xshift=0cm, yshift=0cm]\tiny{$p_b$}}] (q2) {\small{$v_2$}};
			\path (12,0) node[draw,circle,inner sep=2pt,minimum size=0.8cm,
			label={[xshift=0cm, yshift=0cm]\tiny{$p_c+start$}}] (q3) {\small{$v_3$}};
			
			\draw[arrows=-latex'] (q0) -- (q1) node[pos=0.5,above] {\tiny{$ret$}} ;
			\draw[arrows=-latex'] (q1) -- (q2) node[pos=0.5,above] {\tiny{$ret$}} ;
			\draw[arrows=-latex'] (q2) -- (q3) node[pos=0.5,above] {\tiny{$ret$}} ;
			
		\end{tikzpicture}
	\end{center}
	
	After firing  $(v_3,next)(v_3,from_c)(r,to_c)(v_2,from_b)(u_1,to_b)(v_1,from_a)(u_2,to_a)(r,run)$ one gets:
	
	\begin{center}
		\begin{tikzpicture}[xscale=0.65,yscale=0.65]
			
			\path (0,0) node[draw,circle,inner sep=2pt,minimum size=0.8cm,
			label={[xshift=0cm, yshift=0cm]\tiny{$root+left$}}] (q0) {\small{$r$}};
			\path (4,0) node[draw,circle,inner sep=2pt,minimum size=0.8cm,
			label={[xshift=0cm, yshift=0cm]\tiny{$p_c$}}] (q1) {\small{$u_1$}};
			\path (8,0) node[draw,circle,inner sep=2pt,minimum size=0.8cm,
			label={[xshift=0cm, yshift=0cm]\tiny{$p_b$}}] (q2) {\small{$u_2$}};
			\path (12,0) node[draw,circle,inner sep=2pt,minimum size=0.8cm,
			label={[xshift=0cm, yshift=0cm]\tiny{$p_a+right$}}] (q3) {\small{$u_3$}};
			
			\draw[arrows=-latex'] (q0) -- (q1) node[pos=0.5,above] {\tiny{$right$}} ;
			\draw[arrows=-latex'] (q1) -- (q2) node[pos=0.5,above] {\tiny{$right$}} ;
			\draw[arrows=-latex'] (q2) -- (q3) node[pos=0.5,above] {\tiny{$right$}} ;
			
		\end{tikzpicture}
	\end{center}
	
	Let us describe the two cases of tape simulation. Assume that the content of
	the tape is $abcd\flat^\omega$ and that the head of $\mathcal M$ is over $c$ then
	the corresponding state is the following one. The ``left'' branch contains the content of the tape
	on the left of the head while descending to the leaf and the ``right'' branch contains the relevant content of the tape
	on the right of the head (including the cell under the head) while ascending from the leaf. Thus
	the token in place $right$ points to the thread corresponding to the cell under the head
	while the token in place $left$ points to the thread corresponding to the cell immediately on the left
	of the head. The state of $\mathcal M$ is the state of $\mathcal A$.
	
	\begin{center}
		\begin{tikzpicture}[xscale=0.65,yscale=0.65]
			
			\path (0,0) node[draw,circle,inner sep=2pt,minimum size=0.8cm,
			label={[xshift=0cm, yshift=0cm]\tiny{$root$}}] (q0) {\small{$r$}};
			\path (4,0) node[draw,circle,inner sep=2pt,minimum size=0.8cm,
			label={[xshift=0cm, yshift=0cm]\tiny{$p_d$}}] (q1) {};
			\path (8,0) node[draw,circle,inner sep=2pt,minimum size=0.8cm,
			label={[xshift=0cm, yshift=0cm]\tiny{$p_c+right$}}] (q2) {\small{$v$}};
			\path (-4,0) node[draw,circle,inner sep=2pt,minimum size=0.8cm,
			label={[xshift=0cm, yshift=0cm]\tiny{$p_a$}}] (q3) {};
			\path (-8,0) node[draw,circle,inner sep=2pt,minimum size=0.8cm,
			label={[xshift=0cm, yshift=0cm]\tiny{$p_b+left$}}] (q4) {\small{$u$}};
			
			\draw[arrows=-latex'] (q0) -- (q1) node[pos=0.5,above] {\tiny{$right$}} ;
			\draw[arrows=-latex'] (q1) -- (q2) node[pos=0.5,above] {\tiny{$right$}} ;
			\draw[arrows=-latex'] (q0) -- (q3) node[pos=0.5,above] {\tiny{$left$}} ;
			\draw[arrows=-latex'] (q3) -- (q4) node[pos=0.5,above] {\tiny{$left$}} ;
			
		\end{tikzpicture}
	\end{center}
	
	Assume that the content of
	the tape is $abcd\flat^\omega$ and that the head of $\mathcal M$ is over the first $\flat$ then
	the corresponding state is the following one.
	
	\begin{center}
		\begin{tikzpicture}[xscale=0.65,yscale=0.65]
			
			\path (0,0) node[draw,circle,inner sep=2pt,minimum size=0.8cm,
			label={[xshift=0cm, yshift=0cm]\tiny{$root+right$}}] (q0) {\small{$r$}};
			\path (-4,0) node[draw,circle,inner sep=2pt,minimum size=0.8cm,
			label={[xshift=0cm, yshift=0cm]\tiny{$p_a$}}] (q1) {};
			\path (-8,0) node[draw,circle,inner sep=2pt,minimum size=0.8cm,
			label={[xshift=0cm, yshift=0cm]\tiny{$p_b$}}] (q2) {};
			\path (-12,0) node[draw,circle,inner sep=2pt,minimum size=0.8cm,
			label={[xshift=0cm, yshift=0cm]\tiny{$p_c$}}] (q3) {};
			\path (-16,0) node[draw,circle,inner sep=2pt,minimum size=0.8cm,
			label={[xshift=0cm, yshift=0cm]\tiny{$p_d+left$}}] (q4) {};
			
			\draw[arrows=-latex'] (q0) -- (q1) node[pos=0.5,above] {\tiny{$left$}} ;
			\draw[arrows=-latex'] (q1) -- (q2) node[pos=0.5,above] {\tiny{$left$}} ;
			\draw[arrows=-latex'] (q2) -- (q3) node[pos=0.5,above] {\tiny{$left$}} ;
			\draw[arrows=-latex'] (q3) -- (q4) node[pos=0.5,above] {\tiny{$left$}} ;
			
		\end{tikzpicture}
	\end{center}
	It is routine to check that the simulation works. Let us illustrate it with one example.
	Assume that the content of
	the tape is $abcd\flat^\omega$,  the head of $\mathcal M$ is over $c$ and the current
	state is $\ell$. Let $\delta(\ell,c)=(\ell',e,\leftarrow)$.
	Then after firing $(v,upd_{c,e}^{\leftarrow})(u,left_{b}^{\leftarrow})(v,right_{b}^{\leftarrow})$, one gets:
	\begin{center}
		\begin{tikzpicture}[xscale=0.65,yscale=0.65]
			
			\path (0,0) node[draw,circle,inner sep=2pt,minimum size=0.8cm,
			label={[xshift=0cm, yshift=0cm]\tiny{$root$}}] (q0) {\small{$r$}};
			\path (4,0) node[draw,circle,inner sep=2pt,minimum size=0.8cm,
			label={[xshift=0cm, yshift=0cm]\tiny{$p_d$}}] (q1) {};
			\path (8,0) node[draw,circle,inner sep=2pt,minimum size=0.8cm,
			label={[xshift=0cm, yshift=0cm]\tiny{$p_e$}}] (q2) {\small{$v$}};
			\path (-4,0) node[draw,circle,inner sep=2pt,minimum size=0.8cm,
			label={[xshift=0cm, yshift=0cm]\tiny{$p_a+left$}}] (q3) {};
			\path (12,0) node[draw,circle,inner sep=2pt,minimum size=0.8cm,
			label={[xshift=0cm, yshift=0cm]\tiny{$p_b+right$}}] (q4) {};
			
			\draw[arrows=-latex'] (q0) -- (q1) node[pos=0.5,above] {\tiny{$right$}} ;
			\draw[arrows=-latex'] (q1) -- (q2) node[pos=0.5,above] {\tiny{$right$}} ;
			\draw[arrows=-latex'] (q0) -- (q3) node[pos=0.5,above] {\tiny{$left$}} ;
			\draw[arrows=-latex'] (q2) -- (q4) node[pos=0.5,above] {\tiny{$right$}} ;
			
		\end{tikzpicture}
	\end{center}
	For all $a\in \Sigma$, the homomorphism $h$ maps $t_a$ to $a$
	and for all $t \notin \{t_a\}_{a\in \Sigma}$, $h$ maps $t$ to $\varepsilon$.
	\end{proof}

Obviously, the family of RPNs coverability languages include the family of PNs coverability languages.
In~\cite{HP-icatpn99}, Proposition~1 establishes that the family of context-free languages is included
in family of reachability languages for RPNs. The proof relies on simulating the leftmost
derivations of a context-free grammar within particular two places $b_X$ and $e_X$
per nonterminal symbol $X$ where a token in $b_X$ means that $X$ must derived and
a token in $e_X$ means that the derivation of $X$ into a word has been achieved. In order to adapt
this result for the family of coverability languages for RPNs, it is enough to consider w.l.o.g.
that the initial symbol $I$ never appears on the right hand side of a rule and to specify
$s[r,e_I]$ as final state. We refer the reader to~\cite{HP-icatpn99} for more details.

\begin{restatable}{proposition}{CFinRPN}
	\label{prop:CFinRPN}
	The family of Context-free languages is included
	in the family of coverability languages of RPNs.
\end{restatable}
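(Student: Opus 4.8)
The plan is to adapt the construction of Proposition~1 of~\cite{HP-icatpn99} for reachability languages to the coverability setting. First I would fix a context-free grammar $G=(N,\Sigma,R,I)$ generating the given language and, as suggested above, assume without loss of generality that the axiom $I$ never occurs on the right-hand side of any rule (adding a fresh axiom $I'$ with the single rule $I'\to I$ achieves this). For every nonterminal $X\in N$ I introduce two places $b_X$ and $e_X$, the intended invariant being that a token in $b_X$ means ``an occurrence of $X$ still has to be derived'' while a token in $e_X$ means ``the derivation of that occurrence of $X$ into a terminal word is completed''. For every rule $X\to\alpha$ of $G$ I would design a sequence of transitions that, starting from a thread marked by $b_X$, realises the leftmost derivation of $\alpha$: each terminal letter $a$ occurring in $\alpha$ is produced by a transition labelled $a$, each nonterminal occurrence $Y$ in $\alpha$ is handled by an abstract transition creating a child initialised with $b_Y$, and the cut transition of that child returns a token $e_Y$ to the parent once the child has reached its completed marking; when all the letters have been emitted and all the nonterminal children have returned, the thread reaches the marking $e_X$. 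All transitions not producing a terminal letter are labelled $\varepsilon$. The initial state is $s[r,b_I]$ and the single state to be covered is $s[r,e_I]$.

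Next I would establish the two language inclusions. For the inclusion of $L(G)$ in $\mathcal L_C(\N,s[r,b_I],\{s[r,e_I]\})$, I would proceed by induction on the length of a leftmost derivation $X\Rightarrow^* w$ in $G$, showing that from $s[r,b_X]$ the RPN has a firing sequence with label $w$ reaching $s[r,e_X]$; applied to $X=I$ this yields a sequence reaching, hence covering, $s[r,e_I]$. For the converse, I would show that any firing sequence from $s[r,b_I]$ covering $s[r,e_I]$ can be reorganised into the canonical leftmost form above and its label read off as a derivation in $G$; the key structural observation is that the transitions manipulating $b_X$ and $e_X$ form a rigid protocol — a thread carrying $b_X$ can only evolve along one of the rules for $X$, and produces $e_X$ only after faithfully completing it — and that, since $I$ never reappears on a right-hand side, the only way a token lands in $e_I$ at the root is through a genuine completed derivation of $I$.

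The step I expect to require the most care is precisely this second inclusion, and specifically the passage from \emph{reachability} to \emph{coverability}. Covering $s[r,e_I]$ only requires $M_s(r)(e_I)\ge 1$ and tolerates arbitrary extra tokens elsewhere in the root and arbitrary surviving subtrees, so I must argue that such slack cannot be exploited to produce a word outside $L(G)$: no transition consumes a token from an $e_X$ place, and none uses extra tokens to short-circuit the protocol, so spurious tokens are simply inert. Making this monotonicity argument precise — essentially that the $e_X$ places behave as write-once flags and that having \emph{more} tokens never unlocks new terminal-labelled behaviour — is the crux; once it is in place, together with the fact that a token in $e_I$ can be created at most once and only by the root finishing its derivation, the equality of the two languages follows. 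I would refer to~\cite{HP-icatpn99} for the detailed bookkeeping of the transition gadgets, stating here only the modifications needed for coverability.
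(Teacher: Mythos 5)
Your proposal follows exactly the paper's argument: the paper likewise adapts the $b_X$/$e_X$ construction of Proposition~1 of~\cite{HP-icatpn99} for leftmost derivations, assumes w.l.o.g.\ that the axiom never occurs on a right-hand side, and takes $s[r,e_I]$ as the state to be covered, referring to~\cite{HP-icatpn99} for the detailed bookkeeping. Your additional discussion of why spurious tokens cannot unlock extra terminal-labelled behaviour is a reasonable elaboration of the same idea, not a different route.
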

Since  universality is undecidable for the family of context-free languages, we deduce that universality of the family of RPN coverability languages is undecidable.	

\medskip
Let  $\mathcal L_1=\{a^mb^nc^p \mid m\geq n \geq p\}$.
Denote by $\mathcal L_2 = \{w\tilde{w}\mid w\in \{d,e\}^*\}$ where $\tilde{w}$ is the mirror of $w$.
Let  $\mathcal L_3=\{a^nb^nc^n \mid  n \in \nat\}$.
Observe that given the final marking $p_f$ we get that the net in Figure~\ref{fig:PN_for_L1_and_L3} has $\mathcal L_1$ as its coverability language, and $\mathcal L_3$ its reachability language.

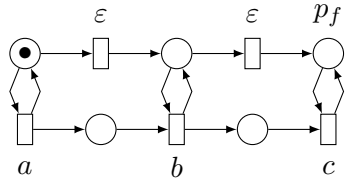
\begin{figure}[!h]
		\begin{center}
		\begin{tikzpicture}[xscale=1,yscale=1]
			
			\path (0,0) node[] {$\bullet$};
			\path (0,0) node[draw,circle,inner sep=2pt,minimum size=0.4cm] (p1) {};
			\path (1,0.5) node[] {$\varepsilon$};
			\path (1,0) node[draw,rectangle,inner sep=2pt,minimum width=0.2cm,minimum height=0.4cm] (t1) {};
			\path (2,0) node[draw,circle,inner sep=2pt,minimum size=0.4cm] (p2) {};
			\path (3,0.5) node[] {$\varepsilon$};
			\path (3,0) node[draw,rectangle,inner sep=2pt,minimum width=0.2cm,minimum height=0.4cm] (t2) {};
			\path (4,0) node[draw,circle,inner sep=2pt,minimum size=0.4cm] (p3) {};
			\path (4,0.5) node[] {$p_f$};
			
			\path (0,-1.5) node[] {$a$};
			\path (0,-1) node[draw,rectangle,inner sep=2pt,minimum width=0.2cm,minimum height=0.4cm] (t3) {};
			\path (1,-1) node[draw,circle,inner sep=2pt,minimum size=0.4cm] (p4) {};
			
			\path (2,-1.5) node[] {$b$};
			\path (2,-1) node[draw,rectangle,inner sep=2pt,minimum width=0.2cm,minimum height=0.4cm] (t4) {};
			\path (3,-1) node[draw,circle,inner sep=2pt,minimum size=0.4cm] (p5) {};
			
			\path (4,-1.5) node[] {$c$};
			\path (4,-1) node[draw,rectangle,inner sep=2pt,minimum width=0.2cm,minimum height=0.4cm] (t5) {};
			
			\draw[arrows=-latex] (p1) -- (t1) ;
			\draw[arrows=-latex] (t1) -- (p2) ;
			\draw[arrows=-latex] (p2) -- (t2) ;
			\draw[arrows=-latex] (t2) -- (p3) ;
			
			\draw[arrows=-latex] (t3) -- (p4) ;
			\draw[arrows=-latex] (p4) -- (t4) ;
			
			\draw[arrows=-latex] (t4) -- (p5) ;
			\draw[arrows=-latex] (p5) -- (t5) ;
			
			\draw[arrows=-latex] (p1) -- (-0.2,-0.5)-- (t3) ;
			\draw[arrows=-latex] (t3) -- (0.2,-0.5)-- (p1) ;
			
			\draw[arrows=-latex] (p2) -- (1.8,-0.5)-- (t4) ;
			\draw[arrows=-latex] (t4) -- (2.2,-0.5)-- (p2) ;
			
			\draw[arrows=-latex] (p3) -- (3.8,-0.5)-- (t5) ;
			\draw[arrows=-latex] (t5) -- (4.2,-0.5)-- (p3) ;		
		\end{tikzpicture}
	\end{center}\vspace*{-7mm}
\caption{A Petri net for the languages $\mathcal L_1 $ and $ \mathcal L_3 $}
\label{fig:PN_for_L1_and_L3}
\end{figure}

The next proposition witnesses a Petri net language interesting from an expressiveness
point of view. A similar result can be found page~179 in Peterson's book~\cite{nla.cat-vn2956435}.
\begin{restatable}{proposition}{PNnotCF}
	\label{prop:PNnotCF}
	$\mathcal L_1$ is the coverability language of some Petri net but it
	is not a context-free language.
\end{restatable}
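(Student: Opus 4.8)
The statement has two halves: $\mathcal L_1=\{a^mb^nc^p \mid m\geq n\geq p\}$ is a Petri net coverability language, and $\mathcal L_1$ is not context-free. The first half is already essentially handed to us by the remark accompanying Figure~\ref{fig:PN_for_L1_and_L3}: the net drawn there has $\mathcal L_1$ as its coverability language when the target marking is one token in $p_f$. So the plan is to verify this briefly and then spend the real effort on non-context-freeness.

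For the coverability half, I would argue as follows. The net has a ``control'' line $p_1 \xrightarrow{\varepsilon} p_2 \xrightarrow{\varepsilon} p_3$ that records which phase we are in, and three labelled transitions: $a$ is enabled in phase $p_1$ and deposits a token in $p_4$; $b$ consumes from $p_4$ (and is enabled in phase $p_2$) and deposits in $p_5$; $c$ consumes from $p_5$ (and is enabled in phase $p_3$). Reading $a^mb^nc^p$ and then reaching a marking covering $p_f$ forces the phase transitions to occur in order, so the reads are of the form $a$'s, then $b$'s, then $c$'s; the place $p_4$ enforces $m\ge n$ and $p_5$ enforces $n\ge p$. Conversely any word $a^mb^nc^p$ with $m\ge n\ge p$ can be read while ending with a token in $p_f$. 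Since the net structure is taken as given in the figure, this verification is routine and I would keep it to a couple of sentences.

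The substantive part is showing $\mathcal L_1$ is not context-free, and here I would invoke the pumping lemma for context-free languages (Bar-Hillel / Ogden). Suppose $\mathcal L_1$ were context-free with pumping length $N$, and consider the word $w=a^Nb^Nc^N\in\mathcal L_1$. Any decomposition $w=uvxyz$ with $|vxy|\le N$ and $|vy|\ge 1$ has $vxy$ touching at most two of the three letter-blocks. Pumping \emph{down} (taking $i=0$, i.e.\ $uxz$) removes occurrences of at most two letters, leaving the third at its original count $N$; so at least one of the inequalities $m\ge n\ge p$ is violated — concretely, if $vy$ contains no $c$'s we end with fewer than $N$ $b$'s or fewer than $N$ $a$'s but still $N$ $c$'s, breaking $n\ge p$ (or $m\ge n$ together with $n\ge p$); if $vy$ contains no $a$'s we still have $N$ $a$'s but fewer $b$'s or $c$'s, and pumping \emph{up} then breaks $m\ge n$ only in some cases, so down-pumping is the clean choice. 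The one case needing a little care is when $vy$ lies entirely within the $b$-block (so pumping down gives $a^Nb^{N'}c^N$ with $N'<N$, violating $n\ge p$) — this is fine — and when $vy$ straddles $b$ and $c$ with equal numbers removed from each, in which case pumping down gives $a^Nb^{N-k}c^{N-k}$, still in $\mathcal L_1$; to rule this out I would instead use Ogden's lemma with the marked positions chosen in the $a$-block, forcing $v$ or $y$ to consist of $a$'s, and then pumping \emph{up} produces $a^{N+k}b^Nc^N$ which \emph{is} in $\mathcal L_1$ — so that doesn't work either, and the correct move is to mark the $c$'s: then $vy$ must contain a $c$, and since $|vxy|\le N$ it cannot also reach the $a$-block, so pumping down strictly decreases the $c$-count while leaving the $a$-count at $N$ and does not increase anything, hence $a^Nb^{?}c^{<N}$ — wait, that's still in $\mathcal L_1$.

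Let me restate the clean argument, because the subtlety is real: the complement-style witness is better taken from the \emph{other} boundary. Use Ogden's lemma and mark all the $a$-positions in $w=a^Nb^Nc^N$. Then in $w=uvxyz$ with at least one marked symbol in $vy$, $vy$ contains at least one $a$, and since $|vxy|\le N$, $vxy$ cannot reach the $c$-block, so $vy$ contains no $c$. Pumping \emph{down} ($i=0$) then yields a word with strictly fewer than $N$ occurrences of $a$, exactly $N$ occurrences of $c$, and some number $n'\le N$ of $b$; the resulting word has $m'<N=p$, violating $m\ge p$ and hence not in $\mathcal L_1$ — contradiction. This is the heart of the proof and the place I'd be most careful: choosing to mark the $a$'s (not the $b$'s or $c$'s) and pumping \emph{down} (not up) is exactly what forces a violation of the outermost inequality $m\ge p$ without any case analysis. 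I would present the coverability direction in two sentences and devote the bulk of the proof to this Ogden-lemma argument.
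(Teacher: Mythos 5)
Your coverability half is fine and is exactly what the paper does (it simply reads $\mathcal L_1$ off the net of Figure~\ref{fig:PN_for_L1_and_L3} with target $p_f$). The problem is the non-context-freeness half. Your final ``clean argument'' hinges on the step ``mark all the $a$-positions \dots\ since $|vxy|\le N$, $vxy$ cannot reach the $c$-block''. No version of the lemma you invoke gives you this: Ogden's lemma bounds the number of \emph{marked} positions in $vxy$, not its length, and with all $a$'s marked that bound is vacuous (there are only $N$ $a$'s in the whole word); the weak form stated in the paper gives no bound on $vxy$ at all, and the Bar-Hillel lemma, which does give $|vxy|\le N$, gives you no control over where $vy$ sits. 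Concretely, the decomposition $v=a^jb^k$, $y=c^l$ with $1\le j\le k\le l$ satisfies every conclusion of Ogden's lemma with the $a$'s marked, and pumping down yields $a^{N-j}b^{N-k}c^{N-l}$, which \emph{is} in $\mathcal L_1$ --- so your down-pumping argument produces no contradiction in this case. (Your earlier explorations with the plain pumping lemma and with the $c$'s marked are abandoned mid-case, so the write-up contains no complete argument.)

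The repair is the route the paper takes, and it needs pumping \emph{up}, not down: mark the $c$-positions of $a^Nb^Nc^N$ and consider $w'=w_1w_2^2w_3w_4^2w_5$. Since $w'$ must lie in $\mathcal L_1\subseteq a^*b^*c^*$, both $w_2$ and $w_4$ are mono-letter words; since $w_2w_4$ contains a marked position, one of them is $c^q$ with $q>0$. A short case analysis then finishes: if $w_2=c^q$ then $w_4$ is also a block of $c$'s and $w'$ has more than $N$ $c$'s against $N$ $b$'s; if $w_4=c^q$ then $w_2\in\{a^{q'},b^{q'},c^{q'}\}$ and in each case $w'$ has too few $a$'s or $b$'s relative to its $c$'s, so $w'\notin\mathcal L_1$. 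Note that this argument never needs a length bound on $w_2w_3w_4$, which is precisely why it works with the weak Ogden statement, whereas your argument needs a positional constraint that the lemma does not supply.
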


\begin{proof}
	Let us recall (a weak version of) Ogden lemma~\cite{ogden1968helpful}. For any context-free language
	$\mathcal L\,$there \mbox{exists} an integer $N$ such for any word $w\! \in\! \mathcal L$
	with $N$ marked positions, there exists a decomposition
	$w\!=\!w_1w_2w_3w_4w_5$ such that $w_2w_4$ contains at least a marked position and
	for all $n\!\geq\! 0$, $w_1w_2^nw_3w_4^nw_5\!\in\!\mathcal L$.
	
	\noindent
	The proof that $\mathcal L_1$ is not a context-free language is similar to the proof of the folk
	result that $\mathcal L_3$ is not a context-free language.
	Assume that $\mathcal L_1$ is a context-free language and consider
	the word $w=a^Nb^Nc^N$ with all $c$ positions marked.
	So let $w=w_1w_2w_3w_4w_5$ with the decomposition fulfilling the requirements
	of Ogden lemma. Since $w'=w_1w_2^2w_3w_4^2w_5\in \mathcal L_1$,
	$w_2$ and $w_4$ are mono-letter words. Furthermore one of these words
	is equal to $c^q$ for some $q>0$. If $w_2=c^q$ then $w_4=c^{q'}$
	and thus $w'$ contains too much $c$'s to belong to $\mathcal L_1$. If $w_4=c^q$ then either
	$w_2=a^{q'}$, $w_2=b^{q'}$ or $w_2=c^{q'}$. Whatever the case, $w'$ misses either $a$'s or $b$'s
	to belong to $\mathcal L_1$.
	\noindent
	As mentioned before the coverability language for the net in Figure~\ref{fig:PN_for_L1_and_L3} with final marking $p_f$ is $\mathcal L_1$.
\end{proof}

Using the previous results, the next theorem emphasises
the expressive power of coverability languages of RPNs.
\begin{theorem}\label{context-free}
	The family of coverability languages of RPNs strictly include
	the union of the family of coverability languages of PNs
	and the family of context-free languages.
\end{theorem}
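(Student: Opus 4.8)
The plan is to prove three things: (i) the family of coverability languages of RPNs contains every coverability language of a PN; (ii) it contains every context-free language; and (iii) the inclusion is strict, i.e.\ there is an RPN coverability language that is neither a PN coverability language nor context-free.

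For (i), observe that a PN is literally an RPN with $T_{ab}=T_\tau=\emptyset$: given a marked PN $(\N,m)$ with final markings $M_f$, the RPN with the same places and (elementary) transitions, initial state $s[r,m]$, and final states $\{s[r,m_f]\mid m_f\in M_f\}$ has exactly the same firing sequences (every reachable state is a single thread), hence the same coverability language. For (ii), I would invoke Proposition~\ref{prop:CFinRPN}, which already states that every context-free language is an RPN coverability language. Thus the union of the two families is contained in the family of RPN coverability languages.

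For strictness I would exhibit a single witness. Consider $\mathcal L_1=\{a^mb^nc^p\mid m\geq n\geq p\}$. By Proposition~\ref{prop:PNnotCF}, $\mathcal L_1$ is a PN coverability language, hence (by (i)) an RPN coverability language, and it is \emph{not} context-free. So $\mathcal L_1$ separates the RPN family from the context-free family. It remains to produce a language that is an RPN coverability language but is neither a PN coverability language nor context-free; for this take a language that is context-free (hence RPN coverable) but not PN-coverable, and combine. The cleanest route: by Proposition~\ref{prop:closedunion} the RPN coverability languages are closed under union, so $\mathcal L:=\mathcal L_1\cup\mathcal L_2$ is an RPN coverability language, where $\mathcal L_2=\{w\tilde w\mid w\in\{d,e\}^*\}$ is context-free (a palindrome language over a disjoint alphabet). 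Now $\mathcal L$ is not context-free, since intersecting with the regular language $a^*b^*c^*$ recovers $\mathcal L_1$, which is not context-free, and context-free languages are closed under intersection with regular languages. And $\mathcal L$ is not a PN coverability language: intersecting with the regular language $\{d,e\}^*$ gives $\mathcal L_2$ (since $\mathcal L_1$ uses a disjoint alphabet), and PN coverability languages are closed under intersection with regular languages, while $\mathcal L_2$ is not a PN coverability language because a Petri net cannot check the matching $w\tilde w$ — the standard pumping argument for VASS/PN languages applies (a PN coverability language over $\{d,e\}$ containing all $w\tilde w$ and closed upward under the monotone behaviour forces words outside $\mathcal L_2$). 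Hence $\mathcal L$ lies in the RPN family but in neither of the two smaller families, which gives strictness.

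The main obstacle is the last point: arguing cleanly that $\mathcal L_2$ (or whichever witness one picks) is \emph{not} a Petri net coverability language. One must be careful, because PN coverability languages are a genuinely larger class than PN reachability languages and are closed under intersection with regular sets, so one cannot just cite "PN languages are semilinear" or a reachability-language argument; I would instead rely on the monotonicity/pumping lemma for coverability languages of well-structured systems (e.g.\ the iteration lemma in~\cite{DBLP:journals/acta/GeeraertsRB07} or Peterson's book~\cite{nla.cat-vn2956435}), which forbids a PN coverability language from consisting exactly of the matched palindromes. Alternatively, if a self-contained argument is preferred, one can replace $\mathcal L_2$ by a language already handled in Peterson's book as a non-PN context-free language and quote that directly. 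Everything else is routine: the PN-as-RPN embedding in (i) is immediate from the semantics, Proposition~\ref{prop:CFinRPN} gives (ii) for free, and closure under union (Proposition~\ref{prop:closedunion}) together with closure of both smaller classes under intersection with regular languages does the bookkeeping for (iii).
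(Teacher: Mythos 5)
Your proposal follows essentially the same route as the paper: the inclusion comes from Proposition~\ref{prop:CFinRPN} (plus the trivial observation that a PN is an RPN), and strictness is witnessed by the very same language $\mathcal L=\mathcal L_1\cup\mathcal L_2$, put into the RPN family via closure under union (Proposition~\ref{prop:closedunion}) and separated from the two smaller families by reducing to ``$\mathcal L_1$ is not context-free'' (Proposition~\ref{prop:PNnotCF}) and ``the palindrome language is not a PN coverability language''. The only differences are cosmetic or bibliographic. For the bookkeeping you intersect with the regular languages $a^*b^*c^*$ and $\{d,e\}^*$, while the paper projects by the homomorphisms erasing the other alphabet; both closure properties (CF and PN coverability languages under intersection with regular sets, respectively under homomorphism) are standard, so either works. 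For the crux you correctly identify --- that two-letter palindromes are not a PN coverability language --- the paper simply cites Lambert~\cite{Lambert92}; your parenthetical ``monotone pumping'' sketch is not by itself a proof (it is precisely the non-trivial part), but your fallback of quoting a known result from the literature is exactly what the paper does, so with the reference replaced by \cite{Lambert92} your argument is complete.
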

\begin{proof}
	The inclusion is an immediate consequence of Proposition~\ref{prop:CFinRPN}.
	Consider the language
	$\mathcal L=\mathcal L_1 \cup \mathcal L_2$.
	
	\noindent
	Since (1) by Proposition~\ref{prop:closedunion}, the family of
	coverability languages of RPNs is closed under union,
	(2) $\mathcal L_1$  is a PN language, and
	(3) the language
	of palindromes is a context-free language, we deduce that $\mathcal L$ is an RPN language.
	
	\noindent
	PN and context-free languages are closed under homomorphism.
	Since the projection of $\mathcal L$ on $\{a,b,c\}$ is the language
	of Proposition~\ref{prop:PNnotCF}, $\mathcal L$ is not a context-free language.
	The projection of $\mathcal L$ on $\{d,e\}$ is the language
	of palindromes. Since it was seen in~\cite{Lambert92}  that the language of (2 letters) palindromes
	is not a coverability language for any PN we are done.
\end{proof}

The next propositions show that the family of coverability languages of an RPN
is a particular family of reachability languages of an RPN : the family of \emph{cut languages}.
A cut language of an RPN is a reachability language with a single final state $\emptyset$.

\begin{proposition}
\label{prop:reach-inc-cov}
The family of cut languages of RPNs
is included in the family of coverability languages of RPNs.
\end{proposition}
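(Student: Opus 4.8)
The naive attempt fails: if one keeps $\N$ unchanged and declares the state to be covered to be $\emptyset$, then since $\emptyset\preceq s$ for \emph{every} state $s$, covering $\emptyset$ is vacuous, so one obtains the whole prefix-closed trace set of $(\N,s_0)$, not its cut language. The plan is instead to \emph{witness} the event ``the whole tree has just collapsed'' by the appearance of a fresh token. Using the reduction of Section~\ref{sec:reductions} (Lemmas~\ref{lemma:fromNtoroot} and~\ref{lemma:fromroottoN}, with the auxiliary transitions of $\rooted{\N}$ labelled by $\varepsilon$) one may assume w.l.o.g. that $s_0=s[r_0,m_0]$ is a single vertex. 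From a labelled $(\N,s_0,\lambda)$ build a labelled RPN $\N'$ as follows: add a fresh root $r'$ and three fresh places $p_{start},p_{end},c$; add an abstract transition $t_0$ with $W'^-(t_0)=p_{start}$, $W'^+(t_0)=p_{end}$, $\Omega'(t_0)=m_0+c$ and $\lambda'(t_0)=\varepsilon$; keep every original transition with its label but make it require the control place $c$ in a way that the firing thread \emph{keeps} its $c$: each $t\in T_{el}$ consumes and re-produces $c$, each $t\in T_\tau$ additionally consumes $c$, and each $t\in T_{ab}$ is split into an elementary ``grab'' step (consuming $W^-(t)+c$, producing $c$ together with a fresh private place $q_t$, labelled $\varepsilon$) followed by an abstract ``expand'' step (consuming $q_t$, producing the edge label $W^+(t)$, with $\Omega'=\Omega(t)+c$ and the original label). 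The initial state of $\N'$ is $s[r',p_{start}]$ and its single final state is $s[r',p_{end}]$.

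The role of $c$ is that $r'$ never holds a $c$ (nor, after one firing, $p_{start}$, nor any $q_t$), so $r'$ can fire nothing except $t_0$, exactly once; afterwards its marking is $\mathbf 0$ and $r'$ stays inert until its unique child $v$ (created by $t_0$, initialised as a copy of $s_0$) disappears. The subtree rooted at $v$ evolves exactly as $(\N,s_0)$ does: the $c$- and $q_t$-tokens are pure bookkeeping, and an unfinished ``grab'' step only consumes tokens and contributes no letter, so it is harmless and, by moving each matched grab/expand pair together and deleting the unmatched grabs, a run of $\N'$ inside $v$ is turned into a faithful run of $\N$ with the same label. Hence $v$'s subtree collapses to $\emptyset$ precisely when the corresponding run of $\N$ reaches $\emptyset$. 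Moreover $p_{end}$ is fresh and appears only as the output label $W'^+(t_0)$ of the single edge $r'\!\to\!v$, so a vertex of a reachable state of $\N'$ carries a token in $p_{end}$ iff that edge has been absorbed, i.e.\ iff $v$ has collapsed; and at that moment the state is exactly $s[r',p_{end}]$, after which $r'$ can do nothing. Therefore a reachable state of $\N'$ covers $s[r',p_{end}]$ (for $\preceq$) iff $v$ has collapsed.

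It remains to check the two inclusions. If $s_0\xrightarrow{\tau}_{\N}\emptyset$, then in $\N'$ firing $(r',t_0)$ and replaying $\tau$ inside $v$ (grabbing and expanding each abstract step) collapses $v$ and reaches $s[r',p_{end}]$, with label $\varepsilon\,\lambda(\tau)=\lambda(\tau)$; so the cut language is contained in $\mathcal{L}_C(\N',s[r',p_{start}],\{s[r',p_{end}]\})$. Conversely, any covering sequence must begin with $(r',t_0)$, then proceeds entirely inside $v$'s subtree, and must end with the cut that collapses $v$ (nothing is enabled afterwards); applying the pruning/reordering described above to this suffix yields a run $s_0\xrightarrow{}_{\N}\emptyset$ with the same label, so the covering language is contained in the cut language. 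Hence the two languages coincide, which proves the proposition. The main obstacle is exactly the monotone, any-vertex nature of $\preceq$: collapse must be detected through a dedicated fresh place whose production is \emph{unforgeable}, which forces both the control-place trick and the splitting of abstract transitions — the latter being needed because the abstract firing rule returns nothing to the thread that fires it, so a control token cannot simply be ``consumed and re-produced'' by an abstract transition.
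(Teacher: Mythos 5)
Your construction is correct and follows essentially the same route as the paper: after the rooted reduction, a fresh root fires a single $\varepsilon$-labelled abstract transition that spawns a copy of the original initial marking, and the coverability target is the single-vertex state marked with the fresh place that is returned to the root exactly when that spawned copy is cut (the paper realizes this with just two places $todo$, $done$ and one abstract transition $start$ with $\Omega(start)=m_0$). Your additional control place $c$ and the grab/expand splitting of abstract transitions are sound extra safeguards ensuring the new root can fire nothing but the spawning transition and that nothing remains enabled after the collapse (e.g.\ transitions with empty precondition), points the paper's leaner construction leaves inside its ``routine to check'' step.
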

\begin{proof}
        Due to the correspondence between firing sequences of $(\N,s_0)$ and those of $(\rooted{\N},\sr)$,
        established in the previous section,
        one can assume w.l.o.g. that the initial markings of the RPNs have a single vertex.
Let $\mathcal L_R(\N,s[r,m_0],\{\emptyset\})$ be such a reachability language.\\
$\N'$ is obtained by adding places $todo$ and $done$
and a transition $start \in T'_{ab}$ with:\\
\centerline{$\lambda'(start)=\varepsilon$, $W'^-(start)=todo$, $W'^+(start)=done$,  $\Omega'(start)=m_0$.}\\
Then it is routine
to check that $\mathcal L_C(\N',s[r,todo],\{s[r,done]\})=\mathcal L_R(\N,s[r,m_0],\{\emptyset\})$.
\end{proof}

Establishing the converse inclusion is more intricate.

\begin{proposition}
\label{prop:cov-inc-reach}
The family of coverability languages of RPNs is included
in the family of cut languages of RPNs.
\end{proposition}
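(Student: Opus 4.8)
The goal is to turn a coverability language $\mathcal L_C(\N,s_0,S_f)$ into a cut language of some RPN $\N'$. By Proposition~\ref{col:rooted} (and the remark in earlier proofs) I may assume $s_0=s[r,m_0]$ is a single vertex. The difficulty compared with Proposition~\ref{prop:reach-inc-cov} is that covering a target state $s_f$ is much weaker than reaching $\emptyset$: we must (i) guess \emph{which} final state $s_f\in S_f$ we intend to cover, (ii) locate inside the current tree a subtree isomorphic to $s_f$ whose markings and edge labels dominate those of $s_f$, and (iii) having done so, \emph{erase the whole tree} so as to reach $\emptyset$, all while emitting only $\varepsilon$ so the accepted word is exactly $\lambda(\sigma)$.

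First I would add a fresh control place $go$, initially empty, and an elementary $\varepsilon$-transition $t_b$ with $W'^-(t_b)=$ (the start place of $s_0$'s single vertex, or rather $todo$ as in the previous proof) producing $m_0+go$; this keeps a single persistent token in $go$ throughout any run that does real work, exactly as the $p$-place trick in Proposition~\ref{prop:closedunion}, and a symmetric place $p'$ per thread so that abstract transitions propagate $go$ to children. (Alternatively: add $go$ to $\Omega$ of every abstract transition and to every $W^\pm$; the bookkeeping is routine.) The point is that the ``checking/erasing'' phase below can only be started when the simulation has been faithful, and once started no original transition of $\N$ is enabled anymore because I will have $t_b$ move the token from $go$ to a new place $check$.

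The core of the construction is a \emph{verification-and-cut gadget}, one copy per $s_f\in S_f$. For a fixed $s_f=(m_f,Child_{s_f})$, its size is a constant of the input, so I can hard-code, in the new RPN, a family of $\varepsilon$-labelled transitions whose firing sequences walk down the tree matching $s_f$ top-down: at the root the gadget consumes $m_f(p)$ tokens from each place $p$ (this is legal iff $M_s(r)\ge m_f$, i.e. iff the root dominates the root of $s_f$), then for each child $(m',s')\in Child_{s_f}$ it must descend along an edge labelled by some $m''\ge m'$ into a subtree dominating $s'$. Descending along an edge and recursing is done by firing a fresh abstract/cut pair or, more simply, by passing a ``token'' marker down via a dedicated place that, when combined with the child's own marking, lets the child continue the verification; this is exactly the mechanism already used in the Turing-machine construction of Theorem~\ref{theo:re} to point at a designated thread. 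When the gadget has successfully matched all of $s_f$, every thread that participated (and only those) is in a distinguished ``verified'' local state. Finally I add, for each place $p$, an $\varepsilon$-labelled elementary transition that empties $p$ in a verified thread, and an $\varepsilon$-labelled cut transition $t_\tau$ enabled in a verified thread with empty marking; firing these bottom-up erases the whole tree and the run ends in $\emptyset$. For any thread \emph{not} on the matched subtree, I first give it, in ``verified'' mode, transitions that dump its entire (now irrelevant) marking and then let its parent cut it — the earlier "erase everything" routines of the returning-sequence analysis show this is always possible with $\varepsilon$-transitions, since from a verified state all three transition types are freely available to clear tokens and then cut.

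Correctness has two directions. ($\subseteq$) Given $s_0\xrightarrow{\sigma} s\succeq s_f$ in $\N$, the witnessing injection $f$ tells the gadget exactly which threads to walk into and which markings to consume; the leftover tokens and threads are cleared as above, so $\lambda(\sigma)$ is in the cut language of $\N'$. ($\supseteq$) A cut run of $\N'$ must start with $t_b$ (nothing else is enabled), then performs original $\N$-transitions with the same labels, then at some point switches into one gadget; the gadget can only succeed if the current state genuinely dominates the corresponding $s_f$; the erasing phase is $\varepsilon$-labelled; hence the emitted word labels a genuine covering sequence of $\N$. The main obstacle — and the step I would spend the most care on — is the descent mechanism: matching a subtree \emph{isomorphically} requires pointing at a specific child and recursing without the gadget ``wandering off'' or being reused twice on the same branch, and without the originally-present tokens in non-matched places blocking the clean-up; I would handle this exactly by the pointer-in-a-place idiom of Theorem~\ref{theo:re}, keeping one token that is handed from a thread to a chosen child, so that injectivity of $f$ is mirrored by the fact that the pointer visits each matched thread exactly once.
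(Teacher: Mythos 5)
Your construction hinges on a mechanism that RPNs do not have. In the operational semantics the only interactions between threads are: an abstract transition fired in $u$ \emph{creates a fresh} child with initial marking $\Omega(t)$, and a cut transition fired in a thread erases that thread together with its whole subtree and returns $W^+(t)$ to its parent. There is no way for a thread to transmit a token to an \emph{already existing} child, so your verification-and-cut gadget cannot ``walk down the tree matching $s_f$ top-down'' by ``passing a token marker down via a dedicated place'': the chosen child can never receive the marker. The idiom of Theorem~\ref{theo:re} that you invoke does not do this either — there the head pointer moves down only by \emph{creating new} threads ($to_a$, $left_a^{\rightarrow}$, $right_a^{\leftarrow}$, \dots) and moves up only by cutting the current leaf; it never descends into an existing thread. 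A related obstruction: the label $m''$ of an existing edge is not inspectable by any transition (it only materializes in the parent when the child is cut), so the condition $m''\ge m'$ cannot be checked after the fact along your top-down walk. Secondary, but symptomatic: ``let its parent cut it'' is not a primitive (cut transitions are fired by the thread itself and erase its entire subtree, which in fact makes clean-up free), and a cut ``enabled in a verified thread with empty marking'' would be a zero-test, which RPNs cannot express.

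The paper's proof circumvents exactly this impossibility by making the guess at \emph{creation time} rather than verifying a posteriori: after a preliminary transformation (your $go$/\,$check$ phase separation plays the role of the paper's $start$/$run$ net $\N^*$), the abstract transitions that create the branch from the root to $f(r_{s_f})$ and the threads of the image $f(V_{s_f})$ are replaced by special copies $t_{Br}$, $t_{r_s}$, $t_v$ — the copy $t_v$ existing only when $W^+(t)\ge m_v$, which is how edge-label domination is enforced — and these copies plant marker places ($todo$, $done$, $p_v$, $p_{u,v}$) in the created threads. Verification is then performed \emph{bottom-up} by new cut transitions $\tau_v$ with precondition $M_{s_f}(v)+p_v+\sum_{v\rightarrow w} p_{v,w}$: the tokens $p_{v,w}$ returned by the cuts of the children certify that each required child has already been verified, the marking condition checks domination, ordinary cut transitions are disabled on these threads (they lack the $cut$ place), and unmatched subtrees disappear automatically when their matched or branch ancestor is cut. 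You would need to replace the core of your argument by some such creation-time marking (or find another way to avoid downward communication); as it stands, step (ii) of your plan cannot be implemented.
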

\begin{proof}
        Due to the correspondence between firing sequences of $(\N,s_0)$ and those of $(\rooted{\N},\sr)$,
        established in the previous section,
        one can assume w.l.o.g. that the initial markings of the RPNs have a single vertex.
Let $\mathcal L_C(\N,s[r,m_0],S_f)$ be a coverability RPN language.

\medskip\noindent {\bf Case $\emptyset \in S_f$.}
Observe that in this case we can reduce  $S_f$ to $\{\emptyset\}$.
Then  $\N'$ is obtained from $\N$ by adding a place $root$
and a cut transition $t_{root}$ with $\lambda'(t_{root})=\varepsilon$ and $W'^{-}(t_{root})=root$. It is routine
to check that the reachability language
$\mathcal L_R(\N',s[r,m_0+root],\{\emptyset\})=\mathcal L_C(\N,s[r,m_0],\{\emptyset\})$.

\medskip
\noindent {\bf Case $\emptyset \notin S_f$.}
Consider the net $\N^*$ obtained from $\N$ by adding two places $start$ and $run$
with  $m^*_0=start$, transitions $t_{run}\in T_{el}$
and $t_{start}\in T_{ab}$ with $\lambda^*(t_{run})=\lambda^*(t_{start})=\varepsilon$ and:\\
$W^{*-}(t_{run})=run$, $W^{*+}(t_{run})=2run$,\\
$W^{*-}(t_{start})=start$, $W^{*+}(t_{start})={\bf 0}$ and $\Omega^*(t_{start})=m_0+run$.
\begin{itemize}[nosep]
  \item For all $t\in T_{el}$,
           $W^{*-}(t)=W^{*-}(t)+run$ and $W^{*+}(t)=W^{+}(t)$;
  \item For all $t\in T_{ab}$,  $\Omega^*(t)=\Omega(t)+run$,
           $W^{*-}(t)=W^{-}(t)+run$ and $W^{*+}(t)=W^{+}(t)+run$;
  \item For all $t\in T_{\tau}$,  $W^{*-}(t)=W^{-}(t)+run$.
\end{itemize}
Let $S^*_f$ be $S_f$ where all markings are increased by $run$.\\
Then it is routine
to check that:
$\mathcal L_C(\N^*,s[r,m^*_0],S^*_f)=\mathcal L_C(\N,s[r,m_0],S_f)$.\\
Furthermore (1) the empty tree is not reachable in $(\N^*,s[r,m^*_0])$
and (2) for any coverability sequence $s[r,m^*_0] \xrightarrow{\sigma} s\succeq s_f \in S^*_f$,
$r$ does not belong to the image of the corresponding mapping $f$.
Thus in the rest of the proof
we assume that $(\N,s[r,m_0],S_f)$ fulfills these properties.
We also assume
w.l.o.g. that all vertices in $S_f$ are distinct. We denote $V_f$ this set of vertices.

\medskip
\noindent
Let $\N'$ obtained as follows.\\
One adds  places $todo,done,cut,\{p_v\mid v\in V_f\}, \{p_{u,v}\mid s \in S_f,u\xrightarrow{m_v}_s v\}$.
\begin{itemize}[nosep]
  \item For all $t\in T_{el}$,
           $W'^{-}(t)=W^{-}(t)$ and $W'^{+}(t)=W^{+}(t)$;
  \item For all $t\in T_{ab}$,
           $W'^{-}(t)=W^{-}(t)$, $W'^{+}(t)=W^{+}(t)$ and $\Omega'(t)=\Omega(t)+cut$;
  \item For all $t\in T_{\tau}$,  $W'^{-}(t)=W^{-}(t)+cut$.
\end{itemize}
For all $t \in T_{ab}$, one adds the following abstract transitions:
\begin{itemize}[nosep]
	\item one adds $t_{Br}\in T'_{ab}$ with $\lambda'(t_{Br})=\lambda(t)$ and\\
	$W'^-(t_{Br})=W^-(t)+todo$, $W'^+(t_{Br})=done$, $\Omega'(t_{Br})=\Omega(t)+todo$;
	\item For all $r_s$ with $s\in S_f$
	one adds  $t_{r_s}\in T'_{ab}$ with $\lambda'(t_{r_s})=\lambda(t)$ and\\
	$W'^-(t_{r_s})=W^-(t)+todo$, $W'^+(t_{r_s})=done$, $\Omega'(t_{r_s})=\Omega(t)+(|\{r_s\xrightarrow{m_w}_s w\}|+1)p_{r_s}$;
	\item For all $v \in V_s\setminus\{r_s\}$ with $s\in S_f$ and $u \xrightarrow{m_v}_s v$ such that $W^+(t)\geq m_v$,\\
	one adds  $t_{v}\in T'_{ab}$ with $\lambda'(t_{v})=\lambda(t)$ and\\
	$W'^-(t_{v})=W^-(t)+p_u$, $W'^+(t_{v})=p_{u,v}$, $\Omega(t_{v})=\Omega(t)+(|\{v\xrightarrow{m_w}_s w\}|+1)p_{v}$.
\end{itemize}
One adds the following cut transitions:
\begin{itemize}[nosep]
	\item One adds $\tau_{done}\in  T_{\tau}$ with $W'^{-}(\tau_{done})=done$ and $\lambda'(\tau_{done})=\varepsilon$.
	\item For all $v \in V_s$ with $s\in S_f$
	one adds  $\tau_{v}\in T'_{\tau}$ with $\lambda'(\tau_{v})=\varepsilon$ and\\
	$W'^-(\tau_{v})=M_s(v)+p_{v}+\sum_{v\xrightarrow{m_w}_s w}p_{v,w}$.
\end{itemize}
\noindent
Let us prove that
$\mathcal L_R(\N',s[r,m_0+todo],\{\emptyset\})=\mathcal L_C(\N,s[r,m_0],S_f)$.

\medskip\noindent
$\bullet$ $\mathcal L_C(\N,s[r,m_0],S_f)\subseteq \mathcal  L_R(\N',s[r,m_0+todo],\{\emptyset\})$.
Consider in $\N$ a coverability sequence $s[r,m_0] \xrightarrow{\sigma} s\succeq s_f \in S_f$ with
$f$ the mapping from $V_{s_f}$ to $V_s$. Let $Br$ be the branch in $s$ from $r$ to $f(r_{s_f})$,
excluding $f(r_{s_f})$.
We build a sequence $\sigma'$ as follows.
\begin{itemize}[nosep]
	\item Let $v \in Br\setminus\{r\}$ and $(u,t)$ be the firing in $\sigma$ that creates $v$.\\
	 Then we substitute $(u,t)$ by $(u,t_{Br})$;
	\item Let $(u,t)$ be the firing in $\sigma$ that creates $f(r_{s_f})$.
	 Then we substitute $(u,t)$ by $(u,t_{r_{s_f}})$;
	\item Let $v \in V_{s_f}\setminus \{r_{s_f}\}$ and  $(u,t)$ be the firing in $\sigma$ that creates $f(v)$.\\
	Then we substitute $(u,t)$ by $(u,t_{r_{s_f}})$.
\end{itemize}\smallskip\noindent
Then $\sigma'$ is a firing sequence of $(\N',s[r,m_0+todo])$ that leads to $s'$ with the
same tree structure (and vertices) as the one of $s$ and where the markings labelling $s'$
are defined as follows.
\begin{itemize}[nosep]
	\item For all $v \in V_{s'}\setminus (Br\cup f(V_{s_f}))$, $M_{s'}(v)=M_{s}(v)+cut$,\\
	and all $u\xrightarrow{m'_v}_{s'} v$ and  $u\xrightarrow{m_v}_{s} v$, one has $m'_v=m_v$;
	\item  For all $v \in Br$, $M_{s'}(v)=M_{s}(v)$. For all $v\xrightarrow{m'_w}_{s'} w$ with $w \in Br \cup \{f(r_{s_f})\}$,
	$m'_w=done$;
	\item For all $v \in V_{s_f}$, $M_{s'}(f(v))=M_{s}(f(v))+p_v$.
	For all $f(v)\xrightarrow{m'_w}_{s'} f(w)$,
	$m'_w=p_{v,w}$.
\end{itemize}\medskip\noindent
Observe that $\lambda(\sigma')=\lambda(\sigma)$. Then one completes $\sigma'$ by
firing $\{(f(v),\tau_v)\}_{v\in V_{s_f}}$ bottom up followed by firing $\{(v,\tau_{done})\}_{v\in Br}$ bottom up
leading to $\emptyset$.

\medskip\noindent
$\bullet$ $\mathcal L_R(\N',s[r,m_0+todo],\{\emptyset\})\subseteq \mathcal L_C(\N,s[r,m_0],S_f)$.
Observe that in $(\N',s[r,m_0+todo])$ the only way to reach $\emptyset$ is to fire $\tau_{done}$ since in $r$
(by induction) only abstract transitions of $T_{ab}$, $\{t_{Br}\mid t\in T_{ab}\}$ and
$\{t_{r_s}\mid t\in T_{ab} \wedge s\in S_f\}$ are fireable
and places $cut$ and $\{p_v\}_{v\in V_f}$ are initially unmarked. Furthermore a single firing
$\{t_{Br}\mid t\in T_{ab}\}$ and
$\{t_{r_s}\mid t\in T_{ab} \wedge s\in S_f\}$ is at most possible in $r$ since
no transition can produce tokens for $todo$ in $r$.

\medskip\noindent
So consider in $\N'$ a firing sequence $s[r,m_0+todo] \xrightarrow{\sigma'} \emptyset$.
Due to the previous observation before the firing $(r,\tau_{done})$ ending $\sigma'$, there has been  in $\sigma'$ a firing
of $(r,t_{Br})$ or $(r,t_{r_s})$ for some $t\in T_{ab}$ and $s\in S_f$ creating a vertex $v_1$ followed by the firing of a cut transition in $v_1$.
Since $\Omega'(t_{Br})=\Omega(t)+todo$, if  $v_1$ has been created by $(r,t_{Br})$ then the only cut transition that can be fired in $v_1$
is $\tau_{done}$. Since $\lambda'(\tau_{done})=\varepsilon$ and $W'^+(t_{Br})=done$, this firing can delayed in $\sigma'$ just before
the firing of $(r,\tau_{done})$.

\noindent
Furthermore there
must have been before this firing, the firing of  $(v_1,t_{Br})$ or $(v_2,t_{r_s})$ for some $t\in T_{ab}$ and $s\in S_f$ creating a vertex $v_2$
followed by the firing of a cut transition in $v_2$.
Since this iterated reasoning must  end, there must be some $v_k$ created by the firing of $(v_{k-1},t_{r_s})$ (with $v_0=r$)
for some $t\in T_{ab}$ and  $s\in S_f$. We denote by $f(r_s)$ the vertex $v_k$.

\noindent
Since  $\Omega'(t_{r_s})=\Omega(t)+(|\{r_s\xrightarrow{m_w}_s w\}|+1)p_{r_s}$, the only  cut transition that can be fired in $f(r_s)$ is $\tau_{r_s}$. Since $\lambda'(\tau_{r_s})=\varepsilon$ and $W'^+(t_{r_s})=done$, this firing can delayed
in $\sigma'$ just before the firing of $(v_{k-1},\tau_{done})$.
Furthermore  the firing of this cut transition must have been preceded for all $r_s\xrightarrow{m_w}_s w$
by the firing of some abstract transition $(v_k,t_w)$ creating a vertex denoted $f(w)$ followed by the firing
of a cut transition in $f(w)$.

\noindent
Applying the same reasoning for $f(w)$ as the one for $f(r_s)$, one gets that
the only  cut transition that can be fired in $f(w)$ is $\tau_{w}$ and that all the firings related to these $w$'s
can be delayed before the firing  $(f(r_s),\tau_rs)$.

\noindent
Iterating this process, one obtains that $\sigma'$ can be reordered as $\sigma''\sigma_{\tau}$ with
$\lambda'(\sigma'')=\lambda'(\sigma')$, and $\sigma_{\tau}$ is a sequence of
cut transition firings with $\lambda(\sigma_\tau)=\varepsilon$.

\noindent
Let $s''$ be the state of reached by $\sigma''$: it includes a branch created by the firings
among $\{t_{Br}\}_{t\in T_{ab}}$ followed by a tree whose set vertices is $f(V_s)$ and every vertex $f(v)$ has
been created by the firing of some transition in $\{t_{v}\}_{t\in T_{ab}}$. Observe that due to our observations
on $(\N',s[r,m_0+todo])$ all other firings of $\sigma''$ are firings of transitions in $T$. By substituting
in $\sigma''$ all $t_{Br}$ by $t$ and all $t_{v}$ by $t$, one gets a firing sequence $\sigma$ of $(\N,s[r,m_0])$
with $\lambda(\sigma)=\lambda'(\sigma')$ that covers $s$.
\end{proof}

The transformation presented in the above proof can be performed in polynomial
time and this will be used in the next section.
The next proposition establishes that, as for Petri nets, coverability
does not ensure the power of ``exact counting''. The proof is interesting by itself
since it combines an argument based on WSTS (case 1) and an argument \emph{\`a la} Ogden (case 2).

\begin{proposition}
	\label{prop:ReachPNnotCovRPN}
	
	$\mathcal L_3$ is the reachability language of the Petri net of Figure~\ref{fig:PN_for_L1_and_L3} but it
	is not the coverability language of any RPN.
\end{proposition}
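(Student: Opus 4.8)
The first assertion is the computation already sketched above Figure~\ref{fig:PN_for_L1_and_L3}: from the single token in the leftmost place, reaching \emph{exactly} the marking $p_f$ forces firing the $a$-transition (which keeps that token) some number $n$ of times, then the silent transition that empties the leftmost place, then the $b$-transition exactly $n$ times (the only way to empty the place fed by the $a$-transition), then the second silent transition, then the $c$-transition exactly $n$ times; hence the reachability language of this net is $\mathcal L_3$. It remains to prove that $\mathcal L_3$ is the coverability language of no RPN. Assume the contrary, say $\mathcal L_3=\mathcal L_C(\N,s_0,S_f)$. As in the previous proofs, using the correspondence between $(\N,s_0)$ and $(\rooted{\N},\sr)$, we may assume $s_0=s[r,m_0]$; and since $S_f$ is finite we may, after restricting to an infinite set $I\subseteq\nat$, fix one $s_f\in S_f$ so that for every $n\in I$ there is $\sigma_n$ with $s[r,m_0]\xrightarrow{\sigma_n}s_n\succeq s_f$ and $\lambda(\sigma_n)=a^nb^nc^n$. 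I split on whether the heights (depths) of the trees visited along the $\sigma_n$ can be kept bounded.

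\textbf{Case 1 (bounded depth -- a WSTS argument).} Suppose there is $D$ such that for infinitely many $n\in I$ the sequence $\sigma_n$ can be chosen so that every state it visits has depth at most $D$. First I would prove that the restriction of $\preceq$ to states of depth $\le D$ is a \emph{well} quasi-order, by induction on $D$: the base case is Dickson's lemma on $\nat^{P}$, and the inductive step uses that finite multisets over a well quasi-order, equipped with the embedding order, again form a well quasi-order (Higman); this is exactly the shape of $\preceq$ restricted to depth-$(\le D)$ states, the edge labels ranging over the finite set $\{W^{+}(t)\}_{t\in T_{ab}}$. Now for such an $n$ let $u_n$ be the state reached by the shortest prefix of $\sigma_n$ whose label is $a^n$, and write $\sigma_n=\alpha_n\rho_n$, so that $u_n\xrightarrow{\rho_n}s_n\succeq s_f$ with $\lambda(\rho_n)=b^nc^n$. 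All the $u_n$ have depth $\le D$, hence by the well quasi-order there are $n<n'$ with $u_n\preceq u_{n'}$. Iterating the transition-preserving compatibility of $\preceq$ recalled in Section~\ref{sec:recursive}, from $u_{n'}$ one can fire some $\rho'$ with $\lambda(\rho')=\lambda(\rho_n)=b^nc^n$ reaching $s'\succeq s_n\succeq s_f$; then $\alpha_{n'}\rho'$ is a covering sequence of label $a^{n'}b^nc^n$, so $a^{n'}b^nc^n\in\mathcal L_3$, contradicting $n'>n$.

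\textbf{Case 2 (unbounded depth -- an argument à la Ogden).} Otherwise, for every $D$ all but finitely many $n\in I$ have the property that \emph{every} covering sequence for $a^nb^nc^n$ visits a state of depth $>D$; equivalently, the minimal depth $d(n)$ of such a sequence tends to infinity. Fix $n$ large and a covering sequence $\sigma_n$ of depth $d(n)$; at some moment it exhibits a root-to-leaf path $r=w_0,w_1,\dots,w_{d(n)}$, where each $w_i$ ($i\ge1$) is born by firing some abstract transition $t_i\in T_{ab}$, the edge into $w_i$ carrying $W^{+}(t_i)$ and the birth marking of $w_i$ being $\Omega(t_i)$. Let $Q_i$ be the set of positions of $a^nb^nc^n$ produced inside the subtree rooted at $w_i$; then $Q_0\supseteq Q_1\supseteq\cdots\supseteq Q_{d(n)}$, with $Q_0$ the full set of positions. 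Since the behaviour of a subtree depends only on the initial marking of its root and its sole coupling with the parent is the fixed return marking $W^{+}(t_i)$, a repetition $t_i=t_j$ ($i<j$) with $Q_i=Q_j$ would allow splicing the sub-derivation of $w_j$ in place of that of $w_i$, yielding a covering sequence with the same label but strictly smaller depth -- contradicting minimality of $d(n)$. Hence, combining pigeonhole on the finitely many abstract transitions along the path with an Ogden-style marking of the positions, one extracts $i<j$ with $t_i=t_j$ such that the threads $w_i,\dots,w_{j-1}$ together with their off-path children jointly produce a \emph{nonempty} contiguous infix $u$ of $a^nb^nc^n$ that is not produced inside $w_j$'s subtree. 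Grafting a fresh copy of the sub-derivation of $w_i$ in place of that of $w_j$ then yields a covering sequence whose label is $a^nb^nc^n$ with $u$ duplicated; as $u$ is contiguous it lies within one block of $a^nb^nc^n$ or straddles only two consecutive blocks, so the new label is not of the form $a^kb^kc^k$, the desired contradiction.

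The delicate point, and the main obstacle, is Case~2: one must make the splicing and grafting rigorous despite the concurrency of threads and the possible presence of threads surviving into $s_n$ -- so that the label changes exactly by duplicating a contiguous, ``at most two blocks'' infix and the covering of $s_f$ is preserved -- and one must guarantee through the Ogden-style marking that the grafted segment is genuinely nonempty, since otherwise the surgery changes nothing. Case~1, by contrast, is a routine application of the WSTS toolbox once the bounded-depth well-quasi-ordering lemma is in place.
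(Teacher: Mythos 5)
Your Case~1 is essentially the paper's Case~1 and is fine: bounded-depth states are well quasi-ordered by Dickson plus Higman, and transition-preserving compatibility of $\preceq$ lets you replay the $b^nc^n$-suffix from the larger state; working directly with coverability (rather than first passing to cut languages as the paper does via Proposition~\ref{prop:cov-inc-reach}) is harmless here, since covering $s_f$ is preserved by monotonicity.

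Case~2, however, has a genuine gap, and it is exactly the point you flag. The set of positions produced in the subtree of $w_i$ but not in that of $w_j$ is \emph{not} a contiguous infix in general: firings of the threads strictly between $w_i$ and $w_j$ (and their off-path descendants) interleave arbitrarily with firings inside $w_j$'s subtree and outside $w_i$'s subtree. No Ogden-style marking can force contiguity, and without it your final step collapses: the duplicated material could be a scattered subword, and any scattered subword of $a^mb^mc^m$ is automatically of the form $a^qb^qc^q$, so after duplication the label can perfectly well be $a^{n+q}b^{n+q}c^{n+q}\in\mathcal L_3$ --- no contradiction. Your preliminary splicing step is also shaky: removing the segment between $w_i$ and $w_j$ when $Q_i=Q_j$ need not decrease the \emph{global} maximal depth (other branches may realize it), need not preserve the covering of $s_f$ (vertices of $s_f$ may be matched inside the removed part), and needs $w_i$ and $w_j$ to agree on whether their sub-runs end with a cut, otherwise the parent loses the returned tokens $W^+(t_i)$. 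The paper's proof repairs all of this differently: it first reduces to cut languages (final state $\emptyset$), so the surgeries never have to preserve a final covering; it splits the run at the last $a$ and takes a sequence minimizing the \emph{number of threads} of that intermediate state; in the unbounded case it picks three nested edges with the same label and keeps two agreeing on the cut/no-cut status; if the two sub-traces are equal it contradicts thread-minimality (not depth-minimality), and if they differ it uses the scattered-subword observation to get $w=a^qb^qc^q$ with $q>0$ and then reschedules the two subtrees' firings ``in one shot'' so that the duplicated $a^qb^qc^q$ appears as a contiguous block, which destroys membership in $\mathcal L_3$. Those ingredients (cut-language reduction, thread-count minimality, cut/no-cut matching, and the one-shot rescheduling) are precisely what is missing from your sketch, so as written the unbounded-depth case does not go through.
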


\begin{proof}
	\noindent
	Due to Proposition~\ref{prop:cov-inc-reach}, it is enough to prove that there does not exist
	$(\N,s[r,m_0])$ such that $\mathcal L_3=\mathcal L_R(\N,s[r,m_0],\{\emptyset\})$.
	Assume  by contradiction that there exists such  $(\N,s[r,m_0]\})$.
		For all $n$, let $\sigma_n$ be a firing sequence reaching $\emptyset$
	such that $\lambda(\sigma_n)=a^nb^nc^n$
	and $\sigma'_n$ be the prefix of $\sigma_n$ whose last transition corresponds the last occurrence of $a$.
	Denote $s_n$ the state reached by $\sigma'_n$ and the decomposition by $\sigma_n=\sigma'_n\sigma''_n$.
	Among the possible $\sigma_n$, we select one such that $s_n$ has a minimal number of threads.
	Let $Post$ be the finite set of $\nat^P$ defined by: $Post=\{W^+(t)\}_{t\in T_{ab}}$.
	
	\begin{center}
		\begin{tikzpicture}[triangle/.style = {regular polygon, regular polygon sides=3 },xscale=0.45,yscale=0.45]					
			
			\path (0,3.4) node[label={[xshift=0.0cm, yshift=-0.2cm]\small{$s[r,m_0]$}}] (s0) {$\bullet$};
			\path (8,0) node[draw,minimum size=3cm,triangle,inner sep=0pt,
			label={[xshift=0.0cm, yshift=0cm]$s_n$}] (sn) {};
			\path (16,3.4) node[] (fn) {$\emptyset$};
			
			\draw[arrows=-latex'] (s0) -- (8,3.4) node[pos=0.5,above] {$\sigma'_n$} ;
			\draw[arrows=-latex'] (8,3.4) -- (fn) node[pos=0.5,above] {$\sigma''_n$} ;

			
			\path (4,2.5) node[] {\scriptsize{$\sigma'_n=\rho t$}};
			\path (4,1.5) node[] {\scriptsize{$\lambda(\rho)=a^{n-1}$}};
			\path (4,0.5) node[] {\scriptsize{$\lambda(t)=a$}};
			\path (12,1.5) node[] {\scriptsize{$\lambda(\sigma''_n)=b^{n}c^n$}};
			\path (12,1.5) node[] {\scriptsize{$\lambda(\sigma''_n)=b^{n}c^n$}};
			\path (8,-2) node[] {\tiny{minimal number of threads of $s_n$}};
			
		\end{tikzpicture}
	\end{center}

	\noindent
	{\bf Case 1.} There exists a bound $B$ of the depths of the trees corresponding to $\{s_n\}_{n\in \nat}$.
	Let $S_B$ be the set of abstract states of depth at most $B$ and different from $\emptyset$. Observe that $S_0$
	can be identified to $\nat^P$
	and  $S_B$ can be identified to $\nat^P \times {\sf Multiset}(Post \times S_{B-1})$.
	Furthermore  the (component) order on $\nat^P$ and the equality on $Post$
	are well quasi-orders. Since well quasi-order is preserved by the multiset operation and the cartesian product,
	$S_B$ is well quasi-ordered by a quasi-order denoted $<$. By construction, $s\leq s'$ implies $s\preceq_r s'$. Thus there
	exist $n<n'$ such that $s_{n}\preceq_r s_{n'}$ which entails that $\sigma'_{n'}\sigma''_{n}$ is a firing
	sequence with trace $a^{n'}b^{n}c^{n}$ reaching $\emptyset$ yielding a contradiction.

	\medskip\noindent
	{\bf Case 2.} The depths of the trees corresponding to $\{s_n\}_{n\in \nat}$ are unbounded.
	There exists $n$ such that the depth of $s_n$ is greater than $(2|Post|+1)$.
	Thus in $s_n$ for $1\leq j\leq 3$, there are edges $u_j \xrightarrow{m}_{s_n}v_j$
	and denoting $i_j$ the depth of $v_j$, one has $0<i_1<i_2<i_3$.

	\medskip\noindent
	For $k\in \{1,2,3\}$, consider of the sequence $\rho_k$ performed in the subtree rooted in $v_k$
	by the firings of $\sigma_n$. Among these three firing sequences two of them either (1)  both finish by a cut transition in $v_k$
	or (2) both do not finish by a cut transition in $v_k$. Let us call $i,j$ with $i<j$ the indices of these sequences and $w_i$  and $w_j$
	their traces. We have illustrated the situation below.
	
		\begin{center}
		\begin{tikzpicture}[triangle/.style = {regular polygon, regular polygon sides=3 },xscale=0.45,yscale=0.45]
			\path (0,3.4) node[label={[xshift=0.0cm, yshift=-0.2cm]\small{$s[r,m_0]$}}] (s0) {$\bullet$};
			
			\path (6,1.7) node[draw,minimum size=1.5cm,triangle,inner sep=0pt,
			label={[xshift=-0.2cm, yshift=-0.7cm]{}}] (s1) {};
			\path (12,0.6) node[draw,minimum size=2.5cm,triangle,inner sep=0pt,
			label={[xshift=0.0cm, yshift=-0.7cm]{}}] (s2) {};
			\path (18,0) node[draw,minimum size=3cm,triangle,inner sep=0pt,
			label={[xshift=0.0cm, yshift=-1.5cm]{}}] (sn) {};
			
			\path (30,3.4) node[] (fn) {$\emptyset$};
			
			\path (30,0.8) node[] (fn1) {$\:\:$};
			\path (30,-0.8) node[] (fn2) {$\:\:$};
			
			\path (12,-0.2) node[draw,color=blue,minimum size=1cm,triangle,inner sep=0pt,
			label={[xshift=0.0cm, yshift=-0.7cm]{}}] (st2) {};
			
			\path (18,-0.8) node[draw,color=blue,minimum size=1.6cm,triangle,inner sep=0pt,
			label={[xshift=0.0cm, yshift=-0.7cm]{}}] (st3) {};
			
			\path (18,-1.4) node[draw,color=red,minimum size=0.5cm,triangle,inner sep=0pt,
			label={[xshift=0.0cm, yshift=-0.7cm]{}}] (su3) {};
			
			\draw[arrows=-latex'] (s0) -- (18,3.4) node[pos=0.5,above] {$\sigma'_n$} ;
			\draw[arrows=-latex'] (18,3.4) -- (fn) node[pos=0.5,above] {$\sigma''_n$} ;
			
			\draw[color=blue,arrows=-latex'] (6,0.8)  -- (fn1) node[pos=0.8,above] {$\rho_i$} ;
			\draw[color=red,arrows=-latex'] (12,-0.8)  -- (fn2) node[pos=0.7,above] {$\rho_{j}$} ;
			
			
			\draw (6,3.4) -- (6,0.8);
			\path (6,0.8) node[label={[xshift=-0.3cm, yshift=-0.7cm]$v_i$}] {$\bullet$};
			
			\draw (12,3.4) -- (12,0.8);
			\path (12,0.8) node[] {$\bullet$};
			\path (12,-0.8) node[label={[xshift=-0.3cm, yshift=-0.7cm]$v_{j}$}] {$\bullet$};
			\draw (12,0.8) -- (12,-0.8);
			
			\path (18,0.8) node[] {$\bullet$};
			\path (18,-0.8) node[] {$\bullet$};
			\draw (18,0.8) -- (18,-0.8);			
		\end{tikzpicture}
	\end{center}
	
	\vspace*{-1.8mm}\noindent
	One can build two firing sequences that still reach $\emptyset$ and thus whose labels
	belong to the language. The first one consists of mimicking the ``behavior'' of the subtree rooted
	in $v_{j}$ starting from $v_i$, which is possible due to the choice of $i$ and $j$,  as illustrated below.
	
	\begin{center}
		\begin{tikzpicture}[triangle/.style = {regular polygon, regular polygon sides=3 },xscale=0.45,yscale=0.45]			
			
			\path (0,3.4) node[label={[xshift=0.0cm, yshift=-0.2cm]\small{$s[r,m_0]$}}] (s0) {$\bullet$};
			
			\path (6,1.7) node[draw,minimum size=1.5cm,triangle,inner sep=0pt,
			label={[xshift=-0.2cm, yshift=-0.7cm]{}}] (s1) {};
			\path (12,1.1) node[draw,minimum size=2.1cm,triangle,inner sep=0pt,
			label={[xshift=0.0cm, yshift=-0.7cm]{}}] (s2) {};
			\path (24,3.4) node[] (fn) {$\emptyset$};
			\path (24,0.8) node[] (fn1) {$\:\:$};
		
			\path (12,0.2) node[draw,color=red,minimum size=0.5cm,triangle,inner sep=0pt,
			label={[xshift=0.0cm, yshift=-0.7cm]{}}] (su3) {};
			
		    \draw[arrows=-latex'] (s0) -- (fn) node[pos=0.5,above] {} ;
			
			\draw[color=red,arrows=-latex'] (6,0.8)  -- (fn1) node[pos=0.7,above] {$\rho_{j}$} ;
			
			
			\draw (6,3.4) -- (6,0.8);
			\path (6,0.8) node[label={[xshift=-0.3cm, yshift=-0.7cm]$v_i$}] {$\bullet$};
			
			\draw (12,3.4) -- (12,0.8);
			\path (12,0.8) node[] {$\bullet$};
		\end{tikzpicture}
	\end{center}
	
	\vspace*{-1.8mm}\noindent
	The second one consists of mimicking the ``behavior'' of the subtree rooted
	in $v_{i}$ starting from $v_{j}$ as illustrated below.
	
	\begin{center}
		\begin{tikzpicture}[triangle/.style = {regular polygon, regular polygon sides=3 },xscale=0.35,yscale=0.35]
						
			\path (0,3.4) node[label={[xshift=0.0cm, yshift=-0.2cm]\small{$s[r,m_0]$}}] (s0) {$\bullet$};
			
			\path (6,1.3) node[draw,minimum size=1.5cm,triangle,inner sep=0pt,
			label={[xshift=-0.2cm, yshift=-0.7cm]{}}] (s1) {};
			\path (12,0.2) node[draw,minimum size=2.2cm,triangle,inner sep=0pt,
			label={[xshift=0.0cm, yshift=-0.7cm]{}}] (s2) {};
			\path (18,-0.7) node[draw,minimum size=2.8cm,triangle,inner sep=0pt,
			label={[xshift=0.0cm, yshift=-1.5cm]{}}] (sn) {};
			
			\path (26,-1.4) node[draw,minimum size=3.3cm,triangle,inner sep=0pt,
			label={[xshift=0.0cm, yshift=-1.5cm]{}}] (ssn) {};
			
			\path (36,3.4) node[] (fn) {$\emptyset$};
			\path (36,-1.4) node[] (fn1) {$\:\:$};
			\path (36,-2.7) node[] (fn2) {$\:\:$};
			
			
			\path (12,-0.9) node[draw,color=blue,minimum size=0.7cm,triangle,inner sep=0pt,
			label={[xshift=0.0cm, yshift=-0.7cm]{}}] (st2) {};
			
			\path (18,-2.2) node[draw,color=blue,minimum size=0.7cm,triangle,inner sep=0pt,
			label={[xshift=0.0cm, yshift=-0.7cm]{}}] (st2bis) {};
			
			\path (26,-3) node[draw,color=blue,minimum size=1cm,triangle,inner sep=0pt,
			label={[xshift=0.0cm, yshift=-0.7cm]{}}] (st3) {};
			
			\path (26,-3.3) node[draw,color=red,minimum size=0.5cm,triangle,inner sep=0pt,
			label={[xshift=0.0cm, yshift=-0.7cm]{}}] (su3) {};
			
		   \draw[arrows=-latex'] (s0) -- (fn) node[pos=0.5,above] {} ;
			
			\draw[color=blue,arrows=-latex'] (6,0.2)  -- (12,0.2) node[pos=0.8,above] {} ;
			
			\draw[color=blue,arrows=-latex'] (12,-1.4)  -- (fn1) node[pos=0.8,above] {$\rho_i$} ;
			\draw[color=red,arrows=-latex'] (18,-2.7)  -- (fn2) node[pos=0.7,above] {$\rho_{j}$} ;
			
			
			\draw (6,3.4) -- (6,0.2);
			\path (6,0.2) node[label={[xshift=-0.3cm, yshift=-0.7cm]$v_i$}] {$\bullet$};
			
			\draw (12,3.4) -- (12,0.2);
			\path (12,0.2) node[] {$\bullet$};
			\path (12,-1.4) node[label={[xshift=-0.4cm, yshift=-0.7cm]$v_{j}$}] {$\bullet$};
			\draw (12,0.8) -- (12,-1.4);
			
			\path (18,0.2) node[] {$\bullet$};
			\path (18,-1.4) node[] {$\bullet$};
			\path (18,-2.7) node[] {$\bullet$};
			\draw (18,0.2) -- (18,-2.7);
			
			\path (26,0.2) node[] {$\bullet$};
			\path (26,-1.4) node[] {$\bullet$};
			\path (26,-2.7) node[] {$\bullet$};
			\draw (26,0.2) -- (26,-2.7);				
		\end{tikzpicture}
	\end{center}
	
	\vspace*{-1.8mm}\noindent
	{\bf Case $w_i=w_{j}$.} Then the firing sequence reaching $\emptyset$ obtained by mimicking in $v_i$
	the behaviour of $v_{j}$ has trace $a^nb^nc^n$
	and leads to another state $s_n$ with less threads yielding a contradiction, since $s_n$
	was supposed to have a minimal number of threads.
	
	\medskip\noindent
	{\bf Case $w_i\neq w_{j}$.}  Let $w\neq \varepsilon$ be the trace of  the sequence performed in the subtree rooted in $v_i$
	without the  trace of  the sequence performed in the subtree rooted in $v_{j}$.
	Let us consider the firing sequence $\sigma$ reaching $\emptyset$ obtained by mimicking in $v_{j}$
	the behaviour of $v_{i}$. The trace of $\sigma$ is an interleaving of $a^nb^nc^n$ and $w$ and it belongs to $\mathcal L_3$
	which implies that $w=a^qb^qc^q$ for some $q>0$. Furthermore
	$\sigma$ can be chosen in such a way that the firing subsequences in the subtrees rooted at $v_i$ and $v_{j}$ are performed in one
	shot which implies that its trace is $\ldots a^qa^qw_{j}b^qc^q b^qc^q\ldots$ yielding a contradiction.	
\end{proof}
The following corollary shows that extending the family of coverability languages of PNs by substituting either (1) coverability by reachability or (2) PNs by RPNs is somewhat ``orthogonal''.
\begin{corollary}
	\label{cor:ReachPNuncompCovRPN}
	The families of reachability languages of Petri nets and
	the family of coverability languages of RPNs are incomparable.
\end{corollary}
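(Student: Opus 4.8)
The plan is to establish the two non-inclusions that together amount to incomparability. For the first one --- the family of reachability languages of Petri nets is not contained in the family of coverability languages of RPNs --- I would simply invoke Proposition~\ref{prop:ReachPNnotCovRPN}: the language $\mathcal L_3 = \{a^n b^n c^n \mid n \in \nat\}$ is the reachability language of the Petri net of Figure~\ref{fig:PN_for_L1_and_L3}, yet it is not the coverability language of any RPN. This settles that direction for free.

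For the second one --- the family of coverability languages of RPNs is not contained in the family of reachability languages of Petri nets --- I would exhibit a single witness lying in the former but not the latter. The natural choice is the two-letter palindrome language $\mathcal L_2 = \{w\tilde{w} \mid w \in \{d,e\}^*\}$: being context-free, Proposition~\ref{prop:CFinRPN} makes it the coverability language of some RPN, while it is a classical fact about Petri net languages --- in the same spirit as, but formally stronger than, the statement taken from~\cite{Lambert92} in the proof of Theorem~\ref{context-free}, and treated in Peterson's book~\cite{nla.cat-vn2956435} --- that $\mathcal L_2$ is not the reachability language of any Petri net. If one prefers to stay strictly within statements already used in the paper, one can instead take $\mathcal L = \mathcal L_1 \cup \mathcal L_2$, which is an RPN coverability language by Proposition~\ref{prop:closedunion}, and use the classical closure of Petri net reachability languages under intersection with a regular language: were $\mathcal L$ a Petri net reachability language, then so would be $\mathcal L \cap \{d,e\}^* = \mathcal L_2$, a contradiction.

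Putting the two directions together, neither family contains the other, so they are incomparable. The only non-routine ingredient is the assertion that the two-letter palindrome language is not a Petri net reachability language, and this is where I expect the real work to lie; everything else is a direct appeal to Propositions~\ref{prop:ReachPNnotCovRPN}, \ref{prop:CFinRPN} and~\ref{prop:closedunion}. Should a self-contained proof of the palindrome fact be needed, I would try a pumping argument exploiting monotonicity of Petri net firing: in a sufficiently long accepting run two markings $m_i \le m_j$ must recur by Dickson's lemma, and iterating the intermediate segment --- while repairing the final phase --- yields a word that is still accepted but no longer a palindrome, because matching two unboundedly long factors is beyond what order-insensitive counting can do.
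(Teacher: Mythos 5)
Your proposal is correct and follows essentially the same route as the paper: one direction via Proposition~\ref{prop:ReachPNnotCovRPN} with $\mathcal L_3$, the other via Proposition~\ref{prop:CFinRPN} together with the fact that the two-letter palindrome language is not a reachability language of any Petri net, which the paper likewise invokes without further proof. Your additional variants (the union $\mathcal L_1\cup\mathcal L_2$ with closure under intersection with a regular language, and the pumping sketch) are optional embellishments, not a different argument.
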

\begin{proof}
	One direction is a consequence of Proposition~\ref{prop:ReachPNnotCovRPN}
	while the other direction is a consequence of Proposition~\ref{prop:CFinRPN}
	observing that the language of palindromes is not the reachability
	language of any Petri net.
\end{proof}
The next corollary exhibits a particular feature of RPNs languages
(e.g.  Petri nets  or context-free languages are closed under intersection with a regular language).
\begin{corollary}\label{col:intersection_rpn_and_regular}
	The family of coverability languages of RPNs is not closed under
	intersection with a regular language and under complementation.
\end{corollary}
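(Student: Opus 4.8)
The plan is to derive both non-closure statements from Theorem~\ref{theo:re}, using two auxiliary facts: (i) every coverability language of an RPN is \emph{recursive} (a decidable subset of $\Sigma^*$), and (ii) the family of RPN coverability languages is closed under homomorphism. Fact (i) follows from the decidability of coverability (Section~\ref{sec:coverability}): deciding whether $w\in\mathcal L_C(\N,s_0,S_f)$ reduces to a coverability question in the synchronisation of $\N$ with the finite automaton accepting exactly $w$, so membership is decidable and the whole language is recursive. Fact (ii) is routine and is also among the basic closure properties stated for the class: a transition labelled $a$ with $h(a)=b_1\cdots b_k$ is replaced by a chain of $k$ transitions fired by the same thread, carrying the labels $b_1,\dots,b_k$ and threading a fresh private ``program counter'' through places added to that thread's marking (the last transition of the chain being the original one when it is abstract or cut); when $h(a)=\varepsilon$ the transition is simply relabelled by $\varepsilon$.

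\emph{Non-closure under intersection with regular languages.} Suppose, for contradiction, that the family of RPN coverability languages is closed under intersection with regular languages. Let $\mathcal L$ be an arbitrary recursively enumerable language. By Theorem~\ref{theo:re} there are an RPN coverability language $\mathcal L'$, a regular language $\mathcal R$ and a homomorphism $h$ with $\mathcal L=h(\mathcal L'\cap\mathcal R)$. By the assumed closure, $\mathcal L'\cap\mathcal R$ is an RPN coverability language, and then by fact (ii) so is $\mathcal L=h(\mathcal L'\cap\mathcal R)$. Hence every recursively enumerable language would be a coverability language of some RPN. But by fact (i) every RPN coverability language is recursive, while there exist recursively enumerable languages that are not recursive (e.g.\ a halting-problem language) --- a contradiction.

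\emph{Non-closure under complementation.} Suppose, for contradiction, that the family is closed under complementation. Since it is also closed under union (Proposition~\ref{prop:closedunion}), De Morgan's law yields closure under intersection: for $\mathcal L_1,\mathcal L_2\subseteq\Sigma^*$ one has $\mathcal L_1\cap\mathcal L_2=\Sigma^*\setminus((\Sigma^*\setminus\mathcal L_1)\cup(\Sigma^*\setminus\mathcal L_2))$. Taking $\mathcal L_2=\mathcal R$ an arbitrary regular language and recalling that every regular language is context-free, hence an RPN coverability language by Proposition~\ref{prop:CFinRPN}, the family would be closed under intersection with regular languages, contradicting the statement just proved.

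The main obstacle is fact (i): although it is a consequence of the decidability of coverability, it needs a little care precisely because RPNs have no shared global state, so the ``clock'' tracking the position in $w$ during synchronisation must be implemented through the thread tree rather than through a single global place; this is the least mechanical point of the argument. The remaining ingredients --- closure under homomorphism, closure under union (Proposition~\ref{prop:closedunion}), the inclusion of the context-free languages (Proposition~\ref{prop:CFinRPN}), and the existence of a non-recursive recursively enumerable language --- are either established earlier or entirely standard.
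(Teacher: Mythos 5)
Your overall skeleton is the right one (and essentially the paper's): use Theorem~\ref{theo:re} together with closure under homomorphism to show that closure under intersection with a regular language would force every recursively enumerable language to be an RPN coverability language, contradict this with a strict inclusion into the recursively enumerable languages, and then get non-closure under complementation from De Morgan, closure under union (Proposition~\ref{prop:closedunion}) and the fact that regular languages lie in the family (Proposition~\ref{prop:CFinRPN}). The problem is your fact~(i), on which the whole contradiction rests. You claim every RPN coverability language is recursive because membership of a fixed word $w$ ``reduces to a coverability question in the synchronisation of $\N$ with the finite automaton accepting exactly $w$''. No such synchronisation is available: an RPN has no global state, transitions fire locally in threads, and the letters of $w$ may be produced by different threads interleaved arbitrarily, so the ``position in $w$'' counter cannot be threaded through the state tree in any obvious way --- this is precisely the obstruction that makes the family fail to be closed under intersection with regular languages, i.e.\ the very statement you are proving. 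You acknowledge this as ``the least mechanical point'' but do not resolve it, and indeed the paper lists the membership problem for RPN coverability (and reachability) languages as an open problem in its conclusion. So fact~(i) is a genuine gap, not a routine verification.

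The gap is easy to repair without fact~(i): all you need is one recursively enumerable language that is \emph{not} an RPN coverability language, i.e.\ strictness of the inclusion into the recursively enumerable languages. This is exactly what Proposition~\ref{prop:ReachPNnotCovRPN} provides: $\mathcal L_3=\{a^nb^nc^n\mid n\in\nat\}$ is a Petri net reachability language (hence recursively enumerable) but is not the coverability language of any RPN. Substituting this for your recursiveness argument, the rest of your proof goes through and coincides with the paper's: if the family were closed under intersection with regular languages, Theorem~\ref{theo:re} and closure under homomorphism would make it contain all recursively enumerable languages, contradicting Proposition~\ref{prop:ReachPNnotCovRPN}; non-closure under complementation then follows as you argue. (Closure under homomorphism, your fact~(ii), is asserted by the paper and your chain-of-transitions sketch for it is fine.)
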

\begin{proof}
	Due to Proposition~\ref{prop:ReachPNnotCovRPN},
	the family of coverability languages of RPNs
	is strictly included in the family of recursively enumerable languages.
	Since the former family is closed under homomorphism, Theorem~\ref{theo:re}
	implies that it is not closed under intersection with a regular language and a fortiori
	with another coverability language.
	Since intersection can be obtained by union and complementation and since the family of RPN coverability languages is closed
	under union, they are not closed under complementation.	
\end{proof}

\begin{figure}[!b]
\vspace*{-13mm}

\tikzset{every picture/.style={line width=0.75pt}} 

\begin{center}
	\begin{tikzpicture}[x=0.75pt,y=0.75pt,yscale=-1,xscale=1]
	
	\draw [line width=1.5,fill = red,fill opacity=0.7]    (190.64,191) .. controls (220.68,92.76) and (353.01,101.83) .. (389.41,191) ;
	\draw [line width=1.5,fill= blue,fill opacity=0.6]    (326.03,191) .. controls (327.65,186.83) and (329.43,182.85) .. (331.37,179.07) .. controls (375.14,93.67) and (497.95,105.62) .. (525.8,191) ;
	\draw [line width=1.5 , fill=green,fill opacity=0.1]    (50.41,190.87) .. controls (114.84,-3.81) and (338.94,14.17) .. (400.57,190.87) ;
	\draw [line width=1.5,fill=black,fill opacity=0.2]    (182.64,191) .. controls (247.07,-3.68) and (471.17,14.3) .. (532.8,191) ;
	\draw [line width=1.5,fill=yellow,fill opacity=0.05]    (20.64,191) .. controls (100.07,-80) and (471.17,-50) .. (560.8,191) ;
	\draw  [fill={rgb, 255:red, 0; green, 0; blue, 0 }  ,fill opacity=1 ] (115.29,166.37) .. controls (115.29,164.99) and (116.66,163.87) .. (118.35,163.87) .. controls (120.04,163.87) and (121.41,164.99) .. (121.41,166.37) .. controls (121.41,167.75) and (120.04,168.87) .. (118.35,168.87) .. controls (116.66,168.87) and (115.29,167.75) .. (115.29,166.37) -- cycle ;
	\draw  [fill={rgb, 255:red, 0; green, 0; blue, 0 }  ,fill opacity=1 ] (248.75,163.37) .. controls (248.75,161.99) and (250.12,160.87) .. (251.81,160.87) .. controls (253.5,160.87) and (254.87,161.99) .. (254.87,163.37) .. controls (254.87,164.75) and (253.5,165.87) .. (251.81,165.87) .. controls (250.12,165.87) and (248.75,164.75) .. (248.75,163.37) -- cycle ;
	\draw  [fill={rgb, 255:red, 0; green, 0; blue, 0 }  ,fill opacity=1 ] (425.06,161.37) .. controls (425.06,159.99) and (426.43,158.87) .. (428.12,158.87) .. controls (429.81,158.87) and (431.18,159.99) .. (431.18,161.37) .. controls (431.18,162.75) and (429.81,163.87) .. (428.12,163.87) .. controls (426.43,163.87) and (425.06,162.75) .. (425.06,161.37) -- cycle ;
	\draw  [fill={rgb, 255:red, 0; green, 0; blue, 0 }  ,fill opacity=1 ] (347.92,78.37) .. controls (347.92,76.99) and (349.29,75.87) .. (350.98,75.87) .. controls (352.67,75.87) and (354.04,76.99) .. (354.04,78.37) .. controls (354.04,79.75) and (352.67,80.87) .. (350.98,80.87) .. controls (349.29,80.87) and (347.92,79.75) .. (347.92,78.37) -- cycle ;
	
	\draw (144.8,38) node [anchor=north west][inner sep=0.75pt]  [font=\small]  {$Reach$-$PN$};
	\draw (340.41,36) node [anchor=north west][inner sep=0.75pt]  [font=\small]  {$Cov$-$RPN$};
	\draw (243.2,103) node [anchor=north west][inner sep=0.75pt]  [font=\small]  {$Cov$-$PN$};
	\draw (428.27,104) node [anchor=north west][inner sep=0.75pt]  [font=\small]  {$CF$};
	\draw (245.2,-17) node [anchor=north west][inner sep=0.75pt]  [font=\small]  {$Reach$-$RPN$};
	\draw (124.61,156) node [anchor=north west][inner sep=0.75pt]    {$\mathcal L_{3}$};
	\draw (259.07,152) node [anchor=north west][inner sep=0.75pt]    {$\mathcal L_{1}$};
	\draw (434.38,150) node [anchor=north west][inner sep=0.75pt]    {$\mathcal L_{2}$};
	\draw (361.17,67) node [anchor=north west][inner sep=0.75pt]    {$\mathcal L_{1} \cup \mathcal L_{2}$};
	
	\draw (270.17,20) node [anchor=north west][inner sep=0.75pt]    {$\mathcal L_{2} \cup\mathcal L_{3}$};
	\filldraw[black] (260,30) circle (2pt) node[] {};

	\draw[ultra thick] (19.5,191) -- (562,191);
	
	\end{tikzpicture}
\end{center}\vspace*{-1mm}
	\caption{$\mathcal L_1 =\{a^mb^nc^p\mid m\geq n\geq p \};\mathcal  L_2= \{w\in\{d,e\}^* \mid w=\widetilde{w}\};\mathcal  L_3 = \{a^nb^nc^n\mid n\in\nat\}  $}	\label{fig:lang_compare}
\end{figure}
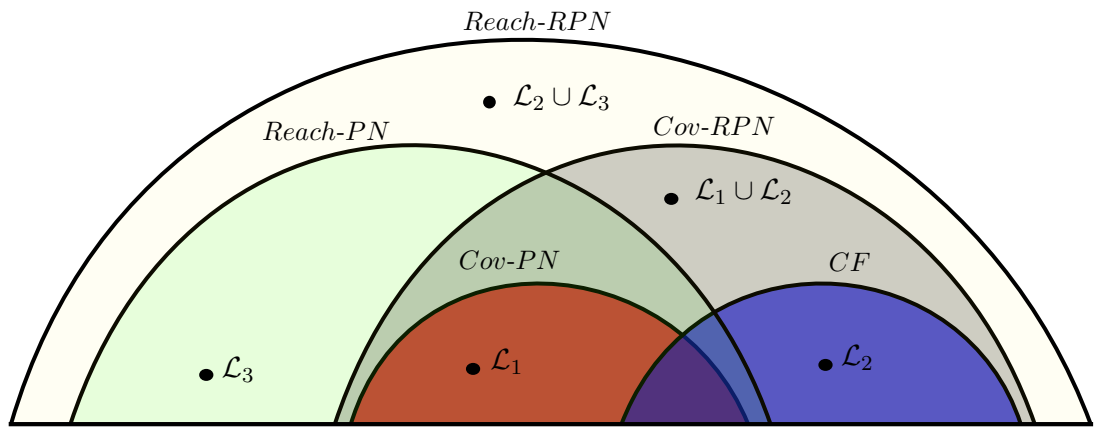

Combining Propositions~\ref{prop:reach-inc-cov},~\ref{prop:cov-inc-reach} and~\ref{prop:ReachPNnotCovRPN}, one gets the following theorem.
\begin{theorem}
	The family of coverability languages of RPNs is strictly included
	in the family of reachability languages of RPNs.
\end{theorem}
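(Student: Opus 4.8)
The plan is to split the statement into the inclusion and its strictness, and to obtain each by directly invoking the propositions already established.

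For the inclusion, I would first recall that a \emph{cut language} is by definition a reachability language, namely one whose unique final state is the empty tree $\emptyset$; hence the family of cut languages of RPNs is trivially contained in the family of reachability languages of RPNs. Then Proposition~\ref{prop:cov-inc-reach} tells us that every coverability language of an RPN is a cut language of an RPN, and composing these two inclusions gives that every coverability language of an RPN is a reachability language of an RPN. (Proposition~\ref{prop:reach-inc-cov} is not needed for this direction; together with Proposition~\ref{prop:cov-inc-reach} it only serves to record the sharper fact that the coverability languages and the cut languages of RPNs are in fact the \emph{same} family, which one may mention in passing.)

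For the strictness, I would exhibit $\mathcal L_3=\{a^nb^nc^n\mid n\in\nat\}$ as a separating witness. On one hand, as already observed in the excerpt, the Petri net of Figure~\ref{fig:PN_for_L1_and_L3} with final marking $p_f$ has $\mathcal L_3$ as its reachability language; since a Petri net is the special case of an RPN with $T_{ab}=T_{\tau}=\emptyset$, and single-marking reachability acceptance is the special case of RPN reachability acceptance with final state $s[r,p_f]$, this places $\mathcal L_3$ in the family of reachability languages of RPNs. On the other hand, Proposition~\ref{prop:ReachPNnotCovRPN} states precisely that $\mathcal L_3$ is \emph{not} the coverability language of any RPN. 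Thus $\mathcal L_3$ lies in the larger family but not in the smaller one, so the inclusion is strict.

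The real content of the argument lives entirely in the cited propositions --- in particular in Proposition~\ref{prop:cov-inc-reach}, whose proof is the delicate simulation that turns a covering sequence into a cut sequence, and in Proposition~\ref{prop:ReachPNnotCovRPN}, whose proof combines a WSTS pumping argument in the bounded-depth case with an Ogden-style argument in the unbounded-depth case. Granting those, the present theorem is essentially bookkeeping, and I expect no genuine obstacle: the only point requiring care is to line up the acceptance conventions (cut languages vs.\ single-marking PN reachability vs.\ reachability with a finite set of final abstract states) so that the chain of inclusions between the stated families really holds. No computation is involved.
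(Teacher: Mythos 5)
Your proposal is correct and matches the paper's argument: the theorem is stated there as an immediate combination of Propositions~\ref{prop:reach-inc-cov}, \ref{prop:cov-inc-reach} and~\ref{prop:ReachPNnotCovRPN}, with the inclusion coming from coverability languages being cut (hence reachability) languages and strictness witnessed by $\mathcal L_3$. Your remark that Proposition~\ref{prop:reach-inc-cov} is not actually needed for the inclusion direction is accurate as well.
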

Figure~\ref{fig:lang_compare} illustrates the hierarchy of the languages presented in this work.

\section{Coverability  is \EXPSPACE -complete}
\label{sec:coverability}

The section is devoted to establishing that the coverability problem
is $\EXPSPACE$-complete. The
$\EXPSPACE$-hardness follows immediately
from the $\EXPSPACE$-hardness of the coverability problem for Petri nets~\cite{Lipton76}.

Observe that the coverability problem is equivalent to the emptiness problem of the coverability language of an RPN.
In Section \ref{sec:expressiveness} we have shown that the families of coverability languages and cut languages for RPN are equal and that the transformation from one to another is performed in polynomial time (proposition \ref{prop:reach-inc-cov} and \ref{prop:cov-inc-reach}). Therefore we will establish the complexity result for the cut problem getting as a corollary the same result for the coverability problem.

\begin{theorem}\label{thm:cut_problem}
	The cut problem is \EXPSPACE-complete.
\end{theorem}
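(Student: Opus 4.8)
The plan is to establish \EXPSPACE-membership; the matching lower bound is inherited from Petri nets. For \EXPSPACE-hardness it suffices to reduce Petri net coverability, which is \EXPSPACE-hard~\cite{Lipton76}: given a marked Petri net with target marking $m$, one views all its transitions as elementary, adds a single cut transition $t_\tau$ with $W^-(t_\tau)=m$, and takes $s[r,m_0]$ as initial state; since no abstract transition is present the state stays single-threaded, so the root reaches $\emptyset$ iff some marking $\ge m$ is reachable, i.e. iff $m$ is coverable. (Equivalently one may invoke the polynomial-time equivalence between coverability and cut languages from Propositions~\ref{prop:reach-inc-cov} and~\ref{prop:cov-inc-reach} together with the fact that language emptiness is exactly the coverability problem.)

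For membership, first apply Proposition~\ref{col:rooted} to reduce to the rooted cut problem, so one may assume the initial state is $s[r,m_0]$. Let $B$ be the Petri net with places $P$ and transition set $\widehat T_{el}=T_{el}\uplus\{t^r\mid t\in T_{ret}\}$, with incidences inherited from $\widehat{\N}$. The key claim is that $s[r,m_0]\xrightarrow{\sigma}_{\N}\emptyset$ for some $\sigma$ if and only if there exists $t_\tau\in T_\tau$ such that $W^-(t_\tau)$ is coverable from $m_0$ in $B$.

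For the forward direction, apply Proposition~\ref{prop:omniciant} to $s[r,m_0]\xrightarrow{\sigma}_{\N}\emptyset$ to obtain an omniscient sequence $s[r,m_0]\xrightarrow{\widehat\sigma}_{\widehat{\N}}\emptyset$. Since the final state $\emptyset$ has no thread and every thread created along an omniscient sequence is final, $\widehat\sigma$ creates no thread; hence it fires no abstract transition and every firing occurs in the root. Consequently $\widehat\sigma=\widehat\sigma'\,(r,t_\tau)$ with $t_\tau\in T_\tau$ and $\widehat\sigma'$ a sequence over $\widehat T_{el}$, and reading $\widehat\sigma'$ in $B$ yields $m_0\xrightarrow{\widehat\sigma'}_B m\ge W^-(t_\tau)$. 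For the converse, given $m_0\xrightarrow{\rho}_B m\ge W^-(t_\tau)$, one builds a sequence in $\N$ by replacing each occurrence of a transition $t^r$ in $\rho$ (where $t\in T_{ret}$) by $(r,t)$ followed by the returning sequence $(v,\sigma_t)$ of the child $v$ just created; since $(r,t)(v,\sigma_t)$ has the same effect on the marking of the root as $t^r$, this gives $s[r,m_0]\xrightarrow{}_{\N}s[r,m]$, after which firing $(r,t_\tau)$ reaches $\emptyset$.

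It remains to bound the cost. By Proposition~\ref{prop:Length of an abstract transition in RPN}, whenever $t\in T_{ret}$ the returning sequence $\sigma_t$ may be taken of length at most doubly exponential in $size(\N)$, so $T_{ret}$ is computable in exponential space by enumerating sequences up to that length, and $B$ — which has size linear in $\N$ — can be produced in exponential space (Corollary~\ref{col:build_N_hat}). For each of the linearly many $t_\tau\in T_\tau$, deciding whether $W^-(t_\tau)$ is coverable from $m_0$ in the polynomial-size Petri net $B$ is in \EXPSPACE\ by Rackoff's theorem (Theorem~\ref{thm:Rackoff covering path}); running these checks after the computation of $T_{ret}$ stays within \EXPSPACE, which completes the proof. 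The main obstacle is the structural observation underlying the forward direction: an omniscient witness of reaching $\emptyset$ cannot create any thread, so it collapses to a plain Petri net coverability instance over $\widehat{\N}$, with all recursive behaviour precompiled into the (long but effectively bounded) returning sequences defining $T_{ret}$.
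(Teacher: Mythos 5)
Your proof is correct, and it rests on exactly the same pillars as the paper's: the reduction to a rooted instance (Proposition~\ref{col:rooted}), the observation that an omniscient sequence reaching $\emptyset$ (Proposition~\ref{prop:omniciant}) creates no thread and hence collapses to a root-only sequence over the elementary transitions of $\widehat{\N}$, and Rackoff's theorem together with the bound on returning sequences (Proposition~\ref{prop:Length of an abstract transition in RPN}, Corollary~\ref{col:build_N_hat}). Where you diverge is in the packaging of the upper bound: the paper converts the short covering sequence in $\widehat{\N}_{el}$ back into a cut sequence of $\N$ via Corollary~\ref{col:bound_on_translation_of_N-hat_to_N}, thereby obtaining a doubly exponential bound on the length of a shortest cut sequence, and then decides by guess-and-check in \NEXPSPACE\ plus Savitch; you instead prove an explicit equivalence (a cut sequence exists iff some $W^-(t_\tau)$ is coverable from $m_0$ in the elementary net built from $\widehat{\N}$) and decide deterministically by computing $T_{ret}$ and running one Petri net coverability check per cut transition. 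Your route avoids reasoning about verifying a long witness in bounded space, at the price of not yielding the explicit ``short cut sequence'' bound that the paper advertises and reuses; your converse direction (replacing $t^r$ by $t$ followed by the returning sequence of the fresh child) is exactly the content of Corollary~\ref{col:bound_on_translation_of_N-hat_to_N} without the length accounting, so nothing is missing. Your hardness argument is also slightly more self-contained: a direct reduction from Petri net coverability by adding a single cut transition with precondition $m$, rather than the paper's chain through RPN coverability; both are fine.
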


\begin{proof}
	\noindent Let $(\N,s_0)$ be a marked RPN and $\eta$ the accumulated size of the RPN and the initial state.
	By Proposition~\ref{col:rooted} we can assume that $V_{s_0}$ is a singleton $\{r\}$.
	
	\noindent Assume there exists a firing sequence $s_{0}\xrightarrow{\sigma}_{\N}\emptyset$.  Using Proposition~\ref{prop:omniciant} one gets an omniscient sequence $ s_0\xrightarrow{\widehat{\sigma}}_{\widehat{\N}}\emptyset$ such that
	$\widehat{\sigma}=(r,\sigma_{1})(r,t)$ for some $t\in T_{\tau}$.
	
	\noindent The (omniscient) sequence $(r, \sigma_1)$  contains only elementary transitions.
	Thus $m_0 \xrightarrow{\sigma_1}_{\widehat{\N}_{el}} m$ with $m\geq W^-(t)$.
	By Theorem~\ref{thm:Rackoff covering path}, there exists  $ \sigma_1' $ with
	$ |\sigma_1'|\leq 2^{2^{c\eta\log \eta}}$ covering $W^-(t)$.	
	Using Corollary \ref{col:bound_on_translation_of_N-hat_to_N} there $s\xrightarrow{\sigma'}_{\N}\emptyset$ with $|\sigma'|\leq ^{2^{e\eta\log \eta}} $ for some constant $ e $.
	
	\noindent Therefore if there is a cut sequence
	then there is one with length at most $2^{2^{e\eta\log \eta}}$. Hence
	one guesses a sequence with at most this length and
	simultaneously checks  whether it is a cut sequence in exponential space.
	This shows that the cut problem belongs to $\NEXPSPACE$ which is equivalent to $\EXPSPACE$
	by Savitch's theorem.
	
	\noindent The \EXPSPACE\ hardness of the coverability problem in Petri nets entails \EXPSPACE\ hardness of the coverability problem in RPNs
	which in turn entails the \EXPSPACE\ hardness of the cut problem in RPNs.	
\end{proof}
The next theorem is an immediate corollary of the previous one.
\begin{theorem}\label{thm:Coverabilty for RPN in EXPSPACE}
	The coverability problem for RPNs is $\EXPSPACE$-complete.
\end{theorem}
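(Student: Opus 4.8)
The plan is to obtain this as an immediate consequence of Theorem~\ref{thm:cut_problem} together with the language-theoretic results of Section~\ref{sec:expressiveness}. The starting observation is that the coverability problem for a marked RPN $(\N,s_0)$ with target $s_f$ is precisely the emptiness question for the coverability language $\mathcal L_C(\N,s_0,\{s_f\})$, while the cut problem is the emptiness question for a cut language. Propositions~\ref{prop:reach-inc-cov} and~\ref{prop:cov-inc-reach} show that these two families of languages coincide and, more to the point, that the construction turning a coverability instance into an equivalent cut instance (the one in the proof of Proposition~\ref{prop:cov-inc-reach}) is computable in polynomial time, hence produces an instance whose size is polynomially bounded in the input. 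Consequently the coverability problem is polynomial-time reducible to the cut problem.

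For membership I would simply apply this reduction and then invoke the $\EXPSPACE$ upper bound of Theorem~\ref{thm:cut_problem}: since $\EXPSPACE$ is closed under polynomial-time (indeed polynomial-space) reductions, coverability for RPNs lies in $\EXPSPACE$. For hardness I would observe that a Petri net is the special case of an RPN with $T_{ab}=T_\tau=\emptyset$, and that on single-vertex states the quasi-order $\preceq$ coincides with the usual componentwise order on $\nat^P$; hence the $\EXPSPACE$-hardness of Petri net coverability~\cite{Lipton76} transfers verbatim to RPN coverability. Combining the two bounds yields $\EXPSPACE$-completeness.

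I do not expect any real obstacle here, since all the substantive work has already been carried out in the preceding sections: the bound on returning sequences via Rackoff's theorem (Proposition~\ref{prop:Length of an abstract transition in RPN}), the reduction to omniscient firing sequences, the short-cut-sequence argument underlying Theorem~\ref{thm:cut_problem}, and the polynomial-time equivalence between coverability and cut languages. The only point deserving a line of justification is that the cut instance produced from a coverability instance has size polynomial in the input, so that the guess-and-check procedure behind Theorem~\ref{thm:cut_problem} still runs in exponential space --- and this is exactly what the stated polynomiality of the transformation guarantees.
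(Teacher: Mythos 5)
Your proposal is correct and matches the paper's argument: the paper derives this theorem as an immediate corollary of Theorem~\ref{thm:cut_problem}, using exactly the observation that coverability is the emptiness problem for coverability languages and that the polynomial-time transformation of Proposition~\ref{prop:cov-inc-reach} turns it into a cut instance, with hardness inherited from Petri net coverability. The only addition you make is spelling out explicitly why Petri net hardness transfers (single-vertex states and the order $\preceq$), which the paper leaves implicit.
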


\section{Termination is \EXPSPACE-complete}
\label{sec:termination}
In this section we tackle the termination problem for RPN.
Let $(\N,s_{0})$ be a marked RPN.
We denote the size of the input of the termination problem by $\eta$.
In~\cite{Rac78} Rackoff showed that the termination
problem for Petri net is solvable in exponential space:
\begin{theorem}[Rackoff\cite{Lipton76,Rac78}]
	\label{thm:Termination Bound For PN }The termination problem for Petri
	nets is $\EXPSPACE$-complete.
\end{theorem}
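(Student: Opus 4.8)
The plan is to prove the two directions separately: the upper bound via a Rackoff-style doubly-exponential analysis of non-terminating witnesses, and the lower bound via Lipton's bounded-counter simulation. Since the problem as stated asks whether an infinite firing sequence exists, I will work with that formulation; the space class is closed under complement, so the "all runs terminate" formulation has the same complexity. First I would reduce non-termination to the existence of a \emph{self-covering} witness. The order $\preceq$ restricted to markings is the componentwise order on $\nat^P$, which is a well quasi-order by Dickson's lemma; hence any infinite sequence $m_0\xrightarrow{t_1}m_1\xrightarrow{t_2}\cdots$ contains indices $i<j$ with $m_i\leq m_j$. Conversely, given a finite witness $m_0\xrightarrow{\sigma}m\xrightarrow{\tau}m'$ with $\tau\neq\varepsilon$ and $m'\geq m$, monotonicity of Petri nets lets $\tau$ be refired from $m'$, producing $m''\geq m'\geq m$, and iterating yields an infinite run. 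So $\N$ admits an infinite run iff it admits such a self-covering sequence.

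The core step, and the main obstacle, is to bound the length of a shortest self-covering sequence. I would mirror the induction underlying Theorem~\ref{thm:Rackoff covering path}, which Rackoff used for exactly this purpose. For $0\leq i\leq|P|$, call a sequence $i$-self-covering if its final marking dominates the recorded intermediate marking on the first $i$ places only, and let $f(i)$ be the maximal length of a shortest $i$-self-covering sequence over all nets of the given size. The dichotomy is: a place whose value stays below a doubly-exponential threshold behaves like a bounded counter and can be folded into an enlarged finite control, while a place that crosses the threshold can be pumped so that it no longer constrains the continuation, reducing to the $(i{-}1)$-problem. This gives a recurrence of the form $f(i+1)\leq 2^{2^{O(n\log n)}}+2^{2^{O(n\log n)}}\cdot f(i)$, whose solution is $f(|P|)\leq 2^{2^{O(n\log n)}}$ with $n=\mathit{size}(\N)$. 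Making the "bounded versus unbounded coordinate" split precise, and ensuring the residual markings are controlled so that the induction composes, is where the real work lies.

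Given this length bound, membership in $\EXPSPACE$ is routine. A marking reached by a sequence of doubly-exponential length has each component bounded doubly-exponentially, hence is representable in exponentially many bits. One nondeterministically guesses the witness transition by transition, maintaining the current marking together with a stored snapshot of the candidate intermediate marking $m$, checks fireability at each step, and accepts as soon as the running marking dominates $m$ after a nonempty loop; an exponential-bit step counter caps the search at the proven length. This is a $\NEXPSPACE$ procedure, and $\NEXPSPACE=\EXPSPACE$ by Savitch's theorem.

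For the lower bound I would reduce an arbitrary $\EXPSPACE$ language to non-termination. Lipton's construction \cite{Lipton76} builds, in polynomial time from a $2^{cn}$-space-bounded Turing machine, a polynomial-size Petri net that weakly simulates a counter machine whose counters are bounded by a doubly-exponential value, the bounds being enforced by auxiliary complementary counters so that any run cheating on a zero-test is driven into a blocked configuration. I would wire the accepting control state to a trivial self-loop transition and arrange that every maximal faithful non-accepting run blocks; the delicate point, which Lipton's gadgets precisely supply, is that no cheating run can be infinite. Then the net has an infinite firing sequence iff the machine accepts, so non-termination is $\EXPSPACE$-hard. Together with the upper bound, this establishes that the termination problem for Petri nets is $\EXPSPACE$-complete.
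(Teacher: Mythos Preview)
The paper does not prove this theorem at all: it is stated as a known result and attributed to Lipton~\cite{Lipton76} and Rackoff~\cite{Rac78}, with no accompanying argument. So there is no ``paper's own proof'' to compare against; the paper simply imports the statement as a black box and uses it later (in Lemma~\ref{lem:sseq from a given u}) to decide termination of $\widehat{\N}_{el}$.

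Your sketch is a faithful outline of the classical proofs in those references. The reduction of non-termination to the existence of a self-covering pair via Dickson's lemma and monotonicity is standard and correct. The Rackoff-style induction on the number of coordinates, with the bounded/unbounded dichotomy per place, is exactly how the doubly-exponential length bound is obtained; your recurrence shape is right, though as you note the bookkeeping that makes the induction compose (in particular, controlling the residual marking when a coordinate is declared ``large'' so that the shorter $(i{-}1)$-witness can be grafted back) is the genuinely technical part and would need to be spelled out in a full proof. The $\NEXPSPACE$ guess-and-check with an exponential-bit counter, followed by Savitch, is the standard wrap-up. For the lower bound, Lipton's construction indeed supplies the gadgets ensuring cheating runs block; wiring the accept state to a self-loop is the usual way to target non-termination. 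One small caveat: Rackoff's original paper phrases the argument for coverability and boundedness rather than termination directly, so a fully self-contained write-up would either adapt his induction to self-covering (as you propose) or reduce termination to boundedness first; either route works.
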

We aim to show that the termination
problem for RPN is $\EXPSPACE$-complete. $\EXPSPACE$-hardness follows
immediately from  $\EXPSPACE$-hardness of the termination problem for
Petri nets~\cite{Lipton76}.
By Proposition~\ref{col:rooted} we can assume that  $V_{s_0} = \{r\}$. Hence for the rest of the section, we will assume that $s_0 = s[r,m_0]$ for some marking $m_0$.

\medskip\noindent A main ingredient of the proof is the construction of an \emph{\at\ graph}
related to the firing of abstract transitions.

\begin{definition}[\at\ graph]
	Let $(\N,s_0)$ be a marked RPN. Let
	$ G_{\N,s_{0}}=(V_{a},E_{a},M_{a}) $ be a labeled directed graph defined inductively as follows:\smallskip
	\begin{enumerate}
		\itemsep=0.98pt
		\item $r\in V_a$ and $M_{a}(r)=m_0$;
		%
		\item For any $v\in V_a$ and $t\in T_{ab}$, if there exists  $s[v,M_{a}(v)] \xrightarrow{\sigma(v,t)}$ then
		
		$v_{t} \in V_a$, $(v,v_t)\in  E_a$ and $M_{a}(v) = \Omega(t)$.
	\end{enumerate}
	
\end{definition}
Observe that an edge $(v,v_{t})$ means that  from state $s[v,M_a(v)]$, the
thread $v$ can fire $t$ in the future
and by induction that $v_t\in V_a$ if and only if $t$ is fireable in the marked RPN.
Observe that the size of $ G_{\N,s_{0}}$ is linear w.r.t. the size of $(\N,s_0)$.
\begin{lemma}\label{lem:abstract graph in expspace}
	Let $(\N,s_{0})$ be a marked RPN.
	Then one can build its abstract graph in exponential space.
\end{lemma}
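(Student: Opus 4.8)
The plan is to show that each of the finitely many membership tests defining $G_{\N,s_0}$ — namely, for a candidate vertex $v$ with an already-computed label $M_a(v)$ and a transition $t\in T_{ab}$, deciding whether there is a firing sequence $s[v,M_a(v)]\xrightarrow{\sigma(v,t)}$, i.e. whether $t$ is fireable from the single-thread state $s[v,M_a(v)]$ — can be decided in exponential space, and then to argue that the inductive construction terminates after at most exponentially many such tests, so that iterating the procedure stays within exponential space. First I would observe that the vertex set $V_a$ is built level by level: the root $r$ is at level $0$ with label $m_0$, and from any vertex $v$ we create at most $|T_{ab}|$ children $v_t$, each labelled by the fixed marking $\Omega(t)$. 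Since the label of a child depends only on $t$ and not on $v$, there are at most $|T_{ab}|+1$ distinct labels occurring in the whole graph, hence at most $|T_{ab}|+1$ \emph{distinct} nodes up to label; the graph is therefore essentially a subgraph of the ``abstract skeleton'' on these labels, and it suffices to determine, for each label $m\in\{m_0\}\cup\{\Omega(t)\}_{t\in T_{ab}}$ and each $t\in T_{ab}$, whether $t$ is fireable from $s[r,m]$. That is $O(|T_{ab}|^2)$ reachability-style queries, a number polynomial in $\eta=size(\N,s_0)$.

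Next I would reduce each query ``is $t$ fireable from $s[r,m]$?'' to a coverability/cut-type question that we have already bounded. Firing $t$ from a single thread with marking $m$ requires reaching from $s[r,m]$ some state in which the current thread has marking $\ge W^-(t)$; equivalently, it requires a firing sequence $s[r,m]\xrightarrow{\sigma}s'$ covering the state $s[r,W^-(t)]$ (in the $\preceq$ or $\preceq_r$ sense, placing the covering witness at the appropriate thread). By Proposition~\ref{prop:omniciant} we may replace $\sigma$ by an omniscient sequence in $\widehat{\N}$, and along such a sequence every child thread ever created is never cut; hence, exactly as in the proof of Theorem~\ref{thm:cut_problem}, the behaviour of the single relevant thread is captured by $\widehat{\N}_{el}$: $t$ is fireable from $s[r,m]$ in $\N$ iff there is a sequence $m\xrightarrow{\sigma_1}_{\widehat{\N}_{el}}m'$ with $m'\ge W^-(t)$, i.e. iff $W^-(t)$ is coverable from $m$ in the Petri net $\widehat{\N}_{el}$. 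By Corollary~\ref{col:build_N_hat} the net $\widehat{\N}$ (and likewise $\widehat{\N}_{el}$) can be built in exponential space, and by Rackoff's Theorem~\ref{thm:Rackoff covering path} this Petri-net coverability question is decidable in space exponential in the size of $\widehat{\N}_{el}$, hence exponential in $\eta$. Running this test for all $O(|T_{ab}|^2)$ pairs and recording the answers in an (exponential-size) table yields the adjacency relation of $G_{\N,s_0}$; since the graph has size linear in $\eta$, writing it out and closing it under the inductive rule costs no more than exponential space.

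The main obstacle I expect is the bookkeeping needed to make the ``behaviour of a single thread is simulated by $\widehat{\N}_{el}$'' step fully rigorous in the presence of the inductive definition of $G_{\N,s_0}$: one must be careful that the label $M_a(v)$ attached by the construction is exactly the initial marking $\Omega(t)$ (or $m_0$ at the root), so that a child $v_t$ is added \emph{iff} $t$ is genuinely fireable in the marked RPN — this is the content of the remark following the definition of the abstract graph, and it must be used to justify that the finitely many Petri-net coverability tests on $\widehat{\N}_{el}$ faithfully reproduce the inductive clause. Once that correspondence is in place, everything else is a routine composition of ``exponential-space $+$ polynomially many exponential-space subroutines $=$ exponential space'', using Savitch's theorem to absorb the nondeterminism of Rackoff's guessing procedure, exactly as in the proof of Theorem~\ref{thm:cut_problem}.
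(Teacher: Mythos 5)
Your proposal is correct and follows essentially the same route as the paper's proof: bound $|V_a|$ by $|T_{ab}|+1$, reduce each test ``is $t$ fireable from $s[v,M_a(v)]$?'' to coverability of $W^-(t)$ from $M_a(v)$ in the Petri net $\widehat{\N}_{el}$, and invoke Rackoff's \EXPSPACE{} bound for polynomially many such queries. The paper states this reduction in one line, whereas you additionally spell out its justification via omniscient sequences and the exponential-space construction of $\widehat{\N}_{el}$, which is exactly the intended (implicit) argument.
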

\begin{proof}
	First note that $|V_a|\leq|T_{ab}|+1$.
	Then for any vertex $v$ already in $V_a$ and any $t\in T_{ab}$
	checking whether  $s[v,M_{a}(v)] \xrightarrow{\sigma(v,t)}$ is fireable
	is equivalent to solving the covering problem
	$M_{a}(v)\xrightarrow{\sigma} m\succeq W^-(t)$ in $\widehat{\N}_{el}$ (recall Definition~\ref{def:N_el})
	which can be done in exponential space due to Rackoff's coverability theorem for Petri nets.
\end{proof}
While we will not prove it, using a reduction from the Petri net coverability problem,
one can show that we cannot use less than an exponential space to build the abstract graph.

\medskip
Let us illustrate the \at\ graph in
Figure~\ref{fig:at_graph} corresponding to the RPN of Figure~\ref{fig:rpn_example}.
Here the initial state is $s[r,p_{ini}]$. For clarity, we have renamed the abstract transitions
as follows:
$
t:={t_{beg}}$, $ta:={t_{a_2}}$, $tb:={t_{b_2}}.
$
For instance, the existence of the edge from $v_t$ to $v_{ta}$
is justified by the firing sequence $(v_t,t_{a_1})(v_t,ta)$.

\begin{figure}[!h]
	\begin{center}
			\begin{tikzpicture}[scale=0.7,
            > = stealth, 
            shorten > = 1pt, 
            auto,
            semithick 
        ]
        \tikzstyle{state}=[
        	circle,
            draw = black,
            thick,
            fill = white,
            minimum size = 4mm
        ]

        
        \node[state, label=\small $r$] (v1) at (0,0)  {};
        \node[state ,label=90:\small $v_{t}$] (v2) [right = 15mm  of v1] {};
        \node[state,label=175:\small $v_{ta}$] (v3) [above right = 2mm and 15mm of v2] {};
        \node[state,label=185:\small $v_{tb}$] (v4) [below right = 2mm and 15mm of v2] {};
        
        \path[->] (v1) edge node {} (v2);
        \path[->] (v2) edge node {} (v3);
        \path[->] (v2) edge node  {}  (v4);
        \path[->, bend left] (v3) edge node  {}  (v4);
        \path[->, bend left] (v4) edge node  {}  (v3);
        \path[->] (v3) edge[loop right ] node  {}  (v3);
        \path[->] (v4) edge[loop right ] node  {}  (v4);

\end{tikzpicture}
	\end{center}\vspace*{-6mm}
	\caption{An abstract graph for the RPN in
		Figure~\ref{fig:rpn_example} }
	\label{fig:at_graph}
\end{figure}
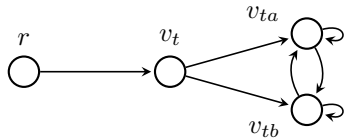

Let $\sigma$ be an infinite firing sequence. We say that $\sigma$
is \emph{\lseq}\ if it visits a state $s$ whose depth is strictly greater
than $|T_{ab}|$. Otherwise, we say that $\sigma$ is \emph{\sseq}.
To solve the termination problem it suffices
to show whether the RPN has such an infinite sequence, either \sseq\
or \lseq.

The next lemma establishes that
\emph{lassos} of the \at\ graph
are witnesses of \lseq\ infinite sequences in an RPN:
\begin{restatable}{lemma}{unboundedpath}
	\label{lem:Finidng unbounded path using at}
	Let $(\N,s_{0})$ be a marked RPN.
	Then there is a \lseq\ infinite sequence starting from $s_{0}$
	if and only if there is a cycle in $G_{\N,s_{0}}$.
\end{restatable}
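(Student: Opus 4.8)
The plan is to prove the two directions separately. For the ``if'' direction, suppose $G_{\N,s_0}$ contains a cycle. Since the only vertex that can possibly have no predecessor is $r$, a cycle gives us a finite walk $r \to v_{t_1} \to v_{t_2} \to \cdots \to v_{t_k} \to v_{t_1}$ reachable from $r$, and in particular an infinite walk in $G_{\N,s_0}$ that uses edges $e_1, e_2, \dots$ infinitely often with the abstract transitions $t_1,\dots,t_k$ cycling. The key observation is that each edge $(v, v_t)$ of the abstract graph is, by definition, witnessed by a firing sequence $s[v, M_a(v)] \xrightarrow{\sigma(v,t)}$ that reaches a state in which $t$ is fired by $v$, producing a child thread with marking $\Omega(t) = M_a(v_t)$. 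So I would build the infinite RPN sequence by concatenating these witnesses: starting from $s[r,m_0] = s[r, M_a(r)]$, fire the witness for the first edge (ending with the abstract transition that creates a child $v_1$ with marking $M_a(v_{t_1})$), then recurse \emph{inside that child}: fire the witness for the next edge from within $v_1$'s subtree, creating a grandchild, and so on. Because we never fire cut transitions in this construction, nothing is ever destroyed, the depth of the tree grows without bound (in particular it exceeds $|T_{ab}|$ after at most $|T_{ab}|+1$ steps), and the sequence is infinite. This gives a \lseq\ infinite sequence. One must be slightly careful that each ``witness'' sub-sequence, which was computed for a single isolated thread $s[v,M_a(v)]$, can be replayed inside a larger state where $v$ is a leaf with \emph{exactly} marking $M_a(v)$; this follows from the locality of the firing rule (the inductive last clause of the operational semantics), since the witness only touches $v$ and threads it creates.

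For the ``only if'' direction, suppose there is a \lseq\ infinite sequence $\sigma$ starting from $s_0 = s[r,m_0]$, so $\sigma$ visits some state $s$ of depth strictly greater than $|T_{ab}|$. Consider the branch in $s$ from the root to a deepest leaf; it has more than $|T_{ab}|$ edges, each labelled by an abstract transition in $T_{ab}$, so by pigeonhole two distinct edges on this branch are labelled by the same abstract transition, say both by $t$. Now I want to read off a cycle in $G_{\N,s_0}$. The point is that each thread $u$ appearing on this branch was created by some abstract transition firing, and at the moment of its creation its marking was exactly the initial marking prescribed; moreover the abstract graph records, via the edge $(v, v_t)$, precisely the fact that from a thread with marking $\Omega(t')$ (for the transition $t'$ that created it) one can eventually fire $t$. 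Tracking the branch $r = u_0 \to u_1 \to \cdots \to u_\ell$ of $s$ together with the abstract transitions $\hat t_1, \dots, \hat t_\ell$ labelling its edges, I claim $r \to v_{\hat t_1} \to v_{\hat t_2} \to \cdots \to v_{\hat t_\ell}$ is a walk in $G_{\N,s_0}$: indeed, restricting $\sigma$ to the subtree rooted at $u_{i-1}$ up to the point it fires $\hat t_i$ shows $s[u_{i-1}, \Omega(\hat t_{i-1})] \xrightarrow{\sigma(u_{i-1},\hat t_i)}$ is possible (with $\hat t_0$ interpreted so that $\Omega(\hat t_0) = m_0$ at the root), which is exactly the condition for the edge $(v_{\hat t_{i-1}}, v_{\hat t_i})$ to be in $E_a$. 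Since two of the $\hat t_i$ coincide, the walk repeats a vertex of $G_{\N,s_0}$, hence $G_{\N,s_0}$ contains a cycle.

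\textbf{Main obstacle.} The delicate part is the bookkeeping in both directions that lets one pass between ``a witness sequence for an isolated single-vertex state $s[v, M_a(v)]$'' and ``the corresponding behaviour of a thread sitting inside a large evolving tree.'' Concretely: in the ``if'' direction I must argue that replaying the abstract-graph witnesses nested inside one another yields a genuine RPN firing sequence — this rests on the fact that a firing $(v,t)$ depends only on the subtree rooted at $v$, so it stays enabled regardless of what the rest of the tree does, which is exactly the inductive structure of the operational semantics (and a monotonicity-style remark, since we only ever add threads and never remove any). In the ``only if'' direction I must extract, from the single global sequence $\sigma$, for each edge of the deep branch a \emph{sub-sequence performed within one subtree} witnessing fireability of the next abstract transition from the correct initial marking; this uses that when a thread $u$ is created by firing $t'$ it starts with marking $\Omega(t')$, and that the projection of $\sigma$ onto the actions of $\mathrm{Des}(u)$ is itself a valid firing sequence of $s[u,\Omega(t')]$. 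Once these locality facts are set up, the pigeonhole argument on branch length versus $|T_{ab}|$ is immediate, and everything else is routine.
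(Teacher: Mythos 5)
Your proof is correct and takes essentially the same route as the paper's: for the deep-to-cycle direction you read the abstract transitions along a long branch of a deep state, project the firing sequence onto each subtree to get a walk in $G_{\N,s_{0}}$ of length greater than $|T_{ab}|$ and apply pigeonhole, and for the cycle-to-deep direction you unfold the cycle into an infinite walk from $r$ and nest the edge-witness sequences to grow an unbounded chain of threads, which is exactly the paper's construction (with the locality of the firing rule made more explicit than in the paper). The only loose phrase is that the construction ``never fires cut transitions'' --- a witness sequence for $s[v,M_a(v)]\xrightarrow{\sigma(v,t)}$ may well cut auxiliary descendants of $v$ --- but since each thread on the main chain must survive to fire the next abstract transition, the chain persists and the depth grows unboundedly, so the conclusion stands.
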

\begin{proof}
	$\bullet$ Assume that $ \sigma $ is a \lseq\ sequence. Hence, it reaches
	a state  $\tilde{s}$ whose tree has a path $ \gamma $
	starting from the root, with $|\gamma|> |T_{ab}|$. Let us denote it by
	$\gamma=(v_{i})_{i=1}^{m}$.
	For all $ i\leq m $ denote by $ t_i $ the abstract transition that creates
	$ v_i $. Using $ \gamma$, one builds a path
	$ \gamma_{a}=v_{1}v_{2}\ldots v_{m} $ in $G_{\N,s_{0}}$ as follows. First
	$ v_{1}=r $ and $ m_r= M_{a}(r) $. Since along
	$ \sigma $ the
	thread $ r $ fires  $ t_1 $ to create $v_2$, there is an
	edge between $ r $ to $ v_{t_2} $ in $ G_{\N,s_{0}} $. For any
	$ 1< i\leq m $ the thread $ v_{i} $ is created with the marking
	$ \Omega(t_i)=M_{a}(v_{t_i}) $. Since $ v_{i+1} $ is a child of $ v_i $,
	somewhere on the sequence $ \sigma $ the thread $ v_i $ fires
	$ t_{i+1} $. Therefore there is an edge from $ v_{t_{i}} $
	to $ v_{t_{i+1}} $ in $ G_{\N,s_{0}}$.
	The length of the path $ \gamma_a $ strictly greater then $ |T_{ab}|$,  and since $V_{a}\leq|T_{ab}|+1$ there is a cycle in $\gamma_a$.
	
	\noindent $\bullet$
	Conversely assume that there is a cycle in $G_{\N,s_{0}}$.
	Then there is an
	infinite path $\gamma_a=\{ v_{i}\} _{i=0}^{\infty}$
	in $G_{\N,s}$ starting from $r$, where for any $i\geq1$ denote by $t_{i}$ the abstract
	transition associated the vertex $v_i$. We now translate this infinite path
	to an \lseq\ sequence on $\N$ with initial state $s_{0}$. Note that
	$v_{0}=r$ and that $m_r=M_{a}(r)$. By definition of
	$E_{a}$ there is a sequence $s \xrightarrow{\sigma_{1}}s_0'$ where the
	abstract transition $ t_1 $ is fireable from $v_{0}$ in $ s_0' $. We get
	$ s\xrightarrow{\sigma_{1}}s_0'\xrightarrow{(v_0,t_1)}s_{2}$. Denote by
	$ v_1 $ the thread created by $t_1$. The threads marking has
	$M_{s_1}(v_1)=M_{a}(v_1)$, therefore one continues translating the path
	$ \gamma_a $ in the same way as the first edge. Since for any
	$ (v_i,v_{i+1}) $ in $ \gamma_{a} $ we create a new thread from
	$ v_i $ one gets an \lseq\ sequence.
\end{proof}
We now show that for any \sseq\  $\sigma$ there
is a thread $v$ which fires infinitely many times in $\sigma$.

\begin{restatable}{lemma}{infinitelymanytimes}
	\label{lem: thread repets infinitly many times}
	Let $(\N,s_0)$ be a marked RPN and $\sigma$ be a \sseq\
	sequence. Then there is a thread $v$ that fires infinitely many times
	in $\sigma$.
\end{restatable}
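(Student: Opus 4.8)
The plan is to argue by contradiction. Suppose that every thread fires only finitely many times along $\sigma$; I will show that then $\sigma$ visits a state of depth strictly greater than $|T_{ab}|$, contradicting that $\sigma$ is \sseq. Since $\sigma$ is infinite it contains infinitely many firings, and each firing $(v,t)$ is performed by some thread $v$. Every thread other than the root $r$ is created by an abstract transition fired by its parent, so if only finitely many threads ever acted, only finitely many threads would ever be created, and hence only finitely many firings could occur — a contradiction. Thus, under our assumption, infinitely many threads are created during $\sigma$.

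Next I would work with the \emph{genealogy} of $\sigma$: the rooted tree $\mathcal{T}$ whose vertices are all the threads ever created along $\sigma$ (recall $s_0 = s[r,m_0]$, so $r$ is the unique root), where the parent in $\mathcal{T}$ of a non-root thread $v$ is the thread that fired the abstract transition creating $v$. By the previous paragraph $\mathcal{T}$ is infinite. Moreover the children of a thread $v$ in $\mathcal{T}$ are exactly the threads created by the abstract-transition firings of $v$; since $v$ fires only finitely often by assumption, $v$ has finitely many children. Hence $\mathcal{T}$ is an infinite, finitely branching tree, and by König's lemma it has an infinite branch $r = u_0, u_1, u_2, \ldots$

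It remains to turn this branch into a state of large depth. Consider the step of $\sigma$ at which $u_i$ is created: it is a firing $(u_{i-1},t)$ with $t\in T_{ab}$, so at that point $u_{i-1}$ occurs in the current state $s$. The key (easy) observation, which I would prove by a straightforward induction on the firing sequence using the cut rule and the fact that deleted threads are never reused, is that a thread occurs in a reachable state only together with all of its genealogical ancestors. Consequently $u_0, u_1, \ldots, u_{i-1}$ all occur in $s$, and together with the freshly created child $u_i$ the tree of $s$ contains the path $u_0 \to u_1 \to \cdots \to u_i$; hence $s$ has depth at least $i$. Taking $i$ large enough (so that this path exceeds $|T_{ab}|$ edges) exhibits a state of depth strictly greater than $|T_{ab}|$ on $\sigma$, contradicting that $\sigma$ is \sseq. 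Therefore some thread fires infinitely many times in $\sigma$. I expect the only point needing care is precisely that ancestor-persistence claim, everything else being bookkeeping on top of König's lemma.
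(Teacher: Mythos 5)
Your proof is correct: the genealogy tree is well defined because deleted threads are never reused, it is finitely branching under your assumption, K\"onig's lemma gives an infinite branch, and the ancestor-persistence claim does hold (the only rule removing vertices is a cut, which prunes a whole subtree, so a thread can never outlive its creator), which turns a long genealogical chain into a state of depth greater than $|T_{ab}|$. The paper proves the lemma more directly, without contradiction: starting at the root, if the current thread fires only finitely often it has finitely many children, so by pigeonhole the firing subsequence inside some child's dynasty is infinite; one descends in this way, and the \sseq\ hypothesis caps the descent at depth $|T_{ab}|$, forcing some thread along the way to fire infinitely often. The two arguments are dual: the paper's pigeonhole descent is essentially the constructive core of your K\"onig step, with the shallowness bound used to terminate the induction rather than to be contradicted. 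What your version buys is that it makes explicit the ancestor-persistence fact (and the identification of a thread's depth in a state with its genealogical depth) that the paper leaves implicit; what it costs is the extra machinery of K\"onig's lemma and a contrapositive detour where a three-line bounded induction suffices.
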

\begin{proof}
	If the root $r$ fires infinitely often then we are done.
	Otherwise, $r$ has finitely many children, and the firing subsequence of $\sigma$ of the subtree of (at least) one child, say $v$, must be infinite.
	If $v$ fires infinitely often then we are done.
	Otherwise, we proceed inductively up to $|T_{ab}|$ where some thread must fire infinitely often.
\end{proof}

We now show that given some state $s[r,m_0]$ one can check
in exponential space the existence
of a \sseq\ sequence in which $r$ fires infinitely many times.

\begin{restatable}{lemma}{sseqFromu}
	\label{lem:sseq from a given u}
	Let  $(\N,s_0)$ be a marked RPN. Then  one can check in exponential space,
	whether there exists an infinite sequence starting
	with $r$ firing infinitely many times.
\end{restatable}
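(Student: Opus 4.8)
The plan is to reduce the problem to non‑termination of an ordinary Petri net, namely $\widehat{\N}_{el}$. The key claim is: there is an infinite firing sequence of $\N$ from $s[r,m_0]$ in which $r$ fires infinitely often if and only if the Petri net $\widehat{\N}_{el}$ admits an infinite firing sequence from $m_0$. Granting this, the procedure is: first compute $T_{ret}$ — by Proposition~\ref{prop:Length of an abstract transition in RPN} and the discussion following it this is doable in exponential space, or equivalently each membership $t\in T_{ret}$ is an instance of the cut problem and hence decidable in \EXPSPACE\ by Theorem~\ref{thm:cut_problem} — then write down the linear‑size Petri net $\widehat{\N}_{el}$, and finally run Rackoff's \EXPSPACE\ procedure for Petri‑net termination (Theorem~\ref{thm:Termination Bound For PN }) on $(\widehat{\N}_{el},m_0)$. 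Since $\widehat{\N}_{el}$ has size $O(\eta)$, the whole procedure stays in \EXPSPACE.

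For the ``if'' direction, let $m_0\xrightarrow{\tau}_{\widehat{\N}_{el}}$ with $\tau$ infinite. Applying the observation following Definition~\ref{def:N_el} to every finite prefix of $\tau$ and taking the union, the sequence $(r,\tau)$ is fireable in $\widehat{\N}$ from $s[r,m_0]$: an occurrence of $t\in T_{ab}$ spawns an idle child marked $\Omega(t)$, an occurrence of $t^r$ realizes on $r$ the incidence $-W^-(t)+W^+(t)$, and elementary transitions behave as usual. I then replace, in this infinite sequence, every firing of $t^r$ by $(r,t)$ followed by the returning sequence $\sigma_t$ performed inside the child just created — legitimate since $t\in T_{ret}$, as recalled before Proposition~\ref{prop:equivreach}. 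The result is an infinite firing sequence of $\N$ from $s[r,m_0]$, and $r$ performs exactly one firing per symbol of $\tau$, hence infinitely many.

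For the ``only if'' direction, let $\sigma$ be an infinite firing sequence of $\N$ from $s[r,m_0]$ with $r$ firing infinitely often, say $(r,t_1)(r,t_2)\cdots$ lists the firings of $r$ in the order they occur. I read off an infinite word $\tau$ over the transitions of $\widehat{\N}_{el}$: if $t_i\in T_{el}$, keep $t_i$; if $t_i\in T_{ab}$ and the unique child it creates is cut at some later moment of $\sigma$, write $t_i^r$ — this is sound because, restricting $\sigma$ to the firings that take place in that child's subtree up to and including the cut, one gets (by locality of the operational semantics) a firing sequence from a single vertex marked $\Omega(t_i)$ reaching $\emptyset$, so $t_i\in T_{ret}$; and if $t_i\in T_{ab}$ but its child is never cut, keep $t_i$. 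To check that $\tau$ is fireable from $m_0$ in $\widehat{\N}_{el}$, compare after $i$ steps the marking $M^{(i)}_\tau$ of $\widehat{\N}_{el}$ with the marking of $r$ in $\sigma$ just before $r$'s $(i{+}1)$-th firing: both start at $m_0$ and have subtracted the same $W^-(t_k)$ and added the same elementary $W^+(t_k)$ for $k\le i$; the only difference is that $\tau$ has already credited $W^+(t_k)$ for every abstract $t_k$ with $k\le i$ whose child is eventually cut, whereas $\sigma$ has so far credited it only for the ones cut early. Hence $M^{(i)}_\tau$ dominates $r$'s current marking, which is $\ge W^-(t_{i+1})$; since $\widehat{W}^-_{el}(t)=W^-(t)$ for $t\in T_{ab}$ and $\widehat{W}^-_{el}(t^r)=W^-(t)$, the $(i{+}1)$-th symbol of $\tau$ is enabled in all cases, so $\tau$ is an infinite firing sequence of $\widehat{\N}_{el}$, which proves the claim.

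The hard part is the bookkeeping in the ``only if'' direction: matching the deferred returns of cut children against the instantaneous $t^r$-transitions and verifying that anticipating these returns never disables a later firing of $r$. The monotonicity observation that adding tokens to $r$ earlier can only help is exactly what makes this go through, just as in the omniscient-sequence construction of Proposition~\ref{prop:omniciant}.
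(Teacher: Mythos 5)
Your proposal is correct and follows essentially the same route as the paper: the same key equivalence (an infinite sequence of $\N$ with $r$ firing infinitely often exists iff $(\widehat{\N}_{el},m_0)$ has an infinite firing sequence), the same construction of $\widehat{\N}_{el}$ via $T_{ret}$ computed in exponential space, and the same appeal to Rackoff's \EXPSPACE{} bound for Petri net termination. The only cosmetic difference is in the ``only if'' direction, where you argue by a direct marking-domination invariant on the root's firings instead of the paper's cut-elimination surgery (in the style of Proposition~\ref{prop:omniciant}) followed by projection onto the root; both arguments are sound and interchangeable.
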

\begin{proof}
	We first show that there is a sequence  where $r$ fires infinitely many times
	if and only if there is a infinite firing sequence in the marked Petri net $(\widehat{\N}_{el},m_0)$.
	
	\noindent
	$\bullet$ Assume there exists such $\sigma$ in $(\N,s[r,m_0])$. Then the sequence $\sigma$ is also fireable in $(\widehat{\N},s[r,m_0])$.
	In $\widehat{\N}$, one eliminates in $\sigma$ the cut transitions by increasing occurrence order as follows. Let $(v,t)$ be a cut transition
	and $(v',t')$ be the firing that creates $v$. Then one deletes all the firings performed by the descendants
	of $v$ and replaces $(v',t')$ by $(v',t'^r)$. Let $\sigma'$ be the sequence obtained after this transformation.
	In $\sigma'$, the root still fires infinitely often since no firing performed by the root has been deleted (but sometimes
	substituted by an elementary firing). Moreover, $\sigma'$ has no more cut transitions.
	Consider the still infinite firing sequence $(r,\sigma'')$ where in  $\sigma'$ all firings in other vertices than $r$ have been deleted.
	Observe now  that by definition, $\sigma''$ is also an infinite sequence of $\widehat{\N}_{el}$.
	
	\noindent
	$\bullet$  Conversely, assume there exists an infinite firing sequence $\sigma$ of $(\widehat{\N}_{el},m_0)$. Then
	$(r,\sigma)$  is  an infinite firing sequence of $(\widehat{\N},s[r,m_0])$ (with only root firings)
	entailing the existence of  an infinite firing sequence of $(\N,s[r,m_0])$.
	
	\noindent
	By Theorem~\ref{thm:Termination Bound For PN }, one can check in exponential space
	whether there exists an infinite sequence of $(\widehat{\N}_{el},m_0)$.
\end{proof}

Summing up the results for \sseq\ and \lseq\ sequences we get:

\begin{theorem}
	\label{prop:Bounded path in EXPSPACE}
	The termination problem of RPN is \EXPSPACE-complete.
\end{theorem}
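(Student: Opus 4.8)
The plan is to establish $\EXPSPACE$-completeness by combining the tools built above. For hardness there is nothing new to do: the termination problem for Petri nets is already $\EXPSPACE$-hard~\cite{Lipton76} and Petri nets are a special case of RPNs, so I concentrate on membership. By Proposition~\ref{col:rooted} the general problem reduces in polynomial time to the rooted one, so I assume $s_0=s[r,m_0]$, and write $\eta$ for the size of the input.

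The heart of the proof is the characterization: $(\N,s_0)$ admits an infinite firing sequence if and only if either (i) the \at\ graph $G_{\N,s_0}$ contains a cycle, or (ii) there is a vertex $w\in V_a$ such that the marked Petri net $(\widehat{\N}_{el},M_a(w))$ admits an infinite firing sequence. For the forward direction, let $\sigma$ be infinite; it is either \lseq\ or \sseq. If $\sigma$ is \lseq, Lemma~\ref{lem:Finidng unbounded path using at} gives a cycle in $G_{\N,s_0}$, so (i) holds. If $\sigma$ is \sseq, Lemma~\ref{lem: thread repets infinitly many times} produces a thread $v$ firing infinitely often along $\sigma$; take $w=r$ when $v=r$ and $w=v_t$ when $v$ was created by the abstract transition $t$ (in the latter case $v_t\in V_a$, since firing $t$ witnesses that $t$ is fireable). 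The restriction of $\sigma$ to the subtree rooted at $v$, read from the creation of $v$ onward, is an infinite firing sequence of $(\N,s[v,M_a(w)])$ — note $M_a(w)$ equals $\Omega(t)$, the initial marking of $v$, or $m_0$ when $v=r$ — in which the root fires infinitely often; the equivalence established inside the proof of Lemma~\ref{lem:sseq from a given u} then gives an infinite sequence of $(\widehat{\N}_{el},M_a(w))$, so (ii) holds. Conversely, (i) yields an infinite (\lseq) sequence by Lemma~\ref{lem:Finidng unbounded path using at}. For (ii), since $w\in V_a$ there is a path from $r$ to $w$ in $G_{\N,s_0}$; translating this path into $\N$ exactly as in the proof of Lemma~\ref{lem:Finidng unbounded path using at} reaches a state carrying a thread whose marking is $M_a(w)$, and grafting onto that thread the infinite sequence supplied by the proof of Lemma~\ref{lem:sseq from a given u} produces an infinite firing sequence of $(\N,s_0)$.

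It remains to verify that both conditions of the characterization are decidable in exponential space. The graph $G_{\N,s_0}$ has at most $|T_{ab}|+1$ vertices and is of size linear in $\eta$; by Lemma~\ref{lem:abstract graph in expspace} it can be built in exponential space, after which searching it for a cycle is trivial. Likewise $\widehat{\N}_{el}$ is of size linear in $\eta$ and can be computed in exponential space (as noted after Definition~\ref{def:N_el}), and the markings $M_a(w)$ are read directly off the input. For each of the at most $|T_{ab}|+1$ vertices $w\in V_a$ one then decides whether $(\widehat{\N}_{el},M_a(w))$ is non-terminating by the $\EXPSPACE$ procedure of Theorem~\ref{thm:Termination Bound For PN }. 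Since the auxiliary objects $G_{\N,s_0}$ and $\widehat{\N}_{el}$ have polynomial size — it is only their construction, not their description, that costs exponential space — feeding them to an $\EXPSPACE$ subroutine keeps the overall space use exponential. Together with $\EXPSPACE$-hardness this yields $\EXPSPACE$-completeness.

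The delicate point will be the \sseq\ direction of the characterization: correctly matching the thread produced by Lemma~\ref{lem: thread repets infinitly many times} with its vertex $w$ in the \at\ graph, checking that $v$ is born precisely with marking $M_a(w)$, and verifying that the restriction of $\sigma$ to the subtree of $v$ is a legitimate firing sequence of $(\N,s[v,M_a(w)])$ to which Lemma~\ref{lem:sseq from a given u} applies. The complexity argument also deserves a sentence of care — ensuring that composing the exponential-space constructions of $G_{\N,s_0}$ and $\widehat{\N}_{el}$ with the exponential-space Petri net termination test does not escalate to doubly exponential space — but it becomes routine once one records that these auxiliary objects have polynomial size.
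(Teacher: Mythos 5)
Your proof is correct and follows essentially the same route as the paper: split infinite sequences into deep and shallow ones, detect the deep case via a cycle in the abstract graph (Lemma~\ref{lem:Finidng unbounded path using at}), and detect the shallow case by testing, for each abstract-graph vertex $w$, non-termination of the root's behaviour, which Lemma~\ref{lem:sseq from a given u} reduces to termination of the Petri net $(\widehat{\N}_{el},M_a(w))$, with Lemma~\ref{lem: thread repets infinitly many times} supplying the infinitely-firing thread and the abstract graph supplying reachability of a thread marked $M_a(w)$. You merely make explicit some details the paper leaves implicit (the self-containedness of the subtree rooted at the infinitely-firing thread, and the grafting in the converse direction), so no substantive difference.
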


\begin{proof}
	The algorithm proceeds as follows.
	It builds in \EXPSPACE\ (by Lemma~\ref{lem:abstract graph in expspace}) the \at\ graph and
	checks  whether
	there is a \lseq\ infinite sequence
	using the characterization of Lemma~\ref{lem:Finidng unbounded path using at}.
	In the negative case, it looks for a \sseq\ infinite
	sequence. To this aim, it checks in exponential space for any reachable vertex $v$
	from $r$ in $G_{\N,s_{0}}$, whether
	there exists an infinite sequence starting from $s[v,M_{a}(v)]$
	with the root firing infinitely many times.
	The complexity follows from Lemma~\ref{lem:sseq from a given u}
	while the correctness
	follows from Lemma~\ref{lem: thread repets infinitly many times}.
\end{proof}

\section{Finiteness and boundedness are \EXPSPACE-complete}
\label{sec:finiteness}
In this section we will show that the finiteness and boundedness problems for RPNs are $\EXPSPACE$-complete w.r.t. $\eta=size(\N,s_{0})$, i.e. the accumulated size of the RPN and the initial state.
For Petri nets the finiteness problem, which is equivalent to the boundedness problem, has been shown to be $\EXPSPACE$-complete:
\begin{theorem}[\cite{Lipton76,Rac78}]
	\label{thm:Finiteness Bound For PN }The finiteness problem for Petri
	nets is $\EXPSPACE$-complete.
\end{theorem}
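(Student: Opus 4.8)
The plan is to prove the two directions of the claimed $\EXPSPACE$-completeness separately, following Rackoff's and Lipton's arguments; recall that for Petri nets the reachability set is finite if and only if the net is bounded, so ``finiteness'' and ``boundedness'' coincide here.

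\textbf{Membership in $\EXPSPACE$.} First I would recall the Karp--Miller characterization: a marked Petri net $(\N,m_0)$ is unbounded if and only if there is a \emph{self-covering} firing sequence, i.e. $m_0\xrightarrow{\sigma}m\xrightarrow{\tau}m'$ with $\tau$ nonempty, $m\leq m'$ and $m\neq m'$. Hence finiteness reduces to deciding the \emph{absence} of such a sequence. The crux is a Rackoff-style length bound, proved exactly as Theorem~\ref{thm:Rackoff covering path} by induction on the number of ``active'' coordinates: for $i$ up to the dimension, call a sequence $i$-\emph{bounded} if only the first $i$ places are required to stay non-negative, and show inductively that any $i$-bounded self-covering sequence can be shortened to length $f(i)$, where $f(0)$ is a fixed polynomial value and $f(i+1)\leq \big(\text{exponential in }n\big)\cdot f(i)$ with $n=\mathrm{size}(\N)$; the inductive step drops from the non-negativity constraint any coordinate that exceeds the threshold $B_i=f(i)\cdot(\text{max arc weight})$, reducing to the $i$-bounded case. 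Unfolding the recurrence gives a short witness of length at most $2^{2^{O(n\log n)}}$, and along such a witness every intermediate marking is bounded by $(\text{max arc weight})\cdot 2^{2^{O(n\log n)}}$, hence has bit-length $2^{O(n\log n)}$. The decision procedure then guesses the steps of $\sigma\tau$ one at a time, keeping only the current marking, a step counter, and a stored candidate marking $m$, accepting when a strictly larger marking is reached; this runs in nondeterministic exponential space. By Savitch's theorem $\NEXPSPACE=\EXPSPACE$, and since deterministic space classes are closed under complement, deciding finiteness (the negation) is also in $\EXPSPACE$.

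\textbf{$\EXPSPACE$-hardness.} I would reduce from the acceptance problem of Turing machines running in space $2^{cn}$, which is $\EXPSPACE$-complete. The heart of the reduction is Lipton's construction of a family of Petri nets, of size $O(n)$, simulating counters bounded by $2^{2^n}$: one builds inductively ``multiply/divide by a constant'' and ``zero-test'' modules in which the correctness of a zero-test at level $k$ is guaranteed by an invariant maintained by level $k-1$, so the doubly-exponential bound is enforced without any zero-test primitive. Wrapping the simulation of the space-bounded machine inside this construction, I would arrange that when (and only when) the machine reaches its accepting configuration, a designated place is pumped forever, making the net unbounded, while every other run is finite. Then $(\N,m_0)$ is infinite iff the machine accepts, which yields $\EXPSPACE$-hardness of non-finiteness and hence of finiteness.

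\textbf{Main obstacle.} The delicate part is the Rackoff length bound for self-covering sequences: the inductive ``drop a saturated coordinate'' argument and the bookkeeping of the resulting recurrence, together with checking that the shortened witness keeps all intermediate markings within a doubly-exponential bound --- this last point is precisely what makes the nondeterministic search fit into exponential space. The hardness direction, although lengthy, is a direct simulation and raises no conceptual difficulty beyond the recursive counter gadget.
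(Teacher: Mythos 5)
The paper does not prove this statement at all: it is imported verbatim as a known result of Rackoff~\cite{Rac78} (membership, via the doubly-exponential bound on shortest self-covering sequences) and Lipton~\cite{Lipton76} (hardness, via the recursive simulation of $2^{2^n}$-bounded counters), which is exactly the two-part argument you sketch. Your outline is correct and coincides with the proofs in the cited sources, so there is nothing to compare beyond noting that the paper itself only cites them.
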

$\EXPSPACE$-hardness follows immediately from  $\EXPSPACE$-hardness of the finiteness problem for Petri nets~\cite{Lipton76}.

Moreover by applying Proposition~\ref{col:rooted} like in previous sections we will assume that $s_0=s[r,m_0]$.
Given two vertices $u,v$ in a graph $\mathcal G$, the \emph{distance} between them $dist_{\mathcal G}(u,v)$ is
the length of a shortest path going from one to the other.

\begin{lemma} \label{lem:Form a path in ab to a path in RPN}
	Let $(\N,s_0)$ be a marked RPN and $G_{\N,s_{0}}=(V_a,E_a,M_a)$ be its abstract graph. Then for all $v\in V_a$, there exists $s\in Reach(\N, s_{0})$ and $u\in V_s $ such that $ M_s(u)=M_a(v)$.
\end{lemma}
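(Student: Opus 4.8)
The plan is to induct on the distance from $r$ to $v$ in $G_{\N,s_0}$. For the base case $v=r$, we have $M_a(r)=m_0$ and $s_0=s[r,m_0]\in Reach(\N,s_0)$ with $u=r$, so the claim holds trivially. For the inductive step, suppose $v\in V_a$ is at distance $k+1$ from $r$, and let $v'$ be the predecessor of $v$ on a shortest path, so $(v',v)\in E_a$ and $v'$ is at distance $k$. By construction of the abstract graph, $v=v'_t$ for some $t\in T_{ab}$, we have $M_a(v)=\Omega(t)$, and the edge exists because $s[v',M_a(v')]\xrightarrow{\sigma(v',t)}$ is fireable, i.e.\ there is a firing sequence $\sigma$ with $s[v',M_a(v')]\xrightarrow{\sigma}s_1\xrightarrow{(v',t)}s_2$ for some states $s_1,s_2$.

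By the induction hypothesis, there exists $s\in Reach(\N,s_0)$ and $u'\in V_s$ with $M_s(u')=M_a(v')$. The key step is now to ``replay'' the witnessing sequence of the edge $(v',v)$ inside the thread $u'$ of $s$. Since the firing rule of an RPN is defined inductively on the tree structure and the markings of threads other than the subtree rooted at $u'$ are untouched, the sequence $s[v',M_a(v')]\xrightarrow{\sigma(v',t)}$ can be transported: replacing each firing $(w,t'')$ acting in the abstract computation by the corresponding firing in the renamed copy of that subtree hanging below $u'$ in $s$, we obtain $s\xrightarrow{\rho}s'$ where $s'$ extends $s$ by (a) possibly modifying the marking of $u'$ and creating descendants of $u'$ along $\sigma$, and (b) finally firing the abstract transition $t$ at $u'$, which creates a fresh child thread $u$ of $u'$ with $M_{s'}(u)=\Omega(t)=M_a(v)$. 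Hence $s'\in Reach(\N,s_0)$ and $u\in V_{s'}$ witnesses the claim for $v$.

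The only real subtlety is the transport of a firing sequence from $s[v',M_a(v')]$ (a single-vertex initial state) into the thread $u'$ of the larger reachable state $s$: we must check that $M_s(u')=M_a(v')$ suffices to fire the same sequence of transitions locally, which follows because the operational semantics lets any thread fire a transition whenever its own marking enables it, independently of the rest of the tree, and fresh vertices for newly created threads can always be chosen (as $\mathcal V$ is countable and deleted threads are never reused). This is essentially the ``locality'' property already exploited in the proof of strong compatibility (Lemma~\ref{lem:strongcompatibility}). Everything else is bookkeeping on the inductive definitions, so I expect no genuine obstacle beyond stating this transport carefully.
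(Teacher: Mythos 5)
Your proposal is correct and follows essentially the same route as the paper: induction on the distance from $r$ in the abstract graph, using the induction hypothesis to obtain a reachable state with a thread marked $M_a(v')$, then replaying the witnessing sequence of the edge $(v',v_t)$ at that thread so that the final firing of $t$ creates a thread marked $\Omega(t)=M_a(v)$. The transport step you flag is exactly what the paper does implicitly when it concatenates $\sigma_u$ with $\sigma_t(w,t)$; making it explicit via locality/strong compatibility is fine.
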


\begin{proof}
	We show the lemma by induction on $dist_{G_{\N,s_{0}}}(r,u)$. If $dist_{G_{\N,s_{0}}}(r,v) =0$ then $v=r$ and $M_a(r)=m_0$.	
	Assume that we have shown the lemma for any $v$ such that $dist_{G_{\N,s_{0}}}(r,v)<n$, and pick $v\in V_a$ such that $dist_{G_{\N,s_{0}}}(r,v)=n$. Since $dist_{G_{\N,s_{0}}}(v,r)>0$, $v=v_t$ for some $t\in T_{ab}$. Moreover there is some $(u,v_t)\in E_a$ such that $dist_{G_{\N,s_{0}}}(r,u)=n-1$ and by the induction hypothesis there is a sequence $s_{0}\xrightarrow{\sigma_u}s_u$ and some $w\in V_{s_u}$ such that $M_{s_u}(w)=M_a(u)$. From the definition of $G_{\N,s_{0}}$ there is a fireable sequence $ s[w,M_a(w)]\xrightarrow{\sigma_t(w,t)}$. Combining these sequences, we get $s_{0}\xrightarrow{\sigma_u}s_u\xrightarrow{\sigma_t(w,t)}s_{v_t}$, where the newly created thread $w'$ fulfills $M_{s_v}(w')=\Omega(t)=M_a(v_t)$.
\end{proof}
The following lemma shows that we can simulate the behaviour of every thread by a Petri net.
\begin{lemma} \label{lem:every marking is covered by ab graph}
	Let $(\N,s_{0})$ be a marked RPN and $G_{\N,s_{0}}=(V_a,E_a,M_a)$ be its abstract graph.
	Then:
	$$
	\bigcup_{s\in Reach(\N,s_{0})}\{M_s(v) \}_{v\in V_s}= \bigcup_{u\in V_a} Reach(\widehat{\N}_{el}, M_a(u)).
	$$
\end{lemma}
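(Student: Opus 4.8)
The plan is to prove the two inclusions separately, each one essentially "unfolding" the relationship between a thread of the RPN and a token configuration of the Petri net $\widehat{\N}_{el}$.

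For the inclusion $\bigcup_{s\in Reach(\N,s_{0})}\{M_s(v)\}_{v\in V_s}\subseteq \bigcup_{u\in V_a} Reach(\widehat{\N}_{el},M_a(u))$, first I would take a reachable state $s$, say $s_{0}\xrightarrow{\sigma}_{\N}s$, and a vertex $v\in V_s$. The idea is that $v$ was created by the firing of some abstract transition $t$ (or $v=r$), so that the marking $v$ carried at creation time was $\Omega(t)$ (resp.\ $m_0$). By walking up the branch from $v$ to the root and using the definition of $G_{\N,s_{0}}$, the vertex $v_t\in V_a$ exists and $M_a(v_t)=\Omega(t)$; this is exactly the translation done in the proof of Lemma~\ref{lem:Finidng unbounded path using at}, so I would invoke that kind of branch-to-path argument. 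Next, the sub-sequence of $\sigma$ performed \emph{by $v$ itself} consists only of firings in the thread $v$; by Proposition~\ref{prop:equivreach} / Proposition~\ref{prop:omniciant} we may replace each abstract transition fired by $v$ and later cut by its matching returning transition $t^r$ of $\widehat{\N}$, and drop the others (which only consume tokens), so that $v$'s own behaviour becomes a sequence of elementary transitions of $\widehat{\N}$. By the Observation following Definition~\ref{def:N_el}, such a root-only sequence of $\widehat{\N}$ corresponds exactly to a firing sequence of $\widehat{\N}_{el}$ from $M_a(v_t)$ reaching $M_s(v)$. Hence $M_s(v)\in Reach(\widehat{\N}_{el},M_a(v_t))$.

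For the reverse inclusion, take $u\in V_a$ and $m\in Reach(\widehat{\N}_{el},M_a(u))$, say $M_a(u)\xrightarrow{\rho}_{\widehat{\N}_{el}}m$. By Lemma~\ref{lem:Form a path in ab to a path in RPN}, there is a reachable state $s'$ of $\N$ with a thread $w$ such that $M_{s'}(w)=M_a(u)$. Now I would lift $\rho$ to a firing sequence of $(\N,s_0)$ acting in the thread $w$: the Observation after Definition~\ref{def:N_el} turns $\rho$ into a sequence $(w,\rho)$ of $\widehat{\N}$ which decreases $M_{s'}(w)$ to $m$ while also spawning some children of $w$; by Corollary~\ref{col:bound_on_translation_of_N-hat_to_N} (replacing every $t^r$ by $t\sigma_t$, which ends with a cut so the spawned sub-threads disappear) this becomes a genuine firing sequence of $\N$ from $s'$, after which the thread $w$ carries marking $m$. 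Composing with the sequence reaching $s'$ gives a reachable state $s$ with $M_s(w)=m$, so $m\in\bigcup_{s\in Reach(\N,s_0)}\{M_s(v)\}_{v\in V_s}$.

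The main obstacle, and the place where care is needed, is the bookkeeping in the forward direction: isolating the firings that belong to a single thread $v$ from the interleaved global sequence $\sigma$, and handling the abstract transitions fired by $v$ whose spawned children may or may not be cut along $\sigma$. The clean way around this is to not reason about $\sigma$ directly but to first apply Proposition~\ref{prop:omniciant} to get an omniscient sequence in $\widehat{\N}$, then restrict attention to $v$'s own firings (which are now all elementary), and finally appeal to the straightforward Observation characterising $\widehat{\N}_{el}$ in terms of root-only sequences of $\widehat{\N}$. Everything else is routine composition of firing sequences.
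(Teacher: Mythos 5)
Your overall route is the paper's: get an omniscient sequence via Proposition~\ref{prop:omniciant}, project onto the firings of the single thread $v$, and read that projection as a firing sequence of $\widehat{\N}_{el}$; for the converse, use Lemma~\ref{lem:Form a path in ab to a path in RPN} to place a thread with marking $M_a(u)$ in a reachable state, lift the $\widehat{\N}_{el}$ sequence into $\widehat{\N}$, and come back to $\N$ (the paper does this via Proposition~\ref{prop:equivreach}, you do it by substituting $t\sigma_t$ for $t^r$ by hand --- same argument). The converse direction is fine.

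The forward direction, however, contains a step that fails. You ``drop'' the abstract transitions fired by $v$ whose children are never cut, and then claim the remaining sequence reaches exactly $M_s(v)$ from $M_a(v_t)$ in $\widehat{\N}_{el}$. Dropping such a firing is harmless for fireability (it only adds tokens at intermediate steps), but it changes the reached marking: in the original run each such firing permanently removed $W^-(t)$ from $v$'s marking, so your truncated simulation ends in $M_s(v)$ plus the sum of the $W^-(t)$ of the dropped firings. This only shows that $M_s(v)$ is \emph{covered} by a marking reachable in $\widehat{\N}_{el}$, which is too weak for the stated equality of sets; for the same reason your remark that $v$'s firings in the omniscient sequence ``are now all elementary'' is not true, since $v$ may still fire abstract transitions creating final threads. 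The repair is precisely what Definition~\ref{def:N_el} provides: $\widehat{T}_{el}$ contains every abstract transition $t$ with $\widehat{W}^-_{el}(t)=W^-(t)$ and $\widehat{W}^+_{el}(t)=0$, so these firings must be kept in the projection; their effect on $v$'s marking ($-W^-(t)$) is then reproduced exactly, and the projected sequence is verbatim a firing sequence of $(\widehat{\N}_{el},M_a(u))$ reaching $M_s(v)$, which is how the paper concludes.
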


\begin{proof}	
	\noindent $\bullet$ Let $m\in\bigcup_{s\in Reach(\N,s_{0})}\{M_s(u) \}_{u\in V_s} $. There exists $s_0\xrightarrow{\sigma}s$ with some $v\in V_s $ such that  $M_s(v)=m$. By Proposition~\ref{prop:omniciant} there is an omniscient sequence in $s_0\xrightarrow{\widehat{\sigma}}_{\widehat{\N}}s$. We split $\widehat\sigma$ into $s_0\xrightarrow{\widehat{\sigma}_1}_{\widehat{\N}}s_v\xrightarrow{\widehat{\sigma}_2}_{\widehat{\N}}$ where $s_v$ is the the state where the thread $v$ first appears. Note that there is $u\in V_a$ for which $M_{s_v}(v) = M_a(u)$. Let $(v,\widehat{\sigma}_2')$ consisting of all firings
	of $v$ in $\widehat{\sigma}_2$. $(v,\widehat{\sigma}_2')$ is fireable from $s_v$ since $\widehat{\sigma}_2$ is omniscient implying that there will be not cut transition fired by a child of $v$. By construction of $\widehat{\N}_{el}$, the sequence $\widehat{\sigma}_2'$ is a firing sequence
	of $(\widehat{\N}_{el},M_a(u))$ thus $ m\in Reach(\widehat{\N}_{el}, M_a(u))$.
	
	\noindent $\bullet$ Let $u\in V_a $ and $m\in Reach(\widehat{\N}_{el}, M_a(u))$, i.e. $M_a(u)\xrightarrow{\sigma}_{\widehat{\N}_{el}}m$ for some $n\in \nat$. First by Lemma~\ref{lem:Form a path in ab to a path in RPN} there exists $s_0\xrightarrow{\sigma_u}_\N s_u$ where for some $v\in V_{s_u}$ we have $M_{s_u}(v) = M_a(u)$. By construction of $\widehat{\N}$ we also have $s_0\xrightarrow{\sigma_u}_{\widehat{\N}} s_u$. By  construction of $\widehat{\N}_{el}$ we get that $s_u\xrightarrow{(v,\sigma)}_{\widehat{\N}}s$ where $M_s(v) = m$. By Proposition~\ref{prop:equivreach}, $s\in Reach(\N,s_{0})$, which concludes the proof.	
\end{proof}
Using the previous Lemma and Rackoff's Theorem we establish the complexity of the boundedness problem:
\begin{proposition}
	The boundedness problem of RPN is \EXPSPACE-complete.
\end{proposition}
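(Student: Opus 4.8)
The plan is to reduce RPN boundedness to a polynomial number of Petri net boundedness instances, each decidable in exponential space by Rackoff's theorem. The crucial ingredient is Lemma~\ref{lem:every marking is covered by ab graph}, which identifies the set of all markings that can label \emph{some} thread in \emph{some} reachable state of $(\N,s_0)$ with $\bigcup_{u\in V_a} Reach(\widehat{\N}_{el}, M_a(u))$, where $G_{\N,s_0}=(V_a,E_a,M_a)$ is the abstract graph. Since a marked RPN is bounded exactly when this union is componentwise bounded (recall that boundedness constrains only the markings, not the number of vertices), and since $V_a$ is finite with $|V_a|\leq |T_{ab}|+1$, the marked RPN $(\N,s_0)$ is bounded if and only if the marked Petri net $(\widehat{\N}_{el}, M_a(u))$ is bounded for every $u\in V_a$.

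The algorithm would proceed as follows. First I would build $G_{\N,s_0}$, which by Lemma~\ref{lem:abstract graph in expspace} can be done in exponential space; note that every label $M_a(u)$ is either $m_0$ or some $\Omega(t)$, hence of size polynomial in $\eta=size(\N,s_0)$. Next I would construct the Petri net $\widehat{\N}_{el}$ (Definition~\ref{def:N_el}), whose size is linear in that of $\N$ and which, as observed after that definition, can be built in exponential space. Then, iterating over the at most $|T_{ab}|+1$ vertices $u$ of $V_a$, I would invoke Rackoff's procedure (Theorem~\ref{thm:Finiteness Bound For PN }) to decide boundedness of $(\widehat{\N}_{el}, M_a(u))$ in space exponential in $size(\widehat{\N}_{el})+size(M_a(u))$, hence exponential in $\eta$; the RPN is declared bounded iff all these calls report ``bounded''. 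As \EXPSPACE\ is closed under a polynomial number of sequential \EXPSPACE\ subroutine calls (reusing the working space between calls), the whole procedure runs in exponential space, establishing membership in \EXPSPACE.

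For the lower bound I would simply note that boundedness for Petri nets is already \EXPSPACE-hard~\cite{Lipton76}, and that a Petri net is the special case of an RPN with $T_{ab}=T_\tau=\emptyset$ and initial state $s[r,m_0]$: such an RPN never creates or deletes threads, so its reachable states are exactly the single-vertex states $s[r,m]$ with $m_0\xrightarrow{*}m$, and RPN boundedness coincides with Petri net boundedness. This yields \EXPSPACE-hardness of RPN boundedness, completing the proof. I expect the only delicate point to be making the equivalence ``$(\N,s_0)$ bounded $\iff$ every $(\widehat{\N}_{el},M_a(u))$ bounded'' fully rigorous, but this follows directly from Lemma~\ref{lem:every marking is covered by ab graph} together with the finiteness of $V_a$, so no genuinely new argument is needed.
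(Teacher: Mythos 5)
Your proposal is correct and follows essentially the same route as the paper: boundedness is reduced via Lemma~\ref{lem:every marking is covered by ab graph} to boundedness of the marked Petri nets $(\widehat{\N}_{el},M_a(u))$ for the finitely many $u\in V_a$, each decided by Rackoff's procedure in exponential space, with \EXPSPACE-hardness inherited from Petri nets. The only step you leave implicit is the preliminary reduction to a rooted initial state $s_0=s[r,m_0]$ via Proposition~\ref{col:rooted}, which the paper invokes first and which is needed because the abstract graph and the cited lemma are formulated for single-vertex initial states.
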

\begin{proof}
	Hardness of the problem comes from hardness of Petri nets.
	Let $(\N,s_0)$ be a marked RPN. First by Corollary~\ref{col:rooted} we can assume that $s_0 = s[r,m_0]$.
	By Lemma \ref{lem:every marking is covered by ab graph} checking whether $\N,s_0$ is bounded
	is equivalent to whether for $v \in V_a$, $(\widehat{\N}_{el}, M_a(u))$ is bounded which, due to Rackoff,
	can be performed in exponential space.	
\end{proof}

Let $(\N,s_{0})$ be a marked RPN. If $s_{0}=\emptyset$ then the number of reachable states is finite (one), hence from now on we  assume that $s_{0}\neq\emptyset$. Next, if there exists $t\in T_{ab}$ with $W^-(t)=0$ then there are infinitely many reachable states since one can fire $t$ repeatedly which provides us with a sequence of states with an unbounded number of threads. Therefore from now on we  assume that for all $t\in T_{ab}$, $W^-(t)>0$.

We now establish a connection between the boundedness of $\widehat{\N}_{el}$ and the maximal number of children of the root in $\N$:
\begin{lemma}\label{lem:bounded num of children}
	Let $\N$ be an RPN such that $(\widehat{\N}_{el},m_0)$ is bounded. Then:
	\[
	\sup_{s'\in Reach(\N,s[r,m_0])}|\{v\in V_{s'}\mid r_{s'}\rightarrow_{s'} v\}|< \infty
	\]
\end{lemma}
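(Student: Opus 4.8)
The plan is to bound the number of children of the root in any reachable state of $\N$ by the number of firings of abstract transitions performed by the root, and then to bound that number using the hypothesis that $(\widehat{\N}_{el},m_0)$ is bounded. First I would fix a reachable state with $s[r,m_0]\xrightarrow{\sigma}_{\N}s'$ and set $N=|\{v\in V_{s'}\mid r_{s'}\rightarrow_{s'}v\}|$; we may assume $N\geq 1$, so $s'\neq\emptyset$. By Proposition~\ref{prop:omniciant} there is an \emph{omniscient} sequence $s[r,m_0]\xrightarrow{\widehat\sigma}_{\widehat{\N}}s'$. Since every thread created along an omniscient sequence is final and $s'\neq\emptyset$, no cut transition can occur in $\widehat\sigma$ (a cut fired by a non-root thread would make that thread non-final, and a cut fired by the root would yield $\emptyset$). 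Hence the root's marking along $\widehat\sigma$ is affected only by the root's own firings, so the subsequence $(r,\rho)$ of $\widehat\sigma$ consisting of the root's firings is a legal firing sequence of $\widehat{\N}$ from $s[r,m_0]$; moreover $\rho$ contains no cut transition, so it is a word over the transition set of $\widehat{\N}_{el}$, and the number of occurrences in $\rho$ of transitions of $T_{ab}$ is exactly $N$ (each such firing creates a child of the root, none of which is ever cut). By the observation stated after Definition~\ref{def:N_el}, $\rho$ is then a firing sequence of the Petri net $\widehat{\N}_{el}$ from $m_0$ containing exactly $N$ occurrences of transitions of $T_{ab}$.

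It then suffices to show that there is a constant $K$ such that every firing sequence of $\widehat{\N}_{el}$ from $m_0$ contains at most $K$ occurrences of transitions of $T_{ab}$. Here I would use two ingredients: by Definition~\ref{def:N_el}, in $\widehat{\N}_{el}$ every $t\in T_{ab}$ has $\widehat{W}^{+}_{el}(t)=0$ and $\widehat{W}^{-}_{el}(t)=W^{-}(t)>0$ (recall the standing assumption of the section that $W^{-}(t)>0$ for all $t\in T_{ab}$); and since $(\widehat{\N}_{el},m_0)$ is bounded, there is a constant $C$ bounding $\sum_{p\in P}m(p)$ over all $m\in Reach(\widehat{\N}_{el},m_0)$. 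Given any $m_0\xrightarrow{\rho}_{\widehat{\N}_{el}}m$, let $\rho^{-}$ be obtained from $\rho$ by deleting all $k$ occurrences of transitions of $T_{ab}$, say $t_1,\dots,t_k$. Because each deleted firing only consumes tokens and produces none, every marking reached along $\rho^{-}$ dominates componentwise the corresponding marking along $\rho$; in particular $\rho^{-}$ stays fireable, say $m_0\xrightarrow{\rho^{-}}_{\widehat{\N}_{el}}m'$, and comparing incidence vectors gives $m'=m+\sum_{i=1}^{k}W^{-}(t_i)$. Taking total token counts, $C\geq\sum_{p\in P}m'(p)\geq\sum_{i=1}^{k}\sum_{p\in P}W^{-}(t_i)(p)\geq k$, since each $W^{-}(t_i)>0$. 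Thus $k\leq C$, so one may take $K=C$.

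Combining the two parts yields $N\leq C$ for every reachable $s'$, hence $\sup_{s'\in Reach(\N,s[r,m_0])}|\{v\in V_{s'}\mid r_{s'}\rightarrow_{s'}v\}|\leq C<\infty$. I expect the delicate point to be the second part: one must verify carefully that deleting the purely consuming abstract firings from a run of $\widehat{\N}_{el}$ keeps it fireable and strictly increases the total token count (by at least one per deleted firing), so that the bound on reachable markings caps the number of such firings; the use of omniscient sequences in the first part is exactly what is needed to get rid of cut transitions and make the root-projection a legal firing sequence.
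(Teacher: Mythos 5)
Your proof is correct and follows essentially the same route as the paper's: take an omniscient sequence (Proposition~\ref{prop:omniciant}), project it onto the root's firings, delete the abstract firings using the standing assumption $W^-(t)>0$ for $t\in T_{ab}$, and invoke the boundedness of $(\widehat{\N}_{el},m_0)$ to cap the number of deleted firings. The only differences are presentational: you derive a direct uniform bound where the paper argues by contradiction on a family of sequences, and you spell out (correctly) why no cut transition can occur in the omniscient sequence and why the root-projection remains fireable.
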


\begin{proof}
	Assume that there exists a family of sequences $ \{\sigma_n\}_{n\in\nat} $ such that $s[r,m_0]\xrightarrow{\sigma_n}_{\N}s_n$ and the number of children of $r$ in $s_n$ is greater than $n$. By Proposition~\ref{prop:omniciant} for all $\sigma_n$ there exists an omniscient sequence $\widehat{\sigma}_n$ in $\widehat{\N}$ from $s[r,m]$ reaching $s_n$. We remove from $\widehat{\sigma}_n$ all the transitions not fired from the root getting $(r,\widehat{\sigma}_n')$ which is also fireable from $s[r,m]$ and which leads to a state where the root has a number of children greater than $n$. Since an abstract transition consumes tokens from the root (for all $t\in T_{ab}$, $W^-(t)>0$) one can remove them from  $(r,\widehat{\sigma}_n')$ and get $(r,\widehat{\sigma}_n'')$ for which $s\xrightarrow{(r,\widehat{\sigma}_n'')}_{\widehat{\N}}s_n''$ and $\sum_{p\in P}M_{s_n''}(r)(p)>n$.
	Since $ \widehat{\sigma}_n''$ is fireable from $m$ in $\widehat{\N}_{el}$ this contradicts the hypothesis of the lemma.	
\end{proof}
Combining the above results, we get a characterization of the  finiteness problem:

\begin{proposition}\label{prop:const finitness is equvalent to}
	Let $(\N,s_{0})$ be a marked RPN. Then $Reach(\N, s_{0})$ is finite if and only if both of the following assertions hold:\\
	1. There is no loop in $G_{\N,s_{0}}=(V_a,E_a,M_a)$;\\
	2. For all $v\in V_a$, $(\widehat{\N}_{el},M_a(v))$ is bounded.
\end{proposition}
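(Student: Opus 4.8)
The plan is to prove both implications separately, relying on the abstract graph $G_{\N,s_0}$ and on the Petri net $\widehat{\N}_{el}$ to control, respectively, the \emph{depth} and the \emph{width} of reachable states. First I would establish the contrapositive of the ``if'' direction: suppose one of the two assertions fails. If assertion~1 fails, i.e.\ $G_{\N,s_0}$ has a cycle, then by (the proof technique of) Lemma~\ref{lem:Finidng unbounded path using at} one can build a firing sequence producing states of arbitrarily large depth; since distinct depths give distinct abstract states, $Reach(\N,s_0)$ is infinite. If assertion~2 fails, i.e.\ $(\widehat{\N}_{el},M_a(v))$ is unbounded for some $v\in V_a$, then by Lemma~\ref{lem:Form a path in ab to a path in RPN} some reachable state has a thread $w$ with $M_s(w)=M_a(v)$, and by Lemma~\ref{lem:every marking is covered by ab graph} (or directly by the construction of $\widehat{\N}_{el}$ and Proposition~\ref{prop:equivreach}) the markings reachable in $(\widehat{\N}_{el},M_a(v))$ all occur as markings of threads in reachable states of $\N$; unboundedness of that Petri net then yields infinitely many distinct thread markings, hence infinitely many reachable abstract states.

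For the ``only if'' direction I would argue, again contrapositively, that assertions~1 and~2 together force $Reach(\N,s_0)$ finite. Assume no cycle in $G_{\N,s_0}$ and every $(\widehat{\N}_{el},M_a(v))$ bounded. Acyclicity of $G_{\N,s_0}$ together with $|V_a|\le |T_{ab}|+1$ gives a uniform bound $D$ on the length of any path from $r$, and by Lemma~\ref{lem:Finidng unbounded path using at} (its easy direction, reading an RPN branch as an abstract-graph path) every reachable state has depth at most $D$; in particular no \lseq\ behaviour occurs. Boundedness of every $(\widehat{\N}_{el},M_a(v))$ gives, via Lemma~\ref{lem:every marking is covered by ab graph}, a uniform bound $B$ on $M_s(w)(p)$ over all reachable $s$, all $w\in V_s$, all $p\in P$ — so each thread marking ranges over a finite set. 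It remains to bound the number of threads, and this is where Lemma~\ref{lem:bounded num of children} enters: since each $(\widehat{\N}_{el},M_a(v))$ is bounded, the number of children of the root is bounded in every reachable state of $(\N,s[r,m_0])$; applying the same lemma at each thread (via Lemma~\ref{lem:Form a path in ab to a path in RPN}, which realises every $M_a(v)$ as a thread marking, then restarting the marked-RPN $(\N,s[v,M_a(v)])$) bounds the branching at every vertex by some $C$. A tree of depth $\le D$ and branching $\le C$ has at most $\frac{C^{D+1}-1}{C-1}$ vertices, so there is a uniform bound on $|V_s|$. With boundedly many vertices, boundedly many thread markings, and boundedly many edge labels (the finite set $\{W^+(t)\}_{t\in T_{ab}}$), there are only finitely many abstract states, so $Reach(\N,s_0)$ is finite.

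The main obstacle is the last step: bounding the branching uniformly \emph{at every vertex}, not just at the root. Lemma~\ref{lem:bounded num of children} is stated only for the root of a marked RPN $(\N,s[r,m_0])$, so I must argue that any internal thread $v$ of a reachable state, which was created with marking $\Omega(t)$ for some abstract transition $t$ with $v_t\in V_a$ and $M_a(v_t)=\Omega(t)$, behaves exactly like the root of $(\N,s[v_t,M_a(v_t)])$; then assertion~2 applied to $v_t$ together with Lemma~\ref{lem:bounded num of children} bounds $v$'s children. One must check that the preliminary normalisations made before Lemma~\ref{lem:bounded num of children} (namely $s_0\neq\emptyset$ and $W^-(t)>0$ for all $t\in T_{ab}$ — the latter being exactly the case excluded when a cycle is forced anyway) are in force here, which they are under our standing assumptions. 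Once the per-vertex branching bound $C$ is in hand the rest is the elementary tree-counting estimate above.
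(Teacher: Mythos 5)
Your proposal is correct and takes essentially the same route as the paper: the depth is bounded via the acyclicity of the abstract graph (Lemma~\ref{lem:Finidng unbounded path using at}), thread markings via Lemmas~\ref{lem:Form a path in ab to a path in RPN} and~\ref{lem:every marking is covered by ab graph}, branching via Lemma~\ref{lem:bounded num of children}, and the two contrapositive arguments when an assertion fails coincide with the paper's. The only difference is that you make explicit the step the paper leaves implicit, namely that the children bound, stated for the root, transfers to every internal thread (by locality of the subtree semantics, each thread created with marking $\Omega(t)=M_a(v_t)$ behaves as the root of $(\N,s[v,M_a(v_t)])$, and the normalisation $W^-(t)>0$ is indeed forced by Assertion~1), which is a welcome clarification rather than a deviation.
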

\begin{proof}
	$\bullet$ Assume that assertions 1 and 2 hold. Due to Assertion~1 and Lemma~\ref{lem:Finidng unbounded path using at} any reachable state has its depth bounded by some constant. Due to Assertion~2 and Lemmas~\ref{lem:every marking is covered by ab graph} and~\ref{lem:bounded num of children} each thread in any reachable state has a bounded number of children, and a bounded number of different reachable markings. Therefore $Reach(\N,s_0)$ is finite.\smallskip	
	
	\noindent $\bullet$ Assume that Assertion~1 does not hold. By Lemma~\ref{lem:Finidng unbounded path using at} there is a deep infinite sequence. Hence there is an infinite sequence of states with growing depth. Therefore $Reach(\N,s_0)$ is not finite.
	
	\noindent $\bullet$ Assume that Assertion~2 does not hold for some vertex $v$. By Lemma~\ref{lem:Form a path in ab to a path in RPN} there exists a state $s\in Reach(\N, s_{0})$ and a vertex $u\in V_s $ such that $ M_s(u)=M_a(v)$. By the definition of $\widehat{\N}_{el}$, for any $m\in Reach(\widehat{\N}_{el},M_s(v))$, there exists a firing sequence $(r,\sigma')$ in $\widehat{\N}$ such that $s\xrightarrow{(r,\sigma')}_{\widehat{\N}}s'$ with $M_{s'}(v)=m$. Therefore $Reach(\widehat{\N},s)\subseteq Reach(\widehat{\N},s_0)$. Due to Proposition~\ref{prop:equivreach}, $Reach(\widehat{\N},s_0)= Reach(\N,s_0)$.	
\end{proof}

\begin{theorem}
	\label{prop:constraind finitness in EXPSPACE}
	The finiteness problem of RPN is \EXPSPACE-complete.
\end{theorem}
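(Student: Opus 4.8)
The \EXPSPACE-hardness is already for free: it is inherited from the \EXPSPACE-hardness of the finiteness problem for Petri nets (Theorem~\ref{thm:Finiteness Bound For PN}), since a Petri net is a (rooted) RPN without abstract and cut transitions, for which the reachability set of abstract states coincides with the reachability set of markings. So the whole work is the \EXPSPACE\ membership. By Proposition~\ref{col:rooted} we may assume $s_0=s[r,m_0]$. We first dispose of the degenerate cases discussed before Proposition~\ref{prop:const finitness is equvalent to}: if $s_0=\emptyset$ the reachability set is a singleton, hence finite; if there is some $t\in T_{ab}$ with $W^-(t)={\bf 0}$, then firing $t$ repeatedly yields states with unboundedly many threads, hence $Reach(\N,s_0)$ is infinite. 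Both tests are trivially in polynomial time, so from now on we may assume $s_0\neq\emptyset$ and $W^-(t)>{\bf 0}$ for every $t\in T_{ab}$, which are exactly the hypotheses under which Proposition~\ref{prop:const finitness is equvalent to} applies.

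Under these assumptions, Proposition~\ref{prop:const finitness is equvalent to} tells us that $Reach(\N,s_0)$ is finite if and only if (1)~the abstract graph $G_{\N,s_0}=(V_a,E_a,M_a)$ contains no loop (equivalently, no cycle, since all vertices of $V_a$ other than $r$ are of the form $v_t$), and (2)~for every $v\in V_a$ the marked Petri net $(\widehat{\N}_{el},M_a(v))$ is bounded. The algorithm therefore proceeds as follows. It builds $\widehat{\N}_{el}$ in exponential space (Definition~\ref{def:N_el} together with the remark following Corollary~\ref{col:build_N_hat}); note that $\widehat{\N}_{el}$ has size \emph{linear} in $\N$, only its construction needs exponential space because one must detect the returning transitions. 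Then it builds the abstract graph $G_{\N,s_0}$ in exponential space by Lemma~\ref{lem:abstract graph in expspace} (this amounts, for each of the at most $|T_{ab}|+1$ candidate vertices $v$ and each $t\in T_{ab}$, to deciding a coverability query in $\widehat{\N}_{el}$, i.e. Rackoff's algorithm). Having $G_{\N,s_0}$ at hand, checking whether it has a cycle is a reachability computation in a graph with at most $|T_{ab}|+1$ vertices, performed within the space already used. Finally, for each vertex $v\in V_a$, the marking $M_a(v)$ is either $m_0$ or some $\Omega(t)$, hence of size polynomial in the input, and one invokes Theorem~\ref{thm:Finiteness Bound For PN} to decide whether $(\widehat{\N}_{el},M_a(v))$ is bounded, again in exponential space. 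The input is finite iff all these tests succeed; this is decided in \EXPSPACE\ (and if one prefers a nondeterministic formulation anywhere, $\NEXPSPACE=\EXPSPACE$ by Savitch's theorem).

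The one point that requires care — and is the only real obstacle — is the bookkeeping of the nested exponential-space subroutines, i.e. making sure the composition does not blow up the space. This works because all the objects that the outer procedure needs to keep are small: the abstract graph has only linearly many vertices and its vertex labels $M_a(v)$ are of polynomial size, and $\widehat{\N}_{el}$ itself has linear size. Thus the algorithm performs only polynomially many invocations of Rackoff-type procedures, each on an instance of polynomial size, and each such invocation can reuse the same exponential-space work tape; the overall space stays singly exponential. Combining the \EXPSPACE\ upper bound with the \EXPSPACE-hardness inherited from Petri nets yields that the finiteness problem of RPN is \EXPSPACE-complete.
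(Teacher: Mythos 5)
Your proposal is correct and follows essentially the same route as the paper: reduce to the characterization of Proposition~\ref{prop:const finitness is equvalent to}, build the abstract graph in exponential space via Lemma~\ref{lem:abstract graph in expspace}, check acyclicity, and decide boundedness of $(\widehat{\N}_{el},M_a(v))$ for each vertex with Rackoff's theorem, with hardness inherited from Petri nets. Your extra remarks on the degenerate cases and on the space bookkeeping of the nested subroutines are just a more explicit rendering of what the paper leaves implicit.
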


\begin{proof}
	The algorithm proceeds by checking Assertions~1 and~2 of Proposition\ref{prop:const finitness is equvalent to}.
	It builds in exponential space (by Lemma~\ref{lem:abstract graph in expspace}) the \at\ graph and
	checks whether there is no loop in $G_{\N,s_{0}}$.
	In the negative case, it checks in exponential space for any vertex $v\in V_a$, whether
	the marked Petri net $(\widehat{\N}_{el},M_{a}(v))$
	is bounded.	
\end{proof}

\section{Conclusion}
\label{sec:conclusion}


%
We have proven that RPN is a strict generalisation of both Petri nets
and context-free grammars without increasing the complexity of coverability, termination, boundedness and finiteness problems.
It remains several open problems about languages of RPN and decidability/complexity
of checking properties. Here is a partial list of open problems:
\begin{itemize}
\item  How to decide whether a word belongs
to a coverability or reachability language of a RP?
%
\item Since the quasi-order possesses an infinite antichain, but there exist short witnesses for coverability,
does there exist an effective finite representation
of the downward closure of the reachability set?
\item Does there exist a relevant fragment of LTL decidable for RPN?
\end{itemize}

  \subsection{Acknowledgment}
  We thank the reviewers very much for their deep, detailed and insightful reviews, which helped us a lot in order to simplify and clarify this paper.

\end{document}